\documentclass[11pt]{article}

\setlength{\oddsidemargin}{0.1in}
\setlength{\topmargin}{-0.5cm}
\setlength{\textwidth}{6in}
\setlength{\textheight}{8.4 in}

\usepackage{amsmath}
\usepackage{amsthm}
\usepackage{amssymb}
\usepackage{xcolor}
\usepackage{algorithm}
\usepackage{algorithmic}
\usepackage{comment}

\theoremstyle{plain}
\newtheorem{theorem}{Theorem}
\newtheorem{conjecture}[theorem]{Conjecture}
\newtheorem{lemma}[theorem]{Lemma}
\newtheorem{proposition}[theorem]{Proposition}
\newtheorem{corollary}[theorem]{Corollary}

\theoremstyle{definition}
\newtheorem{definition}[theorem]{Definition}
\newtheorem{example}[theorem]{Example}

\theoremstyle{remark}
\newtheorem{remark}[theorem]{Remark}

\newcommand{\tr}{\operatorname{tr}}
\newcommand{\uinvnorm}{|\kern-1pt|\kern-1pt|}

\newcommand{\poly}{\operatorname{poly}}

\newcommand{\supp}{\operatorname{supp}}

\newcommand{\sgn}{\operatorname{sgn}}
\newcommand{\wgt}{\operatorname{wgt}}


\newcommand{\R}{\mathbb{R}}

\newcommand{\C}{\mathbb{C}}

\newcommand{\E}{\mathbb{E}}
\newcommand{\fhats}{\hat{f}_\mathbf{s}}
\newcommand{\fhatt}{\hat{f}_\mathbf{t}}
\newcommand{\ghats}{\hat{g}_\mathbf{s}}

\newcommand{\chis}{\chi_\mathbf{s}}
\newcommand{\chit}{\chi_\mathbf{t}}

\def\1{\mathbb{I}}

\newcommand{\ket}[1]{| #1 \rangle}

\newcommand{\ip}[2]{\langle #1, #2 \rangle}
\newcommand{\vip}[2]{\langle #1|#2 \rangle}

\renewcommand{\Im}{\operatorname{Im}}

\newcommand{\be}{\begin{equation}}
\newcommand{\ee}{\end{equation}}
\newcommand{\bea}{\begin{eqnarray}}
\newcommand{\eea}{\end{eqnarray}}
\newcommand{\bes}{\begin{equation*}}
\newcommand{\ees}{\end{equation*}}
\newcommand{\beas}{\begin{eqnarray*}}
\newcommand{\eeas}{\end{eqnarray*}}



\begin{document}

\bibliographystyle{amsalpha}

\date{\today}

\title{Quantum boolean functions}

\author{Ashley Montanaro\footnote{Department of Computer Science, University of Bristol, Woodland Road, Bristol, BS8 1UB, UK; {\tt montanar@cs.bris.ac.uk}.} \ and Tobias J.\ Osborne\footnote{Department of Mathematics, Royal Holloway, University of London, Egham, TW20 0EX, UK; {\tt tobias.osborne@rhul.ac.uk}.}}

\maketitle

\begin{abstract}
In this paper we introduce the study of \emph{quantum boolean
functions}, which are unitary operators $f$ whose square is the
identity: $f^2 = \1$. We describe several generalisations of
well-known results in the theory of boolean functions, including
quantum property testing; a quantum version of the Goldreich-Levin
algorithm for finding the large Fourier coefficients of boolean
functions; and two quantum versions of a theorem of Friedgut, Kalai
and Naor on the Fourier spectra of boolean functions. In order to
obtain one of these generalisations, we prove a quantum extension of
the hypercontractive inequality of Bonami, Gross and Beckner.
\end{abstract}

\tableofcontents


\section{Introduction}

Boolean functions stand at the crossroads between
many areas of study, such as social science, combinatorics,
computational complexity, and statistical mechanics, to name but a
few (see, e.g., \cite{odonnellnotes} or \cite{dewolf:2008a} for an introduction). The theory
of boolean functions has reached a maturity of sorts, and the
foundational results of the subject are now well established.

The advent of quantum algorithms (see, e.g., \cite{nielsen:2000a})
has added a twist to the computational complexity theory landscape
of which boolean functions are an integral part, and there are many
reasons to expect that this development will have more than
superficial consequences. Indeed, quantum algorithmic techniques
have already had important consequences within the field of
classical complexity theory alone, leading to a simple proof of the
closure of {\sf PP} under intersection \cite{aaronson:2005b} as well
as proofs of new lower bounds for locally decodable codes
\cite{kerenidis:2004a}, amongst many others.

In this context it is natural to consider a generalisation of the
notion of a boolean function to the quantum domain. While there is
always a temptation to quantise an existing classical concept, there
are other motivations for considering a quantum analogue of the
theory of boolean functions. One reason is to develop lower bounds
for quantum circuits. Another reason is to understand the
propagation of correlations through multipartite quantum systems.
Finally, and perhaps most importantly, we hope that a mature theory
of quantum boolean functions will lead the way to a proof of a
quantum generalisation of the PCP theorem \cite{aharonov:2008a}. We
believe this should occur because the extant classical proofs (see,
eg., \cite{radhakrishnan:2006a, dinur:2007a}) draw heavily on
results from the property testing of boolean functions, which, as
we'll see, generalise naturally to the quantum domain.

In this paper we pursue one particular generalisation of a boolean
function to the quantum domain, namely a unitary operator $f$ which
squares to the identity. We provide several arguments for why our
definition is a natural generalisation of the classical definition,
not least of which are several quantum generalisations of well-known
classical results. These quantum generalisations often require new
proof techniques, and since they reduce in each case to the
classical results when $f$ is diagonal in the computational basis,
we sometimes obtain different proofs for the original classical
results.


\subsection{Summary of results}

The main results we obtain can be summarised as follows.

\begin{itemize}
\item {\bf Quantum property testing.} We give quantum tests that determine whether a unitary operator is a tensor product of Pauli operators, or far from any such tensor product; and similarly whether a unitary operator is a Pauli operator acting on only one qubit, or is far from any such operator. These are quantum generalisations of properties considered in the classical field of property testing of boolean functions. In particular, when applied to classical boolean functions, these tests have better parameters than the original classical tests.

\item {\bf Learning quantum boolean functions.} We develop a quantum analogue of the Goldreich-Levin algorithm, which is an important tool for approximately learning boolean functions \cite{goldreich:1989a,kushilevitz:1993a}. This algorithm allows the approximate learning of quantum dynamics, giving a natural counterpart to recent results of Aaronson on approximately learning quantum states \cite{aaronson:2007a}.

\item {\bf Hypercontractivity and a quantum FKN theorem.} The Friedgut-Kalai-Naor (FKN) theorem \cite{friedgut:2002a} states that boolean functions whose Fourier transform is concentrated on the first level approximately depend on a single variable. We prove a quantum analogue of this statement. In order to obtain this result, we state and prove a quantum generalisation of the hypercontractive inequality of Bonami-Gross-Beckner \cite{bonami:1970a,gross:1975a,beckner:1975a} for functions $\{0,1\}^n \rightarrow \R$. This generalisation may be of independent interest and has several corollaries. Our result is an alternative generalisation of this inequality to that recently proven by Ben-Aroya et al \cite{benaroya:2008a}.

\item {\bf Influences and progress towards a quantum KKL theorem.} The Kahn-Kalai-Linial (KKL) theorem \cite{kahn:1988a} states that every balanced boolean function must have a variable with high influence (qv.). Defining a suitable quantum generalisation of the concept of influence, we prove the generalised theorem in several special cases, and conjecture that it holds in full generality. We also prove a weaker variant (a quantum Poincar\'e inequality).
\end{itemize}
Our presentation is based on the lecture notes \cite{odonnellnotes},
which are an excellent introduction to the field of the analysis of
boolean functions.


\subsection{Related work}

This paper draws heavily on the classical field of the analysis of boolean functions, which for our purposes essentially began with the seminal paper of Kahn, Kalai and Linial \cite{kahn:1988a}, which proved that every balanced boolean function must have an influential variable (see Section \ref{sec:influence}). Since then, a substantial literature has developed, in which statements of interest about boolean functions are proven using mathematical techniques of increasing sophistication, and in particular Fourier analysis. Other important works in this area include a result of Bourgain concerning the Fourier spectrum of boolean functions \cite{bourgain:2002a}, and a result of Friedgut stating that boolean functions which are insensitive on average are close to depending on a small number of variables \cite{friedgut:1998a}.

Ideas relating to the analysis of boolean functions have recently proven fruitful in the study of quantum computation. Indeed, the result of Bernstein and Vazirani giving the first super-polynomial separation between quantum and classical computation \cite{bernstein:1997a} is that quantum computers can distinguish certain boolean functions more efficiently than classical computers can. More recent cases where a quantum advantage is found for classical tasks relating to boolean functions include the quantum Goldreich-Levin algorithm of Adcock and Cleve \cite{adcock:2002a}; the work of Buhrman et al \cite{buhrman:2003a} on quantum property testing of classical boolean functions; and the computational learning algorithms of Bshouty and Jackson \cite{bshouty:1999a}, and also Atici and Servedio \cite{atici:2007a}.

Fourier analysis of boolean functions has been used explicitly to obtain two recent results in quantum computation. The first is an exponential separation between quantum and classical one-way communication complexity proven by Gavinsky et al \cite{gavinsky:2007a}. This separation uses the results of Kahn, Kalai and Linial \cite{kahn:1988a} to lower bound the classical communication complexity of a particular partial function. The second result is a lower bound on the size of quantum random access codes obtained by Ben-Aroya et al \cite{benaroya:2008a}, in which the key technical ingredient is the proof of a matrix-valued extension of the hypercontractive inequality on which \cite{kahn:1988a} is based.


\section{Preliminaries}
In this section we set up our notation and describe the objects we'll
work with in the sequel. We will sometimes use several notations for the same
objects. We've made this decision for two reasons. The first is that
the notations natural in the study of boolean functions are
unnatural in quantum mechanics, and vice versa, and some fluidity
with the notation greatly simplifies the transition between these
two domains. Secondly, two notations means that practitioners in
classical complexity theory and quantum mechanics can (hopefully)
adapt to the tools of the other field.

In the classical domain we work with \emph{boolean functions}
of $n$ variables, which are simply functions $f:\{0, 1\}^n
\rightarrow \{0, 1\}$. We write elements of $\{0,1\}$ as regular
letters, eg.\ $y$, and we write elements of $\{0, 1\}^n$ --
\emph{strings} -- as boldface letters, eg., $\mathbf{x} \equiv
x_1x_2\cdots x_n$, $x_j \in \{0,1\}$, $j = 1, 2, \ldots, n$.
Similarly, we'll write strings in $\{0, 1, 2, 3\}^n$ as boldface
letters starting at $\mathbf{s}$.
It is often convenient to exploit the isomorphism between the set
$\{0, 1\}^n$ of all strings and the set $\mathcal{P}([n])$
of all subsets of $[n]\equiv \{1, 2, \ldots, n\}$ by letting
$\mathbf{x}$ define the subset $S = \{j\in S\,|\, x_j\not=0\}$. We
write subsets of $[n]$ as capital letters beginning with $S$.

There are several elementary functions on $\{0, 1\}^n$ and
$\{0, 1, 2, 3\}^n$ that will be useful
in the sequel. Firstly, we define the \emph{support} of a string
$\mathbf{x}\in\{0,1\}^n$, written $\supp(\mathbf{x})$, via
\begin{equation}
\supp(\mathbf{x}) \equiv \{j\,|\, x_j\not= 0\},
\end{equation}
and for $\mathbf{s}\in\{0,1,2,3\}^n$ via
\begin{equation}
\supp(\mathbf{s}) \equiv \{j\,|\, s_j\not= 0\}.
\end{equation}
We define the \emph{Hamming weight} $|\mathbf{x}|$ of a
string $\mathbf{x}\in\{0,1\}^n$ (respectively,
$\mathbf{s}\in\{0, 1, 2, 3\}^n$) via $|\mathbf{x}| \equiv
|\supp(\mathbf{x})|$ (respectively, $|\mathbf{s}| \equiv
|\supp(\mathbf{s})|$). The \emph{intersection} of two strings
$\mathbf{s}$ and $\mathbf{t}$, written
$\mathbf{s}\cap\mathbf{t}$, is the set $\{i:s_i\neq 0,t_i\neq 0\}$.

We now make the notational switch to
identifying $\{0,1\}$ with $\{+1, -1\}$ via $0 \equiv 1$ and $1 \equiv
-1$. This takes a little getting used to, but is standard in the
boolean function literature; it's often abundantly clear from the
context which notation is being used. Unless otherwise noted, we'll
use the $\{+1, -1\}$ notation.
We identify $\{0, 1\}$ with $\mathbb{Z}/2\mathbb{Z}$, with addition
written $x\oplus y$. We identify $\{+1, -1\}$ with the
multiplicative group of two elements, also isomorphic to
$\mathbb{Z}/2\mathbb{Z}$, and write products as $xy$. This second
identification still allows us to identify elements of the
multiplicative group of two elements with elements of $\mathbb{Z}$
and to use \emph{addition} defined as in $\mathbb{Z}$. Thus we say
that a boolean function is \emph{balanced} if $\sum_{\mathbf{x}} f(\mathbf{x}) =
0$ (here we've made the notational switch).

We denote by $\chi_S$ the \emph{linear function} on subset $S$, defined by
\begin{equation}
\chi_S \equiv \prod_{j\in S}x_j.
\end{equation}
Exploiting the connection between strings and subsets of $[n]$
allows us to write this as
\begin{equation}
\chi_S(\mathbf{x}) \equiv \chi_S(T) \equiv (-1)^{|S\cap T|},
\end{equation}
where $T$ is the set defined by $\mathbf{x}$.

In the quantum domain we work with the Hilbert space of $n$ qubits.
This is the Hilbert space $\mathcal{H} \equiv
(\mathbb{C}^2)^{\otimes n}$. We write elements of $\mathcal{H}$ as
\emph{kets} $|\psi\rangle$, and the inner product between two kets
$|\phi\rangle$ and $|\psi\rangle$ is written $\langle
\phi|\psi\rangle$. There is a distinguished basis for $\mathcal{H}$,
called the \emph{computational basis}, written $|\mathbf{x}\rangle$,
$\mathbf{x}\in \{0, 1\}^n$, with inner product $\langle
\mathbf{x}|\mathbf{y}\rangle =
\delta_{x_1,y_1}\delta_{x_2,y_2}\cdots \delta_{x_n,y_n}$.
We'll also refer to another Hilbert space, namely the Hilbert space
$\mathcal{B}(\mathcal{H})$ of (bounded) operators on $\mathcal{H}$.
The inner product on $\mathcal{B}(\mathcal{H})$ is given by the \emph{Hilbert-Schmidt}
inner product $\ip{M}{N} \equiv \frac{1}{2^n}\tr(M^\dag N)$, $M, N\in
\mathcal{B}(\mathcal{H})$.

We define the (normalised Schatten) $p$-norm of a $d$-dimensional
operator $f$ in terms of the singular values $\{s_j(f)\}$ of $f$
as
\begin{equation}
\|f\|_p \equiv \left( \frac{1}{d} \sum_{j=1}^d s_j(f)^p
\right)^{\frac1p}.
\end{equation}
If we define $|f| \equiv \sqrt{f^\dag f}$ (note the overload of
notation) we can write the $p$-norm as
\begin{equation}
\|f\|_p \equiv \left(\frac{1}{d} \tr(|f|^p)\right)^{\frac1p}.
\end{equation}

Note the useful equalities $\|f^p\|_q = \|f\|_{pq}^p$ and $\|f\|_q^p
= \|f^p\|_{q/p}$. One unfortunate side effect of this normalisation
is that $\|f\|_p$ is not a submultiplicative matrix norm (except at $p=\infty$).
However, we do still have H\"older's inequality \cite{bhatia:1997a}:
for $1/p + 1/q = 1$,
\begin{equation}
 |\ip{f}{g}| \le \|f\|_p \|g\|_q.
\end{equation}


\section{Quantum boolean functions}

The first question we attempt to answer is: what is the right
quantum generalisation of a classical boolean function? We would
expect the concept of a \emph{quantum boolean function} to at least
satisfy the following properties.
\begin{enumerate}
\item A quantum boolean function should be a unitary operator, so it can be implemented
on a quantum computer.
\item Every classical boolean function should give rise to a quantum boolean function in a natural way.
\item Concepts from classical analysis of boolean functions should have natural quantum analogues.
\end{enumerate}
Happily, we can achieve these requirements with the following
definition.
\begin{definition}\label{def:qbf} A quantum boolean function of $n$ qubits is a
unitary operator $f$ on $n$ qubits such that $f^2=\1$.
\end{definition}
We note that it is immediate that $f$ is Hermitian
(indeed, all $f$'s eigenvalues are $\pm 1$), and that every unitary
Hermitian operator is quantum boolean.
Turning to the task of demonstrating the second desired property,
there are at least two natural ways of implementing a classical
boolean function $f: \{0,1\}^n \rightarrow \{0,1\}$ on a quantum
computer. These are as the so-called \emph{bit oracle}
\cite{kashefi:2002a}:
\begin{equation}
|\mathbf{x}\rangle|{y}\rangle \mapsto
U_f|\mathbf{x}\rangle|{y}\rangle \equiv |\mathbf{x}\rangle|{y}\oplus
f(\mathbf{x})\rangle,
\end{equation}
and as what's known as the \emph{phase oracle}:
\begin{equation}
|\mathbf{x}\rangle \mapsto (-1)^{f(\mathbf{x})}|\mathbf{x}\rangle.
\end{equation}
Making the previously discussed notational switch allows us to write the second
action as
\begin{equation}
|\mathbf{x}\rangle \mapsto f(\mathbf{x})|\mathbf{x}\rangle.
\end{equation}
When $f$ acts as a bit oracle it defines a unitary operator $U_f$.
When $f$ acts as a phase oracle, we also obtain a unitary operator,
which we simply write as $f$. As a sanity check of
Definition~\ref{def:qbf} we have that these operators square to the
identity: $U_f^2 = \1$ and $f^2 = \1$, and so are examples of
quantum boolean functions. It turns out that these two actions are
almost equivalent: the phase oracle can be obtained from one use of
the bit oracle and a one-qubit ancilla, via
\be
f(\mathbf{x}) \ket{\mathbf{x}} \frac{1}{\sqrt{2}}(\ket{0}-\ket{1}) \equiv U_f \ket{\mathbf{x}} \frac{1}{\sqrt{2}}(\ket{0}-\ket{1}),
\ee
while the bit oracle can be similarly recovered, given access to a controlled-phase oracle
\be
\ket{\mathbf{x}}\ket{y} \mapsto (-1)^{y \cdot f(\mathbf{x})} \ket{\mathbf{x}}\ket{y},
\ee
where we introduce a one-qubit control register $\ket{y}$.
%

Having accepted Definition~\ref{def:qbf} for a quantum boolean
function we should at least demonstrate that there is something
quantum about it. This can be easily accomplished by noting that the
theory of quantum boolean functions is at least as general as that
of quantum error correction and quantum algorithms for decision
problems. Firstly, any quantum error correcting code is a subspace
$P$ of $\mathcal{H}$. Using this code we define a quantum boolean
function via $f = \1 - 2P$. Secondly, note that any quantum
algorithm for a decision problem naturally gives rise to a quantum
boolean function in the \emph{Heisenberg picture}: if $U$ is the
quantum circuit in question, and the answer (either $0$ or $1$, or
some superposition thereof) is placed in some output qubit $q$, then
measuring the observable $A = |0\rangle_q\langle0| -
|1\rangle_q\langle1|$ (with an implied action as the identity on the remaining qubits)
will read out the answer. However, this is
equivalent to measuring the observable $f = U^\dag A U$ on the
initial state. It is easy to verify that $f$ is quantum boolean.


\section{Examples of quantum boolean functions}
In this section we introduce several different examples of quantum
boolean functions. We have already met our first such examples,
namely quantisations of classical boolean functions $f(\mathbf{x})$.
Our next example quantum boolean functions are aimed at generalising
various classes of classical boolean functions.

First, we note the obvious fact that even in the single qubit case,
there are infinitely many quantum boolean
functions. By contrast, there are only four one-bit classical
boolean functions.

\begin{example}
For any real $\theta$, the matrix
\be \begin{pmatrix}\cos \theta & \sin \theta \\ \sin \theta & -\cos
\theta \end{pmatrix} \ee
is a quantum boolean function.
\end{example}

Single qubit quantum boolean functions can of course be combined to
form $n$ qubit quantum boolean functions, as follows.

\begin{definition}
Let $f$ be a quantum boolean function. Then we say that $f$ is
\emph{local} if it can be written as
\begin{equation}
f = U_1\otimes U_2 \otimes \cdots \otimes U_n,
\end{equation}
where $U_j$ is a $2\times 2$ matrix satisfying $U_j^2=
\1$, $\forall j \in [n]$.
\end{definition}

Local quantum boolean functions are the natural generalisation
of linear boolean functions (indeed, every linear boolean function is
a local quantum boolean function).
One might reasonably argue that local quantum boolean functions
aren't really quantum: after all, there exists a local rotation
which diagonalises such an operator and reduces it to a classical linear
boolean function. The next example illustrates that the generic
situation is probably very far from this.

\begin{example}
Let $P^2=P$ be a projector. Then $f = \1-2P$ is a quantum boolean
function. In particular, if $|\psi\rangle$ is an arbitrary quantum
state, then $f = \1-2|\psi\rangle\langle \psi|$ is a quantum boolean
function.
\end{example}

If we set $|\psi\rangle = \frac{1}{\sqrt2}(|\mathbf{0}\rangle +
|\mathbf{1}\rangle)$ then we see that there is no local rotation
$V_1\otimes V_2 \otimes \cdots \otimes V_n$ which diagonalises $f =
\1-2|\psi\rangle\langle \psi|$.


\subsection{New quantum boolean functions from old}

There are several ways to obtain new quantum boolean functions from
existing quantum boolean functions. We summarise these constructions
below.

\begin{lemma}
Let $f$ be a quantum boolean function and $U$ be a unitary operator.
Then $U^\dag f U$ is a quantum boolean function.
\end{lemma}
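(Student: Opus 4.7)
The statement is a short verification that the two defining properties of a quantum boolean function are preserved under unitary conjugation, so the proof plan is essentially a two-line calculation. My plan is to check the two requirements of Definition~\ref{def:qbf} in turn.

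First I would verify unitarity of $g := U^\dag f U$. Since $f$ is unitary and $U$ is unitary, compute
\[
g^\dag g = (U^\dag f U)^\dag (U^\dag f U) = U^\dag f^\dag U U^\dag f U = U^\dag f^\dag f U = U^\dag U = \1,
\]
and symmetrically $gg^\dag = \1$, so $g$ is unitary.

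Next I would verify the involution property $g^2 = \1$. Using $UU^\dag = \1$ and $f^2 = \1$,
\[
g^2 = (U^\dag f U)(U^\dag f U) = U^\dag f (UU^\dag) f U = U^\dag f^2 U = U^\dag U = \1.
\]
There is no obstacle here; the only thing to be careful about is inserting the cancellation $UU^\dag = \1$ in the middle, and noting that unitarity of $f$ is actually redundant once one knows $f^2 = \1$ and $f$ is Hermitian, but the cleanest route is simply to check both conditions directly as above.
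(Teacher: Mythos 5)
Your verification is correct and is exactly the immediate computation the paper has in mind (the paper states this lemma without proof, treating it as evident): conjugation preserves unitarity, and inserting $UU^\dag = \1$ gives $(U^\dag f U)^2 = U^\dag f^2 U = \1$. Your aside that Hermiticity plus $f^2=\1$ already implies unitarity is also consistent with the paper's remark following Definition~\ref{def:qbf}.
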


\begin{lemma}
Let $h$ be a Hermitian operator. Then $f=\sgn(h)$ is a quantum
boolean function, where
\begin{equation}
\sgn(x) = \begin{cases} 1, &\quad x>0 \\ -1, &\quad x \le 0,
\end{cases}
\end{equation}
and as usual the notation $f=g(h)$, for a Hermitian operator $h$, means the operator obtained by applying the function $g:\R \rightarrow \R$ to the eigenvalues of $h$.
\end{lemma}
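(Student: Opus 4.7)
The plan is to use the spectral theorem to reduce the claim to a trivial calculation on eigenvalues. Since $h$ is Hermitian, I would first write its spectral decomposition
\[
h = \sum_i \lambda_i P_i,
\]
where the $\lambda_i \in \R$ are the distinct eigenvalues of $h$ and the $P_i$ are the associated orthogonal spectral projectors, satisfying $P_i P_j = \delta_{ij} P_i$ and $\sum_i P_i = \1$. By the convention stated in the lemma, applying $g = \sgn$ to $h$ then gives
\[
f = \sgn(h) = \sum_i \sgn(\lambda_i)\, P_i,
\]
with each coefficient $\sgn(\lambda_i) \in \{+1,-1\}$.

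Next I would verify the two defining properties of a quantum boolean function. Hermiticity of $f$ is immediate, since $f$ is a real linear combination of Hermitian projectors. For the squaring property, orthogonality of the projectors gives
\[
f^2 = \sum_{i,j} \sgn(\lambda_i)\sgn(\lambda_j)\, P_i P_j = \sum_i \sgn(\lambda_i)^2\, P_i = \sum_i P_i = \1,
\]
using that $\sgn(\lambda_i)^2 = 1$ regardless of the sign convention at $0$. Finally, unitarity follows from Hermiticity together with $f^2 = \1$, since $f^\dagger f = f f = \1$. Thus $f$ satisfies Definition~\ref{def:qbf}.

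There is really no obstacle here: the whole content of the lemma is the observation that $\sgn$ maps $\R$ into $\{\pm 1\}$, so applying it to the eigenvalues of any Hermitian operator produces something that squares to the identity. The only mild subtlety worth flagging is the value of $\sgn$ at $0$, but since $(\pm 1)^2 = 1$ this has no effect on the argument, so the chosen convention $\sgn(0) = -1$ is harmless.
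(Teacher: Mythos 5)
Your proof is correct and is exactly the argument the paper has in mind (the paper states this lemma without proof, treating it as immediate from the spectral theorem): decompose $h$ spectrally, note $\sgn$ sends every eigenvalue to $\pm 1$, and conclude $f^2 = \1$ with Hermiticity and hence unitarity. Your remark that the convention $\sgn(0) = -1$ is immaterial since $(\pm 1)^2 = 1$ is a nice touch, and nothing is missing.
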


\begin{lemma}
\label{lem:anticommute}
Let $\alpha_j\in\mathbb{R}$, $j \in [m]$, satisfy $\sum_{j=1}^m
\alpha_j^2 = 1$ and $\{f_j\}$, $j \in [m]$, be a set of
\emph{anticommuting} quantum boolean functions, i.e.\
\begin{equation}
\{f_j,f_k\} \equiv f_jf_k+f_kf_j = 2\delta_{jk}\1.
\end{equation}
Then
\begin{equation}
f = \sum_{j=1}^m \alpha_j f_j
\end{equation}
is a quantum boolean function.
\end{lemma}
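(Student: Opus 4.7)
The plan is to verify the two defining properties of a quantum boolean function directly: Hermiticity (equivalently, unitarity) and squaring to the identity. Since Definition~\ref{def:qbf} requires $f$ to be a unitary operator with $f^2 = \1$, and since any Hermitian operator squaring to $\1$ is automatically unitary, it suffices to check that $f$ is Hermitian and that $f^2 = \1$.

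Hermiticity is immediate: the $\alpha_j$ are real and, as noted right after Definition~\ref{def:qbf}, each quantum boolean function $f_j$ is Hermitian, so $f^\dag = \sum_j \alpha_j f_j^\dag = \sum_j \alpha_j f_j = f$.

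The core of the argument is the computation of $f^2$. I would expand
\begin{equation*}
f^2 = \sum_{j,k=1}^m \alpha_j \alpha_k f_j f_k = \sum_{j=1}^m \alpha_j^2 f_j^2 + \sum_{j \neq k} \alpha_j \alpha_k f_j f_k,
\end{equation*}
and then handle the two pieces separately. For the diagonal part, each $f_j^2 = \1$, so the diagonal contribution equals $\left(\sum_j \alpha_j^2\right) \1 = \1$ by the normalisation hypothesis. For the off-diagonal part, I would pair the term indexed by $(j,k)$ with the one indexed by $(k,j)$ and invoke anticommutation $\{f_j, f_k\} = 0$ for $j \neq k$: the pair contributes $\alpha_j \alpha_k (f_j f_k + f_k f_j) = 0$, so the entire off-diagonal sum vanishes. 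Hence $f^2 = \1$.

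There is no real obstacle here; the only subtlety worth flagging is the bookkeeping in pairing off-diagonal terms, which relies on the $\alpha_j$ being scalars (so that $\alpha_j \alpha_k = \alpha_k \alpha_j$) and on the anticommutation applying to distinct indices only. Once $f$ is known to be Hermitian with $f^2 = \1$, the relation $f f^\dag = f^2 = \1$ gives unitarity, completing the verification that $f$ fits Definition~\ref{def:qbf}.
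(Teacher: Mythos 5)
Your proposal is correct and follows essentially the same route as the paper: expand $f^2$, use $f_j^2 = \1$ with $\sum_j \alpha_j^2 = 1$ for the diagonal terms, pair off-diagonal terms into anticommutators $\{f_j,f_k\}=0$, and note Hermiticity from a real-linear combination of Hermitian operators. The paper's proof is just a more compressed version of the same computation, so there is nothing to add.
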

\begin{proof}
Squaring $f$ gives
\begin{equation}
f^2 = \sum_{j=1}^m \alpha_j^2 f_j^2 + \sum_{j<k} \alpha_j\alpha_k
\{f_j,f_k\} = \1.
\end{equation}
That $f$ is Hermitian follows because it is a real-linear combination of Hermitian operators. 
\end{proof}


\section{Fourier analysis}
It is well-known that every function $f:\{0,1\}^n \rightarrow
\mathbb{R}$ can be expanded in terms of the characters of the group
$(\mathbb{Z}/2\mathbb{Z})^n$. These characters are given by
the set of linear functions $\chi_{S}(T) = (-1)^{|S\cap T|}$ (we are
identifying input strings $\mathbf{x}$ with the subset $T$). This
expansion is called the Fourier transform over
$(\mathbb{Z}/2\mathbb{Z})^n$.

The use of Fourier analysis in the study of boolean functions was
pioneered by Kahn, Kalai, and Linial, who were
responsible for the eponymous KKL theorem \cite{kahn:1988a}, and has facilitated many
of the core foundational results relating to boolean functions. We seek an analogous
expansion for quantum boolean functions.

Our quantum analogues of the characters of $\mathbb{Z}/2\mathbb{Z}$
will be the Pauli matrices $\{\sigma^0, \sigma^1,
\sigma^2, \sigma^3\}$\footnote{The Pauli matrices are often written
as $\sigma^0 \equiv \1$, $\sigma^1 \equiv \sigma^x$,
$\sigma^2 \equiv \sigma^y$, and $\sigma^3 \equiv \sigma^z$.}. These are defined as
\begin{equation}
\sigma^{0} = \begin{pmatrix}1 & 0 \\ 0 & 1\end{pmatrix}, \;
\sigma^{1} = \begin{pmatrix}0 & 1 \\ 1 & 0\end{pmatrix}, \;
\sigma^{2} = \begin{pmatrix}0 & -i \\ i & 0\end{pmatrix}, \;
\text{and}\; \sigma^{3} = \begin{pmatrix}1 & 0 \\ 0 &
-1\end{pmatrix}.
\end{equation}
It is clear that the Pauli operators are quantum boolean functions.
A tensor product of Paulis (also known as a \emph{stabilizer operator}) is
written as $\sigma^{\mathbf{s}}\equiv\sigma^{s_1}\otimes\sigma^{s_2}\otimes
\cdots \otimes \sigma^{s_n}$, where $s_j\in \{0, 1, 2, 3\}$. We use the
notation $\sigma^j_i$ for the operator which acts as $\sigma^j$ at the $i$'th
position, and trivially elsewhere.

Where convenient we'll use one
of two other different notations to refer to the operators
$\sigma^{\mathbf{s}}$, namely as $\chi_{\mathbf{s}}$ and as
$|\mathbf{s}\rangle$. We use the first of these alternate notations
to bring out the parallels between the classical theory of boolean functions and
the quantum theory, and the second to emphasise the Hilbert space
structure of $\mathcal{B}(\mathcal{H})$. We write the inner products
in each case as
\begin{equation}
\langle \mathbf{s}|\mathbf{t}\rangle \equiv \frac{1}{2^n}
\tr((\sigma^{\mathbf{s}})^\dag\sigma^{\mathbf{t}}) \equiv
\ip{\chi_{\mathbf{s}}}{\chi_{\mathbf{t}}}.
\end{equation}

The set of stabilizer operators is orthonormal with respect to the Hilbert-Schmidt inner product, and thus forms a basis for the vector space $\mathcal{B}(\mathcal{H})$. So we can express any (bounded) operator $f$ on $n$ qubits in terms of stabilizer operators, with the explicit expansion
\begin{equation}
f = \sum_{\mathbf{s}\in\{0, 1, 2, 3\}^n}
\fhats \chi_{\mathbf{s}},
\end{equation}
where $\fhats = \ip{\chi_{\mathbf{s}}}{f}$.  In an
abuse of terminology, we call the set $\{\fhats\}$
indexed by $\mathbf{s}\in\{0, 1, 2, 3\}^n$ the
\emph{Fourier coefficients} of $f$. Note that, if
$f$ is Hermitian, these coefficients are all real.
This expansion is well-known in quantum information theory
and, for example, was recently used by Kempe et al \cite{kempe:2008a}
to give upper bounds on fault-tolerance thresholds.

We extend the definition of support to operators via
\begin{equation}
\supp(f) \equiv \bigcup_{\mathbf{s}\,|\, \fhats\neq 0}
\supp(\mathbf{s}).
\end{equation}

Similarly, we define the \emph{weight} $\wgt(M)$ of an operator $M$
via $\wgt(M) \equiv |\supp(M)|$.

\begin{definition}
Let $f$ be a quantum boolean function. If $|\supp(f)| = k$ then
we say that $f$ is a \emph{$k$-junta}. If $k=1$ then we say that $f$
is a \emph{dictator}. If $k=0$ then we say that $f$ is \emph{constant}.
\end{definition}

We note that while, classically, there is a distinction between dictators
(functions $f:\{0,1\}^n \rightarrow \{0,1\}$ such that $f(\mathbf{x}) = x_i$
for some $i$) and so-called {\em anti-dictators} (functions where $f(\mathbf{x}) = 1-x_i$
for some $i$), there is no such distinction in our terminology here.

Our definition of $\{\fhats \}$ as the Fourier
coefficients for an operator $f$ is no accident. Indeed, it turns
out that when one expands a \emph{boolean function} $f$ (represented
as a phase oracle) in terms of $\chi_{\mathbf{s}}$ then we recover
the classical Fourier transform of $f$.
\begin{proposition}
Let $f$ be a boolean function $f:\{0,1\}^n \rightarrow \{1,-1\}$. Then if $f$ acts as
$f|\mathbf{x}\rangle = f(\mathbf{x})|\mathbf{x}\rangle$, and if
\begin{equation}
f = \sum_{\mathbf{s}} \fhats \chi_{\mathbf{s}},
\end{equation}
then the set $\{\fhats\}$ are given by
\begin{equation}
\fhats =
\begin{cases} 0, &\quad \mathbf{s}\not\in\{0,3\}^n,\\
\frac{1}{2^n}\sum_{T\subset [n]}(-1)^{|S\cap T|}f(T),&\quad
\mathbf{s}\in\{0,3\}^n,\end{cases}
\end{equation}
where $S = \supp(\mathbf{s})$ and $T$ is the support of the string
$\mathbf{t} = t_1t_2\cdots t_n$, where $t_j = 0$ if $j\not\in T$ and
$t_j = 3$ if $j\in T$.
\end{proposition}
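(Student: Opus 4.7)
The plan is to compute $\fhats = \ip{\chis}{f} = \frac{1}{2^n}\tr((\sigma^{\mathbf{s}})^\dag f)$ directly, exploiting the crucial fact that the phase oracle $f$ is diagonal in the computational basis. Because $f$ is diagonal with eigenvalues $f(\mathbf{x})$, the trace of $(\sigma^{\mathbf{s}})^\dag f$ picks up only the diagonal entries of $\sigma^{\mathbf{s}}$ (using that $\sigma^{\mathbf{s}}$ is Hermitian), so
\begin{equation*}
\fhats = \frac{1}{2^n}\sum_{\mathbf{x} \in \{0,1\}^n} f(\mathbf{x})\, \langle \mathbf{x}|\sigma^{\mathbf{s}}|\mathbf{x}\rangle.
\end{equation*}
Next I would factor the matrix element across the tensor product as $\langle \mathbf{x}|\sigma^{\mathbf{s}}|\mathbf{x}\rangle = \prod_{j=1}^n \langle x_j|\sigma^{s_j}|x_j\rangle$.

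The key observation is that $\sigma^1$ and $\sigma^2$ have zero diagonal, so $\langle x_j|\sigma^{s_j}|x_j\rangle = 0$ whenever $s_j \in \{1,2\}$. This forces $\fhats = 0$ unless $\mathbf{s} \in \{0,3\}^n$, which is the first case of the proposition. For the remaining case, I use that $\langle x_j|\sigma^0|x_j\rangle = 1$ and $\langle x_j|\sigma^3|x_j\rangle = (-1)^{x_j}$ in the $\{0,1\}$ convention for $x_j$. Therefore, for $\mathbf{s} \in \{0,3\}^n$ with $S = \supp(\mathbf{s})$,
\begin{equation*}
\langle \mathbf{x}|\sigma^{\mathbf{s}}|\mathbf{x}\rangle = \prod_{j \in S}(-1)^{x_j} = (-1)^{|S \cap T|},
\end{equation*}
where $T$ is the subset of $[n]$ corresponding to $\mathbf{x}$. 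Substituting this into the sum yields the stated formula $\fhats = \frac{1}{2^n}\sum_{T \subset [n]}(-1)^{|S \cap T|} f(T)$.

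The main obstacle here is not mathematical depth but notational: one must carefully untangle the dual conventions set up in the preliminaries (the $\{0,1\}$ versus $\{+1,-1\}$ encoding of bits, the identification of strings with subsets, and the correspondence between $\mathbf{s} \in \{0,3\}^n$ and a subset $S$ of $[n]$). Once those identifications are made explicit, the proof reduces to the one-line observation that only $\sigma^0$ and $\sigma^3$ contribute to diagonal matrix elements, and that the $\sigma^3$ contribution reproduces exactly the classical character $\chi_S(T) = (-1)^{|S \cap T|}$.
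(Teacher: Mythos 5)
Your proposal is correct and follows essentially the same route as the paper: compute $\fhats = \frac{1}{2^n}\tr(\chi_{\mathbf{s}} f)$ using diagonality of the phase oracle, note that the vanishing diagonals of $\sigma^1$ and $\sigma^2$ kill all $\mathbf{s}\notin\{0,3\}^n$, and evaluate the surviving $\sigma^3$ contributions as $(-1)^{|S\cap T|}$. Your write-up merely makes explicit the one step the paper leaves as ``immediate,'' which is fine.
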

\begin{proof}
Because $f(\mathbf{x})$ is diagonal in the computational basis we
immediately learn that $\ip{\chi_{\mathbf{s}}}{f} = 0$ for any
$\mathbf{s}\not\in \{0,3\}^n$. When
$\mathbf{s}\in\{0,3\}^n$ we have that
\begin{equation}
\begin{split}
\fhats &= \ip{\chi_{\mathbf{s}}}{f}
=
\frac{1}{2^n}\tr\left(f\prod_{j\in\supp(\mathbf{s})}\sigma^3_{j}\right)
\\
&= \frac{1}{2^n}\sum_{\mathbf{x}}
\left(\prod_{j\in\supp(\mathbf{s})}(-1)^{x_j}\right)f(\mathbf{x})
= \frac{1}{2^n}\sum_{T\subset [n]} (-1)^{|S\cap T|}f(T).
\end{split}
\end{equation}
\end{proof}

We also immediately have the quantum analogues of Plancherel's theorem
and Parseval's equality.

\begin{proposition}
Let $f$ and $g$ be operators on $n$ qubits. Then $\ip{f}{g} =
\sum_{\mathbf{s}} \fhats^* \ghats$. Moreover, $\|f\|_2^2 = \sum_{\mathbf{s}}
|\fhats|^2$. Thus, if $f$ is quantum boolean,
$\sum_{\mathbf{s}} \fhats^2 = 1$.
\end{proposition}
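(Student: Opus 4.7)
The plan is to exploit the orthonormality of the stabilizer basis $\{\chi_{\mathbf{s}}\}$ with respect to the Hilbert-Schmidt inner product, which has already been asserted in the paragraph preceding the Fourier expansion. Concretely, first I would write the expansions $f = \sum_{\mathbf{s}} \hat{f}_{\mathbf{s}} \chi_{\mathbf{s}}$ and $g = \sum_{\mathbf{t}} \hat{g}_{\mathbf{t}} \chi_{\mathbf{t}}$, substitute into $\ip{f}{g} = \frac{1}{2^n}\tr(f^\dagger g)$, and pull the (conjugate-)linear inner product through the sums. The cross-terms $\ip{\chi_{\mathbf{s}}}{\chi_{\mathbf{t}}}$ collapse via $\langle \mathbf{s}|\mathbf{t}\rangle = \delta_{\mathbf{s},\mathbf{t}}$, leaving $\ip{f}{g} = \sum_{\mathbf{s}} \hat{f}_{\mathbf{s}}^{*}\hat{g}_{\mathbf{s}}$, as required.

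For Parseval, I would simply specialise to $g = f$, which gives $\|f\|_2^2 = \ip{f}{f} = \sum_{\mathbf{s}} |\hat{f}_{\mathbf{s}}|^2$. For the final assertion about a quantum boolean function, I would use the definition $f^2 = \1$ together with the fact (noted immediately after Definition~\ref{def:qbf}) that $f$ is Hermitian; then
\begin{equation*}
\|f\|_2^2 \;=\; \frac{1}{2^n}\tr(f^\dagger f) \;=\; \frac{1}{2^n}\tr(f^2) \;=\; \frac{1}{2^n}\tr(\1) \;=\; 1,
\end{equation*}
and since $\hat{f}_{\mathbf{s}} \in \R$ for Hermitian $f$ (already observed in the text), the Parseval identity specialises to $\sum_{\mathbf{s}} \hat{f}_{\mathbf{s}}^2 = 1$.

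There is no real obstacle here; the proposition is a direct consequence of orthonormality of the Pauli basis and the definition of a quantum boolean function. The only small care needed is bookkeeping of complex conjugates, which is why I would introduce two dummy indices $\mathbf{s}, \mathbf{t}$ and reduce via the Kronecker delta rather than trying to index both sums by $\mathbf{s}$ from the outset.
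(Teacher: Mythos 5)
Your proof is correct and is exactly the argument the paper intends: the proposition is stated there as an immediate consequence of the orthonormality of the stabilizer basis, with the quantum boolean case following from $f$ being Hermitian with $f^2 = \1$, just as you argue. No discrepancies.
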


It is often convenient to decompose the Fourier expansion of an operator into different levels. An arbitrary $n$-qubit operator $f$ can be expanded as
\be
f = \sum_{\mathbf{s}} \fhats \chis = \sum_{k=0}^n f^{=k},
\ee
where
\be f^{=k} \equiv \sum_{\mathbf{s},|\mathbf{s}|=k} \fhats \chis. \ee
(One can define $f^{<k}$, etc., similarly.) The {\em weight} of $f$ at level $k$ is then defined as $\|f^{=k}\|_2^2$. A natural measure of the complexity of $f$ is its {\em degree}, which is defined as
\be \mathrm{deg}(f) \equiv \max_{\mathbf{s},\fhats\neq 0}
|\mathbf{s}|. \ee
We pause to note an important difference between quantum and classical boolean
functions. In the classical case, it is easy to show that every non-zero Fourier coefficient of a
boolean function on $n$ bits is at least $2^{1-n}$ in absolute value. However, this does not
hold for quantum boolean functions. Consider the operator
\be \label{eq:smallcoeff}
f = \epsilon\,\sigma^1 \otimes \sigma^1 + \sqrt{1-\epsilon^2}\,\sigma^2 \otimes \1. \ee
By Lemma \ref{lem:anticommute}, this is a quantum boolean function for any $0 \le \epsilon \le 1$.
Taking $\epsilon \rightarrow 0$, we see that the coefficient of $\sigma^1 \otimes \sigma^1$
may be arbitrarily small.


\section{Testing quantum boolean functions}

The field of classical {\em property testing} solves problems of the
following kind. Given access to a boolean function $f$ that is promised
to either have some property, or to be ``far'' from having some property,
determine which is the case, using a small number of queries to $f$.
Property testers are an important component of many results in classical
computer science, e.g.\ \cite{hastad:2001a,dinur:2007a}; see the review
\cite{fischer:2001a} for an introduction to the area.

In this section we describe property testing for {\em quantum}
boolean functions: we give quantum algorithms which
determine whether a unitary operator, implemented as an
oracle, has the property of being, variously, a stabilizer operator,
or both a stabilizer operator and a dictator.
These tests differ substantially from their classical counterparts
and typically require fewer queries. However, as with their
classical equivalents, we use Fourier analysis to bound their
probabilities of success.

We note that Buhrman et al \cite{buhrman:2003a} have already shown that
quantum computers can obtain an advantage over classical computers for the
property testing of {\em classical} boolean functions.


\subsection{Closeness}

The tests that we define will, informally, output either that a
unitary operator has some property, or is
``far'' from any operator with that property. In order to define
the concept of property testing of quantum boolean functions, we
thus need to define what it means for two operators to be close.

\begin{definition}
Let $f$ and $g$ be two operators. Then we say that
$f$ and $g$ are $\epsilon$-close if $\|f-g\|_2^2 \le 4
\epsilon$.
\end{definition}
In quantum theory, it is often natural to use the infinity, or
\emph{sup}, norm to measure closeness of operators (i.e., the
magnitude of the largest eigenvalue). However, the 2-norm seems more
intuitive when dealing with boolean functions; for example, if we
produce a pair of quantum boolean functions $f$, $g$ from {\em any}
classical boolean functions that differ at any position, then
$\|f-g\|_\infty = 2$. Intuitively, the 2-norm tells us how the
function behaves \emph{on average}, and the infinity norm tells us
about the \emph{worst case} behaviour. There is also the following
relationship to unitary operator discrimination.

\begin{proposition}
Given a unitary operator $f$ promised to be one of two unitary
operators $f_1$, $f_2$, there is a procedure that determines whether
$f=f_1$ or $f=f_2$ with one use of $f$ and success probability
\begin{equation}
\frac{1}{2} + \frac{1}{2}\sqrt{1-|\ip{f_1}{f_2}|^2}.
\end{equation}
\end{proposition}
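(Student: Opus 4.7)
The plan is to reduce the unitary discrimination problem to a pure-state discrimination problem via the Choi--Jamio{\l}kowski trick, then invoke the Helstrom bound. Concretely, I would prepare the maximally entangled state
\begin{equation*}
|\Phi\rangle \equiv \frac{1}{\sqrt{2^n}} \sum_{\mathbf{x} \in \{0,1\}^n} |\mathbf{x}\rangle|\mathbf{x}\rangle
\end{equation*}
on a $2n$-qubit register, apply the oracle $f$ to the first half (with the identity on the second half), and then perform an optimal two-outcome measurement distinguishing the two possible output states.

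Next I would compute the overlap of the two candidate output states. Setting $|\psi_i\rangle \equiv (f_i \otimes \1)|\Phi\rangle$, a direct calculation using $(A \otimes \1)|\Phi\rangle = (\1 \otimes A^T)|\Phi\rangle$ (or simply expanding in the computational basis) gives
\begin{equation*}
\langle \psi_1 | \psi_2 \rangle = \frac{1}{2^n} \tr(f_1^\dagger f_2) = \ip{f_1}{f_2}.
\end{equation*}
Thus determining which of $f_1, f_2$ was applied is equivalent to discriminating two known pure states with inner product exactly $\ip{f_1}{f_2}$.

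For discrimination of two pure states $|\psi_1\rangle, |\psi_2\rangle$ with $|\langle \psi_1|\psi_2\rangle| = \alpha$, the standard Helstrom bound gives optimal success probability $\tfrac{1}{2} + \tfrac{1}{2}\sqrt{1-\alpha^2}$; this can be attained by diagonalising the Helstrom operator $\tfrac{1}{2}(|\psi_1\rangle\langle\psi_1| - |\psi_2\rangle\langle\psi_2|)$ in the two-dimensional span of $|\psi_1\rangle, |\psi_2\rangle$ and measuring in the associated $\pm$ eigenbasis. Plugging in $\alpha = |\ip{f_1}{f_2}|$ yields exactly the claimed success probability, and the whole procedure invokes $f$ only once.

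The only mildly delicate point is the overlap calculation, which depends on the normalisation convention $\ip{M}{N} = \tfrac{1}{2^n}\tr(M^\dagger N)$ used in this paper; with this convention the factor of $2^n$ from normalising $|\Phi\rangle$ matches exactly. Beyond that the argument is routine: the ancilla-assisted scheme is optimal here (no adaptive strategies help with a single query), so no additional cleverness is needed.
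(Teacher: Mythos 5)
Your proposal is correct and follows essentially the same route as the paper: apply the unknown unitary to half of a maximally entangled state, compute $\langle \psi_1|\psi_2\rangle = \ip{f_1}{f_2}$ using the normalised Hilbert--Schmidt inner product, and invoke the Holevo--Helstrom bound for discriminating the two resulting pure states. The only cosmetic difference is that the paper states the Helstrom bound with general prior $p$ and then specialises, whereas you work directly with equal priors; the resulting success probability is identical.
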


\begin{proof}
The proof rests on the fundamental result of Holevo and Helstrom
\cite{holevo:1973, helstrom:1976} which says that the exact
minimum probability of error that can be achieved when
discriminating between two pure states $\ket{\psi_1}$ and $\ket{\psi_2}$ with a
priori probabilities $p$ and $1-p$ is given by
\begin{equation}
\mathbb{P}[\text{test succeeds}] = \frac12+\frac12\sqrt{1-4p(1-p)|\vip{\psi_1}{\psi_2}|^2}.
\end{equation}

We now apply $f_1$ and $f_2$ to halves of two maximally entangled
states $|\Phi\rangle \equiv \frac{1}{2^{n/2}}\sum_{\mathbf{x}}
|\mathbf{x}\rangle|\mathbf{x}\rangle$. This produces the states
\begin{equation}
\begin{split}
|f_1\rangle \equiv f_1\otimes \1 |\Phi\rangle, \quad
\text{and}\quad |f_2\rangle \equiv f_2\otimes \1
|\Phi\rangle.
\end{split}
\end{equation}
The overlap between these two states can be calculated as follows:
\begin{equation}
\begin{split}
\langle f_1|f_2\rangle &= \frac{1}{2^n} \sum_{\mathbf{x},\mathbf{y}} \langle\mathbf{y}|\langle\mathbf{y}| (f_1^\dag f_2)\otimes \1 |\mathbf{x}\rangle|\mathbf{x}\rangle\\
&= \frac{1}{2^n} \sum_{\mathbf{x},\mathbf{y}} \langle\mathbf{y}|f_1^\dag f_2|\mathbf{x}\rangle \langle\mathbf{y}|\mathbf{x}\rangle\\
&=\frac{1}{2^n} \sum_{\mathbf{x}} \langle\mathbf{x}|f_1^\dag
f_2|\mathbf{x}\rangle = \frac{1}{2^n}\tr(f_1^\dag f_2) = \ip{f_1}{f_2}.
\end{split}
\end{equation}

The lemma follows when we apply the Holevo-Helstrom result to $\ket{f_1}$
and $\ket{f_2}$ and minimise over $p$.
\end{proof}


\subsection{The quantum stabilizer test}
In this subsection we describe a quantum test, the \emph{quantum
stabilizer test}, which decides, using only two queries, whether a
unitary operator $f$ is either $\epsilon$-close to a stabilizer operator $\chi_{\mathbf{s}}$
up to a phase, or is far from any such operator.
We also describe a test which is conjectured to
decide whether a unitary operator is $\epsilon$-close to
local or not.

The idea behind our tests is very simple: suppose $f$ is a
unitary operator, and we want to work out if it is a stabilizer,
i.e.\ if $f = \chi_{\mathbf{s}}$ for some $\mathbf{s}$. One way to do this is to apply $f$
to the halves of $n$ maximally entangled states resulting in a
quantum state $f\otimes\mathbb{I}|\Phi\rangle$. If $f$ is local
then the result will just be a tensor product of $n$ (possibly
rotated) maximally entangled states, and if $f$ is a stabilizer then
it should be an $n$-fold product of one of four possible states. If
not, then there will be entanglement between the $n$ subsystems. The
way to test this hypothesis is to create another identical state
$f\otimes \mathbb{I}|\Phi\rangle$ by again applying $f$ to another
set of $n$ maximally entangled states and separately apply an
equality test to each of the $n$ subsystems which are meant to be
disentangled from each other.

\begin{definition}
Let $f$ be a unitary operator on $n$ qubits. The quantum stabilizer and
locality tests proceed as follows.
\begin{enumerate}
\item Prepare $4n$ quantum registers in the state
\begin{equation}
|\Phi\rangle_{\mathbf{A}{\mathbf{A}}'}|\Phi\rangle_{\mathbf{B}{\mathbf{B}}'}
\equiv \frac{1}{2^{n}}\sum_{\mathbf{x}, \mathbf{y}}
|\mathbf{x}\rangle_{\mathbf{A}}|\mathbf{x}\rangle_{{\mathbf{A}}'}|\mathbf{y}\rangle_{\mathbf{B}}|\mathbf{y}\rangle_{{\mathbf{B}'}}
= |\phi\rangle^{\otimes
n}_{\mathbf{A}{\mathbf{A}'}}\otimes|\phi\rangle_{\mathbf{B}{\mathbf{B}'}}^{\otimes
n},
\end{equation}
where $|\phi\rangle\equiv
\frac{1}{\sqrt{2}}\sum_{x\in\{\pm1\}} |xx\rangle$, and $\mathbf{A} =
A_1A_2\cdots A_n$, ${\mathbf{A}'} =
{A_1'}{A_2'}\cdots {A_n'}$, $\mathbf{B} =
B_1B_2\cdots B_n$, and ${\mathbf{B}'} =
{B_1'}{B_2'}\cdots {B_n'}$.
\item Apply $f$ to $\mathbf{A}$ and once more to $\mathbf{B}$ to
give
\begin{equation}
|f\rangle|f\rangle = f_{\mathbf{A}}\otimes
\1_{{\mathbf{A}'}}\otimes f_{\mathbf{B}}\otimes
\1_{{\mathbf{B}'}}|\Phi\rangle|\Phi\rangle.
\end{equation}
\item To test if $f$ is a stabilizer measure the \emph{equality} observable\footnote{This is not quite the same as measuring both subsystems and checking if the result is equal, as that would destroy coherent superpositions like $1/\sqrt{2}(|00\rangle + |11\rangle)$ which would otherwise be left undisturbed by measurement of {\sc eq}.}:
\begin{equation}
\text{{\sc eq}} = \left(\sum_{s=0}^3
|s\rangle_{A{A'}}\langle s|\otimes |s\rangle_{B{B'}}
\langle s|\right)^{\otimes n},
\end{equation}
where
\begin{equation}
|s\rangle \equiv \chi_s\otimes \1|\phi\rangle, \quad
s\in\{0, 1, 2, 3\},
\end{equation}
giving
\begin{equation}
\mathbb{P}[\text{test accepts}] = \langle f| \langle f|\text{{\sc
eq}}|f \rangle |f\rangle.
\end{equation}
\item To test locality measure the \emph{swap observable}
\begin{equation}
\text{{\sc sw}} = \frac{1}{2^n}\left(\1_{A{A'}:B{B'}}+\text{\sc
swap}_{A{A'}:B{B'}}\right)^{\otimes n},
\end{equation}
where $\text{\sc swap}_{X:Y}$ is the operator that swaps the
two subsystems $X$ and $Y$. This gives
\begin{equation}
\mathbb{P}[\text{test accepts}]  = \langle f| \langle
f|\text{{\sc sw}}|f \rangle |f\rangle.
\end{equation}
\end{enumerate}
\end{definition}

We now prove the following
\begin{proposition}
Suppose that a unitary operator $f$ passes the quantum
stabilizer test with probability $1-\epsilon$, where $\epsilon<1/2$. Then $f$ is
$\epsilon$-close to an operator $e^{i\phi}\chi_{\mathbf{s}}$,
$\mathbf{s}\in\{0, 1, 2, 3\}^n$, for some phase $\phi$. If $f$ is a stabilizer operator then it passes the stabilizer test with probability 1.
\end{proposition}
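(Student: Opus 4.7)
The plan is to diagonalize both the state $|f\rangle$ and the $\text{{\sc eq}}$ observable in the orthonormal basis $|\mathbf{s}\rangle \equiv \chi_{\mathbf{s}}\otimes\1\,|\Phi\rangle$ of $\mathcal{H}\otimes\mathcal{H}$. Orthonormality is the identity $\langle\mathbf{s}|\mathbf{t}\rangle = \frac{1}{2^n}\tr(\chi_{\mathbf{s}}^\dagger\chi_{\mathbf{t}}) = \delta_{\mathbf{s},\mathbf{t}}$, which is just the standard ``vectorization'' property of $|\Phi\rangle$ combined with the orthonormality of the Pauli basis. The Fourier expansion $f = \sum_{\mathbf{s}}\fhats\chi_{\mathbf{s}}$ then gives $|f\rangle = \sum_{\mathbf{s}}\fhats|\mathbf{s}\rangle$, and, since the $\text{{\sc eq}}$ measurement on each qubit pair projects onto the span of the four Bell states and then compares them across the two halves, its $4n$-qubit form reads $\text{{\sc eq}} = \sum_{\mathbf{s}} |\mathbf{s}\rangle\langle\mathbf{s}|_{\mathbf{A}\mathbf{A}'}\otimes|\mathbf{s}\rangle\langle\mathbf{s}|_{\mathbf{B}\mathbf{B}'}$.

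Plugging $|f\rangle|f\rangle = \sum_{\mathbf{s},\mathbf{t}}\fhats\fhatt|\mathbf{s}\rangle_{\mathbf{A}\mathbf{A}'}|\mathbf{t}\rangle_{\mathbf{B}\mathbf{B}'}$ into $\text{{\sc eq}}$ immediately yields the clean closed form $\mathbb{P}[\text{accept}] = \sum_{\mathbf{s}} |\fhats|^4$. The converse half of the proposition follows at once: if $f = e^{i\phi}\chi_{\mathbf{s}}$ then exactly one Fourier coefficient has modulus $1$ and the rest vanish, so $\mathbb{P}[\text{accept}] = 1$. For the forward direction, write $p_{\mathbf{s}} \equiv |\fhats|^2$, which by Parseval satisfies $\sum_{\mathbf{s}} p_{\mathbf{s}} = \|f\|_2^2 = 1$. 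The hypothesis gives $\sum_{\mathbf{s}} p_{\mathbf{s}}^2 \ge 1-\epsilon$, and the elementary bound $\sum_{\mathbf{s}} p_{\mathbf{s}}^2 \le (\max_{\mathbf{s}} p_{\mathbf{s}})\sum_{\mathbf{s}} p_{\mathbf{s}} = \max_{\mathbf{s}} p_{\mathbf{s}}$ then produces an index $\mathbf{s}^*$ with $|\fhats[\mathbf{s}^*]|^2 \ge 1-\epsilon$.

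To convert this concentration into $2$-norm closeness, choose $\phi$ so that $\fhats[\mathbf{s}^*] = e^{i\phi}|\fhats[\mathbf{s}^*]|$; Parseval again gives
\begin{equation*}
\|f - e^{i\phi}\chi_{\mathbf{s}^*}\|_2^2 = \sum_{\mathbf{s}\neq\mathbf{s}^*}|\fhats|^2 + \bigl(|\fhats[\mathbf{s}^*]|-1\bigr)^2 = 2 - 2|\fhats[\mathbf{s}^*]| \le 2\bigl(1-\sqrt{1-\epsilon}\bigr) \le 4\epsilon,
\end{equation*}
where the final inequality is the elementary estimate valid for $\epsilon \le 3/4$ and hence in particular under the hypothesis $\epsilon < 1/2$.

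The only step beyond routine calculation is recognizing the dictionary $|\mathbf{s}\rangle \leftrightarrow \chi_{\mathbf{s}}$, which turns $\text{{\sc eq}}$ into the collision probability of the distribution $\{p_{\mathbf{s}}\}$ induced by the Pauli--Fourier spectrum of $f$. Once that identification is in place, the proof reduces to the familiar fact that a collision probability of $1-\epsilon$ forces a single outcome to carry mass at least $1-\epsilon$, together with a single Parseval computation. The assumption $\epsilon<1/2$ is used only to keep the final distance bound in the ``close'' regime (noting that $\|f-g\|_2 \le 2$ always holds for unitaries, so the statement is vacuous past that point).
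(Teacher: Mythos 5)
Your proposal is correct and follows essentially the same route as the paper: expand $f$ in the stabilizer (Fourier) basis, observe that the {\sc eq} measurement gives acceptance probability $\sum_{\mathbf{s}} |\hat{f}_{\mathbf{s}}|^4$, and combine Parseval with the bound $\sum_{\mathbf{s}} |\hat{f}_{\mathbf{s}}|^4 \le \max_{\mathbf{s}} |\hat{f}_{\mathbf{s}}|^2$ to extract a single dominant coefficient of squared modulus at least $1-\epsilon$, hence closeness to $e^{i\phi}\chi_{\mathbf{s}}$. Your explicit computation $\|f-e^{i\phi}\chi_{\mathbf{s}^*}\|_2^2 = 2-2|\hat{f}_{\mathbf{s}^*}| \le 4\epsilon$ just spells out the final step the paper leaves implicit.
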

\begin{proof}
Expand $f$ in the Fourier basis as
\begin{equation}\label{eq:opexp}
f = \sum_{\mathbf{s}} \fhats \chi_{\mathbf{s}}.
\end{equation}
Noting that 
\begin{equation*}
	\langle f| \langle f|\text{{\sc eq}}|f \rangle |f\rangle = \sum_{\mathbf{s}, \mathbf{t}} |\fhats|^2|\hat{f}_{\mathbf{t}}|^2\langle \mathbf{tt}|\mathbf{ss}\rangle,
\end{equation*}
we see that 
\begin{equation}
\mathbb{P}[\text{test accepts}]  = \sum_{\mathbf{s}}|\fhats|^4.
\end{equation}
Now, thanks to Parseval's relation, we have that
\begin{equation}
1 = \sum_{\mathbf{s}} |\fhats|^2,
\end{equation}
and, given that the test passes with probability $1-\epsilon$, we
thus have
\begin{equation}
1-\epsilon \le \sum_{\mathbf{s}} |\fhats|^4 \le
\left(\max_{\mathbf{s}}|\fhats|^2\right)\sum_{\mathbf{s}}
|\fhats|^2 = \max_{\mathbf{s}}|\fhats|^2.
\end{equation}
So, according to Parseval, there is exactly one term $|\fhats|^2$ in
the expansion (\ref{eq:opexp}) which is at least $1-\epsilon$,
and the rest are each at most $\epsilon$. Thus $f$ is
$\epsilon$-close to $e^{i \phi} \chi_{\mathbf{s}}$ for some phase $\phi$
(we have that $|\langle f,\chi_{\mathbf{s}}\rangle| \ge \sqrt{1-\epsilon}$).
\end{proof}

\begin{remark}
	The stabilizer test is a quantum generalisation of the classical \emph{linearity test} of Blum, Luby, and Rubenfeld \cite{blum:1993a} (the BLR test). When interpreted in quantum language the BLR test can be seen as a method to test if a quantum boolean function diagonal in the computational basis is close to a tensor product of $\sigma^3$s. (A task for which the stabilizer test can also be applied.) It is notable that the BLR test requires \emph{three} queries to $f$ in order to achieve the same success probability as its quantum counterpart, which only requires \emph{two} queries. Thus, the stabilizer test can be said to have better parameters than its classical counterpart.
\end{remark}

Now we turn to the quantum locality test.
\begin{proposition}
The probability that the quantum locality test accepts when applied to an operator $f$ is equal to
\begin{equation}
\frac{1}{2^n}\left(\sum_{S\subset [n]} \tr(\rho_S^2)\right),
\end{equation}
where $\rho_S$ is the partial trace of $|f\rangle\langle f|$ over
all subsystems $A_j{A'}_j$ with $j\not\in S$, and we define
$\tr(\rho_{\emptyset}^2) = 1$.
\end{proposition}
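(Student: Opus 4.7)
The plan is to expand the swap observable as a sum over subsets and then reduce each term to a purity $\tr(\rho_S^2)$ via the standard ``swap trick.''

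First I would expand the tensor product defining $\text{{\sc sw}}$:
\[
\text{{\sc sw}} = \frac{1}{2^n} \sum_{S \subset [n]} \bigotimes_{j \in S} \text{\sc swap}_{A_j A_j' : B_j B_j'} \otimes \bigotimes_{j \notin S} \1_{A_j A_j' : B_j B_j'}.
\]
Substituting this into $\mathbb{P}[\text{test accepts}] = \langle f|\langle f|\text{{\sc sw}}|f\rangle|f\rangle$ and exchanging the sum with the inner product gives
\[
\mathbb{P}[\text{test accepts}] = \frac{1}{2^n} \sum_{S \subset [n]} \langle f| \langle f| \left(\bigotimes_{j \in S} \text{\sc swap}_{A_j A_j' : B_j B_j'}\right) |f\rangle |f\rangle,
\]
where factors of $\1$ on the remaining subsystems have been suppressed since $\ip{f}{f} = 1$.

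Next I would invoke the swap trick. For any pure state $|f\rangle \in \mathcal{H}_X \otimes \mathcal{H}_Y$, and any subsystem $X_S \subset X$, the operator $\text{\sc swap}_{X_S : X_S'}$ acting on two copies of $|f\rangle$ has expectation
\[
\langle f|\langle f| \, \text{\sc swap}_{X_S : X_S'} \, |f\rangle |f\rangle = \tr(\rho_S^2),
\]
where $\rho_S = \tr_{\bar S}(|f\rangle\langle f|)$ is the reduced density matrix on $X_S$ (including in the present setup both $A_j$ and $A_j'$ for $j \in S$, while tracing out everything else including the $B$-copies). This is a textbook identity, proved by writing $\text{\sc swap} = \sum_{i,j} |i\rangle\langle j| \otimes |j\rangle\langle i|$ and computing directly. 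In our situation, the $B, B'$ registers and the $A_j, A_j'$ registers with $j \notin S$ are not acted upon, so they are traced out when forming the reduced state, yielding exactly the $\rho_S$ defined in the statement. For $S = \emptyset$, the operator is the identity and the expectation is $1$, matching the convention $\tr(\rho_\emptyset^2) = 1$.

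Substituting term by term gives the claimed formula
\[
\mathbb{P}[\text{test accepts}] = \frac{1}{2^n} \sum_{S \subset [n]} \tr(\rho_S^2).
\]
No step is really an obstacle; the only thing one has to be careful about is bookkeeping of which registers are traced out when applying the swap trick, and in particular verifying that tracing out the second copy $B, B'$ together with the unswapped $A_j, A_j'$ registers yields precisely the $\rho_S$ defined in the proposition.
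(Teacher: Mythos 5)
Your proposal is correct and follows essentially the same route as the paper: expand the tensor product $\frac{1}{2^n}(\1+\text{\sc swap})^{\otimes n}$ into a sum over subsets $S\subset [n]$, then evaluate each term with the swap trick, $\langle f|\langle f|\,\text{\sc swap}_{A_S A'_S:B_S B'_S}\,|f\rangle|f\rangle = \tr(\text{\sc swap}\,\rho_S\otimes\rho_S) = \tr(\rho_S^2)$. The bookkeeping of which registers are traced out matches the paper's definition of $\rho_S$, so no gap remains.
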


\begin{proof}
The proof proceeds via direct calculation:
\begin{equation}
\begin{split}
\mathbb{P}[\text{\rm test accepts}] &= \langle f|\langle
f|\left(\frac{\1_{A{A'}:B{B'}}+\text{\sc
swap}_{A{A'}:B{B'}}}{2}\right)^{\otimes n}
|f\rangle|f\rangle \\
&= \frac{1}{2^n}\sum_{S\subset [n]}\langle f|\langle f|\text{\sc
swap}_{A_S{A'}_S:B_S{B'}_S}|f\rangle|f\rangle \\
&= \frac{1}{2^n}\sum_{S\subset [n]}\tr(\text{\sc
swap}_{A_S{A'}_S:B_S{B'}_S} \rho_S \otimes \rho_S ) \\
&= \frac{1}{2^n}\sum_{S\subset [n]}\tr(\rho_S^2),
\end{split}
\end{equation}
where we use the notation $A_S$ for an operator which is applied to the
subsystems $S$, and acts as the identity elsewhere.
\end{proof}

It is easy to see that if $f$ is local then the probability the
quantum locality test accepts is equal to $1$. We have been unable
to show that if the test accepts with probability greater than
$1-\epsilon$ then $f$ is close to being local. Thus we have the
following
\begin{conjecture}
\label{con:locality}
Suppose $f$ passes the quantum locality test with probability $\ge
1-\epsilon$. Then there exist $U_j$, $j\in [n]$, with $U_j^2 =
\mathbb{I}$ such that
\begin{equation}
\langle f, U_1\otimes U_2 \otimes \cdots \otimes U_n\rangle \ge
1-2\epsilon.
\end{equation}
\end{conjecture}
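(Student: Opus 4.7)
The plan is to combine the Fourier-analytic formula from the previous proposition with a multipartite product-test style argument. By that proposition, $\mathbb{P}[\text{accept}] = \frac{1}{2^n}\sum_{S \subseteq [n]} \tr(\rho_S^2)$, so the hypothesis rearranges to $\sum_{S \subseteq [n]} (1 - \tr(\rho_S^2)) \le 2^n \epsilon$. Since each summand is nonnegative, every marginal of the Choi vector $|f\rangle$ with respect to the partition into the $n$ registers $A_j A'_j$ is constrained to be close to pure. The first substantive step is then a product-state approximation: show that if the averaged subset purity exceeds $1 - \epsilon$, then $|f\rangle$ is $O(\epsilon)$-close in $\|\cdot\|_2$ to some product vector $|\psi^\ast\rangle = |\psi_1\rangle \otimes \cdots \otimes |\psi_n\rangle$. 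The natural strategy is induction on $n$: use a single-party purity bound to extract the Schmidt decomposition of $|f\rangle$ on the $A_1 A'_1 : \text{rest}$ cut, peel off the leading Schmidt factor $|\psi_1\rangle$, and apply the inductive hypothesis to the residual state on the remaining $n-1$ subsystems, which inherits an analogous constraint obtained by restricting the sum to $S \not\ni 1$.

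The second step is to transport the product vector back to a local operator. A product Choi vector corresponds via the Choi-Jamiolkowski isomorphism to a product operator $M^\ast = M_1 \otimes \cdots \otimes M_n$ with $\|M_j\|_2 = 1$, and $\langle f, M^\ast\rangle = \langle f | \psi^\ast\rangle \ge 1 - O(\epsilon)$. Since $f$ is unitary and $\|M^\ast\|_2 = 1$, the approximation $f \approx M^\ast$ in $\|\cdot\|_2$ forces $(M^\ast)^\dagger M^\ast$ close to $\mathbb{I}$; because this factorises as $(M_1^\dagger M_1) \otimes \cdots \otimes (M_n^\dagger M_n)$, a short perturbation argument shows each $M_j$ is close to a scalar multiple of a unitary. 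Within the one-qubit unitary group, the operators satisfying $U^2 = \mathbb{I}$ are $\pm \mathbb{I}$ together with the traceless Hermitian unitaries $\vec{n} \cdot \vec{\sigma}$, so one rounds each $M_j$ to the nearest such $U_j$. Because $f$ is Hermitian, $\langle f, M^\ast\rangle$ is insensitive to the anti-Hermitian part of $M^\ast$, which absorbs some of the rounding error and should yield the precise constant $1 - 2\epsilon$.

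The hard part is the product-state approximation itself: obtaining an $O(\epsilon)$ bound with a constant independent of $n$. A naive induction using only singleton marginals is hopeless, since $\sum_j (1 - \tr(\rho_{\{j\}}^2)) \le 2^n \epsilon$ extracted directly from the hypothesis is only useful when $\epsilon \ll 2^{-n}$. Overcoming this requires exploiting the full family $\{\tr(\rho_S^2)\}_{S \subseteq [n]}$ and arguing that their collective defect from $2^n$ tightly controls $\bigl\| \, |f\rangle - |\psi^\ast\rangle \, \bigr\|_2^2$, which is a genuinely non-trivial multipartite entanglement-testing statement and is presumably why the authors leave the result as a conjecture.
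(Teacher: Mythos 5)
You should first note that the paper does not prove this statement at all: the authors explicitly write that they were unable to show that passing the locality test with probability $1-\epsilon$ implies closeness to a local operator, and they leave it as Conjecture~\ref{con:locality}. So there is no proof in the paper to compare against, and the question is whether your sketch closes the gap. It does not, and you essentially say so yourself. The crux is your ``first substantive step'': deducing from $\frac{1}{2^n}\sum_{S\subseteq[n]}\tr(\rho_S^2)\ge 1-\epsilon$ that the Choi vector $|f\rangle$ is $O(\epsilon)$-close (with a constant independent of $n$) to a product vector across the $n$ registers $A_jA_j'$. As you observe, the individual-marginal bounds one can extract are only of size $2^n\epsilon$ and the naive peel-off induction loses factors at every step, so the inductive strategy you propose is not an argument but a restatement of the difficulty. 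This multipartite statement (that the averaged-subset-purity, i.e.\ product/permutation-test, statistic controls distance to the nearest product state with dimension-independent constants) is exactly the hard quantitative content of the conjecture; asserting it as a ``step'' and then conceding it is ``presumably why the authors leave the result as a conjecture'' means the proposal contains no proof of the statement.

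Even granting that step, your second stage is also incomplete in ways that matter for the stated conclusion. You invoke ``since $f$ is Hermitian'' to absorb the anti-Hermitian part of $M^\ast$, but Hermiticity of $f$ is not among the conjecture's hypotheses ($f$ is only assumed unitary); for a general unitary the statement as literally written is delicate, since any product unitary (e.g.\ a product of single-qubit phase rotations) passes the test with probability $1$ yet need not have inner product close to $1$ with any product of operators satisfying $U_j^2=\1$ --- one must either restrict to quantum boolean $f$ or allow a global phase, as the analogous stabilizer-test proposition does. Moreover, rounding each $M_j$ to the nearest Hermitian unitary incurs a per-qubit error, and you give no argument that these $n$ errors sum to $O(\epsilon)$, let alone that the final constant is exactly $2$ as the conjecture demands; ``a short perturbation argument'' and ``should yield the precise constant'' are placeholders, not proofs. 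In summary: the outline is a reasonable research plan, and its reduction of the problem to a product-state-approximation statement is sensible, but both the key analytic lemma and the rounding step are missing, so the conjecture remains unproven by this proposal just as it is left open in the paper.
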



\subsection{Testing dictators}
In this subsection we describe a quantum test --- a quantisation of
the H{\aa}stad test \cite{hastad:2001a} --- which tests whether a
unitary operator $f$ is $\epsilon$-close to a dictator. In fact,
we give two such tests. The first determines whether $f$ is close, up to a phase, to
a dictator which is also a stabilizer operator (a {\em stabilizer dictator}).
The second is intended to determine whether $f$ is close to a dictator.
As with the situation for quantum locality testing, we are able to analyse
the first test, but leave the second as a conjecture.

The dictator --- or \emph{quantum H{\aa}stad} --- test is defined as follows.
\begin{definition}
Let $f$ be a unitary operator and let $0\le \delta \le 1$.
Then the quantum H{\aa}stad test proceeds as follows.
\begin{enumerate}
\item Prepare $4n$ quantum registers in the state
\begin{equation}
|\Phi\rangle_{\mathbf{A}{\mathbf{A}'}}|\Phi\rangle_{\mathbf{B}{\mathbf{B}'}}.
\end{equation}
(Our notation is identical to that of the quantum locality test.)
\item Apply $f$ to $\mathbf{A}$ and once more to $\mathbf{B}$ to
give
\begin{equation}
|f\rangle|f\rangle = f_{\mathbf{A}}\otimes
\1_{{\mathbf{A}'}}\otimes f_{\mathbf{B}}\otimes
\1_{{\mathbf{B}'}}|\Phi\rangle|\Phi\rangle.
\end{equation}
\item To test if $f$ is close to a stabilizer dictator measure the \emph{$\delta$-equality} POVM given by the operators $\{\text{{\sc eq}}_\delta, \1 - \text{{\sc eq}}_\delta\}$, where
\begin{equation}
\text{{\sc eq}}_\delta = \left(|00\rangle\langle 00| +
(1-\delta)\sum_{s=1}^3 |s\rangle_{A{A'}}\langle s|\otimes
|s\rangle_{B{B'}} \langle s|\right)^{\otimes n}.
\end{equation}
This measurement is easy to implement by flipping a $\delta$-biased coin, and gives
\begin{equation}
\mathbb{P}[\text{test accepts}] = \langle f| \langle f|\text{{\sc
eq}}_\delta|f \rangle |f\rangle.
\end{equation}
\item To test if $f$ is a dictator measure the \emph{$\delta$-swap observable}
\begin{equation}
\text{{\sc sw}}_\delta =
\frac{1}{2^n}\left(T(\delta)_{A{A'}:B{B'}}+\sqrt{T(\delta)}\text{\sc
swap}\sqrt{T(\delta)}_{A{A'}:B{B'}}\right)^{\otimes
n},
\end{equation}
where
\begin{equation}
T(\delta) = \sum_{s,t} (1-\delta)^{\frac{|s|+|t|}{2}}
|s,t\rangle\langle s, t|
\end{equation}
(recall that $|s| = 1$ if $s>0$, for $s\in\{0,1,2,3\}$), giving
\begin{equation}
\mathbb{P}[\text{test accepts}]  = \langle f| \langle
f|\text{{\sc sw}}|f \rangle |f\rangle.
\end{equation}
\end{enumerate}
\end{definition}

We now prove the following
\begin{proposition}
Suppose that a unitary operator $f$ passes the stabilizer
H{\aa}stad test with $\delta = \frac{3}{4}\epsilon$ with probability
$1-\epsilon$. Then $f$ is $\epsilon$-close, up to a phase, to $\1$ or a stabilizer
dictator. (We assume $\epsilon \le 0.01$.)
\end{proposition}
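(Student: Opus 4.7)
The plan is to mimic the approach used for the quantum stabilizer test, but now weighted by $(1-\delta)^{|\mathbf{s}|}$ coming from the $\delta$-biased equality check. First, I would expand $f = \sum_{\mathbf{s}} \hat{f}_{\mathbf{s}} \chi_{\mathbf{s}}$ in the Pauli basis, so that $|f\rangle |f\rangle = \sum_{\mathbf{s},\mathbf{t}} \hat{f}_{\mathbf{s}} \hat{f}_{\mathbf{t}} |\mathbf{s}\rangle |\mathbf{t}\rangle$. Since $\text{{\sc eq}}_\delta$ is diagonal in the $|\mathbf{s}\rangle\otimes|\mathbf{t}\rangle$ basis, with eigenvalue $0$ unless $\mathbf{s}=\mathbf{t}$ and eigenvalue $(1-\delta)^{|\mathbf{s}|}$ when $\mathbf{s}=\mathbf{t}$, a direct computation gives
\begin{equation*}
\mathbb{P}[\text{test accepts}] \;=\; \sum_{\mathbf{s}} |\hat{f}_{\mathbf{s}}|^4 (1-\delta)^{|\mathbf{s}|}.
\end{equation*}

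Second, I would bound this by pulling out the largest factor and using Parseval:
\begin{equation*}
1-\epsilon \;\le\; \sum_{\mathbf{s}} |\hat{f}_{\mathbf{s}}|^2 \cdot |\hat{f}_{\mathbf{s}}|^2 (1-\delta)^{|\mathbf{s}|} \;\le\; \max_{\mathbf{s}} \bigl( |\hat{f}_{\mathbf{s}}|^2 (1-\delta)^{|\mathbf{s}|} \bigr).
\end{equation*}
Let $\mathbf{s}^*$ achieve the maximum and set $k = |\mathbf{s}^*|$. Then $|\hat{f}_{\mathbf{s}^*}|^2 \ge (1-\epsilon)/(1-\delta)^k$.

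Third, since $|\hat{f}_{\mathbf{s}^*}|^2 \le 1$ by Parseval, we obtain the constraint $(1-\delta)^k \ge 1-\epsilon$. With $\delta = \tfrac{3}{4}\epsilon$ and $\epsilon \le 0.01$, we have $(1-\delta)^2 = 1 - \tfrac{3}{2}\epsilon + \tfrac{9}{16}\epsilon^2 < 1-\epsilon$, so $k \ge 2$ is impossible. Hence $k \in \{0,1\}$, and $\chi_{\mathbf{s}^*}$ is either $\1$ or a stabilizer dictator. In either case $|\hat{f}_{\mathbf{s}^*}|^2 \ge (1-\epsilon)/(1-\delta)^k \ge 1-\epsilon$.

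Finally, I would translate this into closeness. Choosing $\phi$ so that $e^{-i\phi}\hat{f}_{\mathbf{s}^*} = |\hat{f}_{\mathbf{s}^*}|$,
\begin{equation*}
\|f - e^{i\phi}\chi_{\mathbf{s}^*}\|_2^2 \;=\; 2 - 2|\hat{f}_{\mathbf{s}^*}| \;\le\; 2\bigl(1 - \sqrt{1-\epsilon}\bigr),
\end{equation*}
and a one-line check shows $2(1-\sqrt{1-\epsilon}) \le 4\epsilon$ for all $\epsilon \le 3/4$, which concludes the proof. The only non-routine step is the $k \le 1$ argument in the third paragraph, where the precise choice $\delta = \tfrac{3}{4}\epsilon$ is designed exactly to force $(1-\delta)^2 < 1-\epsilon$; the small-$\epsilon$ assumption is needed here to absorb the $\delta^2$ correction.
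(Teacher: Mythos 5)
Your proposal is correct and follows essentially the same route as the paper: compute $\mathbb{P}[\text{accepts}] = \sum_{\mathbf{s}}(1-\delta)^{|\mathbf{s}|}|\hat{f}_{\mathbf{s}}|^4$, bound it by $\max_{\mathbf{s}}(1-\delta)^{|\mathbf{s}|}|\hat{f}_{\mathbf{s}}|^2$ via Parseval, and use $(1-\tfrac{3}{4}\epsilon)^{|\mathbf{s}|} < 1-\epsilon$ for $|\mathbf{s}|\ge 2$ to force the dominant coefficient onto a string of support at most one. The only difference is that you spell out the final conversion of $|\hat{f}_{\mathbf{s}^*}|^2 \ge 1-\epsilon$ into the $\|f-e^{i\phi}\chi_{\mathbf{s}^*}\|_2^2 \le 4\epsilon$ closeness bound, which the paper leaves implicit.
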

\begin{proof}
Write the Fourier expansion of $f$ as follows:
\begin{equation}
f = \sum_{\mathbf{s}} \fhats {\chi}_{\mathbf{s}}.
\end{equation}
It is easy to verify that
\begin{equation}
\mathbb{P}[\text{test accepts}] = \sum_{\mathbf{s}}
(1-\delta)^{|\mathbf{s}|}|\fhats|^4.
\end{equation}
Now suppose that $f$ is a stabilizer dictator, up to a phase, on some variable $j\in [n]$,
i.e., $f$ is $e^{i \phi} \sigma_j^s$, for some $s\in\{0,1,2,3\}$. Then $f$
passes the quantum H{\aa}stad test with probability
\begin{equation}
\mathbb{P}[\text{test accepts}] = (1-\delta)|\fhats|^4 = 1-\delta.
\end{equation}
On the other hand, suppose that
\begin{equation}
1-\epsilon \le \mathbb{P}[\text{test accepts}] \le \left(\sum_{\mathbf{s}}
|\fhats|^2\right)\max_{\mathbf{s}} (1-\delta)^{|\mathbf{s}|}|\fhats|^2 =
\max_{\mathbf{s}}
\left(1-\frac{3}{4}\epsilon\right)^{|\mathbf{s}|}|\fhats|^2.
\end{equation}
Since $(1-\delta)^{|\mathbf{s}|} \le 1$ it follows that there exists
some $\mathbf{s}$ such that $|\fhats|^2 \ge 1-\epsilon$.  Using the
fact that $(1-\frac{3}{4}\epsilon)^{|\mathbf{s}|} < 1-\epsilon$
for $\supp(\mathbf{s}) \ge 2$, we know that this maximum occurs on a
string $\mathbf{s}$ with support one or zero. That is, there exists
a Fourier coefficient of magnitude at least $1-\epsilon$ on a string
with support at most one.
\end{proof}

We have been unable to prove the corresponding result for the full
quantum dictator test, so we leave this as a conjecture.
\begin{conjecture}
\label{con:dictator}
Suppose that a unitary operator $f$ passes the quantum
H{\aa}stad dictator test with $\delta = \frac{3}{4}\epsilon$ with
probability $1-\epsilon$. Then $f$ is $\epsilon$-close, up to a phase, to $\1$
or a dictator. (Assume $\epsilon \le 0.01$.)
\end{conjecture}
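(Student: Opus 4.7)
The plan is to recast the dictator test as a quantum locality test applied to a noised version of $f$, and then combine the two constraints this puts on $f$.

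First, observe that $T(\delta)$ commutes with $\text{swap}$ on each pair of qubits (both preserve the multiset $\{s,t\}$ in the Bell basis), so $\sqrt{T(\delta)}\,\text{swap}\,\sqrt{T(\delta)}=T(\delta)\,\text{swap}$ and hence $\text{{\sc sw}}_\delta=\frac{1}{2^n}T(\delta)^{\otimes n}\sum_{S\subset[n]}\text{swap}_S$. Splitting $T(\delta)^{\otimes n}$ symmetrically about the inner product converts $(T(\delta)^{1/2})^{\otimes n}|f\rangle|f\rangle$ into $|h\rangle|h\rangle$ where $|h\rangle=|T_\rho f\rangle$ and $\rho=(1-\delta)^{1/4}$, giving
\begin{equation*}
\mathbb{P}[\text{accept}]=\frac{1}{2^n}\sum_{S\subset[n]}\tr(\sigma_S^2),\qquad\sigma_S=\tr_{\bar S}|h\rangle\langle h|.
\end{equation*}
This is precisely the quantum locality test applied to the subnormalised pure state $|h\rangle$; a direct computation recovers the claimed pass probability $1-\delta$ on a dictator $f=U_i$.

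From $\mathbb{P}\ge 1-\epsilon$ I would extract two facts. Purity of $|h\rangle$ gives $\tr(\sigma_S^2)\le\|h\|_2^4$ for every $S$, hence $\|h\|_2^4\ge 1-\epsilon$; expanding $\|h\|_2^2=\sum_\mathbf{s}(1-\delta)^{|\mathbf{s}|/2}\fhats^2$ then concentrates the Fourier mass of $f$ on strings of low weight. Simultaneously $\mathbb{E}_S[\tr(\sigma_S^2)/\|h\|_2^4]\ge 1-O(\epsilon)$, i.e.\ the renormalised state $|h\rangle/\|h\|_2$ passes the locality test with probability $\ge 1-O(\epsilon)$, and a quantitative form of Conjecture~\ref{con:locality} extended to non-unitary operators would force $|h\rangle$ close to a tensor product $\bigotimes_i|\psi_i\rangle$. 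The only such tensor products whose Fourier support sits on strings of weight $\le 1$ have $|\psi_j\rangle\propto|\phi\rangle$ for every $j$ except possibly one index $i$, which pins $\supp(\fhats)$ into $\{i\}$ (or the empty set, giving $\1$). Inverting $T_\rho$ on low-weight strings is harmless, so the concentration transfers to $f$ and we conclude that $f$ is $\epsilon$-close, up to a phase, to $\1$ or a dictator $U_i$.

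The hard part is this quantitative ``average product $\Rightarrow$ global product'' step, which is essentially the still-open Conjecture~\ref{con:locality}; the two conjectures stand or fall together. Moreover, with $\delta=\tfrac{3}{4}\epsilon$ the naive bound $\tr(\sigma_S^2)\le\|h\|_2^4$ alone only yields $\sum_{|\mathbf{s}|\ge 2}\fhats^2=O(1)$ rather than the $O(\epsilon)$ needed, so the product-state structure must be used non-trivially. A plausible route is to establish the sharper inequality $\mathbb{P}[\text{accept}]\le\|h\|_4^4$, which saturates on dictators and which I have verified in small-$n$ examples; combined with the quantum hypercontractive inequality proved earlier in the paper, it would convert $\|T_\rho f\|_4\approx 1$ into genuine $O(\epsilon)$ concentration of $f$'s spectrum on level $\le 1$, after which the quantum FKN theorem forces the weight-one part onto a single qubit. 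Establishing this $\|h\|_4^4$ upper bound in general, however, appears to require substantially new ideas.
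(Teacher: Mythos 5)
The statement you are trying to prove is not proved in the paper at all: the authors explicitly state that they were unable to analyse the full dictator test and leave it as Conjecture~\ref{con:dictator}, so there is no paper proof to compare against, and your proposal does not close the gap either. Your algebraic reformulation is correct and genuinely useful: since $T(\delta)$ acts diagonally in the Bell-type basis with a weight symmetric in $(s,t)$, it commutes with the per-site swap, so $\text{{\sc sw}}_\delta = \frac{1}{2^n}T(\delta)^{\otimes n}\sum_{S\subset[n]}\text{\sc swap}_S$, and pulling $T(\delta)^{1/2}$ onto each copy of $|f\rangle$ shows that the $\delta$-swap test is exactly the locality test applied to the subnormalised vector $|T_\rho f\rangle$ with $\rho=(1-\delta)^{1/4}$; the $1-\delta$ acceptance probability on a traceless dictator also checks out. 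But from there the argument is conditional on two unproven ingredients, and you say so yourself: (i) a quantitative version of Conjecture~\ref{con:locality}, which is the other open conjecture of the paper, and moreover you need it in a strictly stronger form than stated there, namely for non-unitary, subnormalised vectors rather than for unitary operators passing the test; and (ii) the sharper bound $\mathbb{P}[\text{accept}]\le\|h\|_4^4$, which you have only verified in small examples. As you correctly compute, the unconditional purity bound $\tr(\sigma_S^2)\le\|h\|_2^4$ with $\delta=\tfrac34\epsilon$ only forces $\sum_{|\mathbf{s}|\ge 2}\fhats^2=O(1)$, not the $O(\epsilon)$ concentration the conclusion requires, so neither ingredient can be dispensed with.

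In short, what you have is a reduction of one open conjecture to another open conjecture (in a strengthened form) plus an additional unproven inequality, not a proof. The reduction itself is a worthwhile observation -- it makes precise the sense in which the two conjectures ``stand or fall together'' and suggests that a hypercontractivity-based attack via $\|T_\rho f\|_4$ could yield the needed level-$\le 1$ concentration, after which Theorem~\ref{thm:quantumfkn} would indeed finish the job -- but the essential analytic content of Conjecture~\ref{con:dictator} (turning near-maximal average subsystem purity of $|T_\rho f\rangle$ into genuine product/dictator structure with $O(\epsilon)$ error) remains entirely open in your write-up.
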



\section{Learning quantum boolean functions}

The purpose of this section is to describe a family of results in
the spirit of the \emph{Goldreich-Levin algorithm}
\cite{goldreich:1989a}, an algorithm which was originally defined in
a cryptographic context, but was shown by Kushilevitz and Mansour \cite{kushilevitz:1993a}
to be a useful tool for learning boolean functions. Continuing
the theme of the previous sections, we'll see that quantum computers
are polynomially more efficient at learning tasks for boolean
functions. Heuristically, this is because quantum computers can
exploit quantum superposition to carry out ``super-dense'' coding
\cite{nielsen:2000a}, allowing us to pack more information in a
single quantum query.

The presentation of the results in this section is based on \cite{odonnellnotes}.

\subsection{Learning stabilizer operators and approximating Fourier coefficients}

We begin by learning the class of stabilizer operators. It turns out
that this can be done with only one quantum query, in contrast to the $n$ queries required
classically. This is a natural generalisation of the quantum algorithm
of Bernstein and Vazirani \cite{bernstein:1997a}, which learns linear
boolean functions with one quantum query. For simplicity, the results in this subsection are
given in terms of quantum boolean functions, but it should be clear how to extend them
to general unitary operators.
\begin{proposition}
If a quantum boolean function $f$ is a stabilizer operator, then we
can identify $f$ with $1$ quantum query, using $O(n)$ quantum measurements of Pauli operators.
\end{proposition}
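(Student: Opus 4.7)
The plan is to exploit the Choi--Jamio{\l}kowski correspondence already used implicitly in the property tests above. Using one query to $f$, prepare $n$ Bell pairs on registers $A_jA'_j$ and apply $f$ to the $\mathbf{A}$ registers to obtain the state $\ket{f} = f_{\mathbf{A}} \otimes \1_{\mathbf{A}'} \ket{\Phi}$. When $f$ is the stabilizer operator $\chi_{\mathbf{s}} = \sigma^{s_1}\otimes\cdots\otimes\sigma^{s_n}$, this Choi state factorises as $\ket{f} = \bigotimes_{j=1}^n \ket{s_j}_{A_jA'_j}$, where (as already defined) $\ket{s} = \sigma^s\otimes\1\ket{\phi}$. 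So identifying $f$ reduces to determining each two-valued label $s_j\in\{0,1,2,3\}$ from its corresponding pair, with no correlations across pairs.

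Next I would observe that $\{\ket{0},\ket{1},\ket{2},\ket{3}\}$ is just the Bell basis (up to phases), and that it is the common eigenbasis of the two commuting Pauli observables $\sigma^1\otimes\sigma^1$ and $\sigma^3\otimes\sigma^3$. A short computation using the commutation/anticommutation of Pauli matrices shows $\sigma^b\otimes\sigma^b\,\ket{s} = c_{b,s}\,\epsilon_b\,\ket{s}$, where $c_{b,s} = \pm 1$ records whether $\sigma^b$ and $\sigma^s$ commute, and $\epsilon_b \in \{\pm 1\}$ is the eigenvalue of $\sigma^b\otimes\sigma^b$ on $\ket{\phi}$. Tabulating the four outcome pairs $(\pm,\pm)$ yields a bijection with $s\in\{0,1,2,3\}$, so measuring these two commuting Pauli observables on each pair $A_jA'_j$ pins down $s_j$ exactly. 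Because $f$ is \emph{exactly} a stabilizer, each outcome is deterministic, so no repetition is needed.

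Putting the pieces together: one query to $f$ prepares the state, then $2n = O(n)$ two-qubit Pauli measurements (or equivalently $n$ Bell measurements, each implementable via a CNOT, a Hadamard, and two computational-basis measurements) read out $\mathbf{s}$ and hence identify $f = \chi_{\mathbf{s}}$ with certainty.

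There is no real obstacle here; the only thing to get right is the bookkeeping of signs in the Pauli calculation $\sigma^b\otimes\sigma^b\,\ket{s}$ to confirm that the two binary outcomes really do separate all four values of $s$. The proposition can then be stated as a corollary: the classical Bernstein--Vazirani result is recovered by specialising to diagonal $\mathbf{s}\in\{0,3\}^n$ and noting that in that case the $\sigma^1\otimes\sigma^1$ measurement is unnecessary, while each $\sigma^3\otimes\sigma^3$ outcome reveals one bit of $\mathbf{s}$.
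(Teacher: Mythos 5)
Your proposal is correct and follows essentially the same route as the paper: query $f$ once on halves of $n$ maximally entangled pairs to form the Choi state, which factorises as $\bigotimes_j \ket{s_j}$, and then measure each register $A_jA'_j$ separately in the basis $\{\ket{s}\}$. Your only addition is to spell out explicitly how that per-pair Bell-basis measurement is realised by the commuting Pauli observables $\sigma^1\otimes\sigma^1$ and $\sigma^3\otimes\sigma^3$, which the paper leaves implicit in its count of $O(n)$ Pauli measurements.
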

\begin{proof}
The idea behind the proof is simple: we apply $f$ to one half of a
collection of $n$ maximally entangled states and then measure in a
basis of maximally entangled states. More precisely, suppose that $f
= \chi_{\mathbf{s}}$ for some $\mathbf{s}$. Then the first step of
our algorithm queries $f$ to produce the state (our notation is
identical to that of the previous section)
\begin{equation}
f\otimes \mathbb{I}|\Phi\rangle_{\mathbf{A}{\mathbf{A}'}}
\equiv |\mathbf{s}\rangle.
\end{equation}
Since the set of states $\{|\mathbf{s}\rangle\}$ indexed by
$\mathbf{s}$ forms an orthonormal basis for
$\mathbf{A}{\mathbf{A}'}$ we can simply measure the state
$|\mathbf{s}\rangle$ to find out $\mathbf{s}$. (The $O(n)$ measurements bit
comes from the preparation step and the measurement step; one needs
to measure each register $A_j{A'}_j$ separately.)
\end{proof}

The next proposition shows us that the previous result is robust
against perturbations.
\begin{proposition}\label{prop:robust}
Suppose that a quantum boolean function $f$ satisfies
\begin{equation}\label{eq:nearlin}
\fhats \ge \frac{1+\epsilon}{\sqrt{2}}
\end{equation}
for some $\mathbf{s}$. Then $\chi_{\mathbf{s}}$ can be identified with probability
$1-\delta$ with
$O\left(\frac{1}{\epsilon^2}\log\left(\frac{1}{\delta}\right)\right)$
uses of $f$.
\end{proposition}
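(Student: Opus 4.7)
The plan is to run the stabilizer identification procedure of the previous proposition many times independently and take the majority vote, using the hypothesis on $\fhats$ to boost the signal.

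First I would set up the state produced by a single query. Applying $f$ to half of $n$ maximally entangled states gives
\begin{equation*}
|f\rangle = (f \otimes \1)|\Phi\rangle = \sum_{\mathbf{t}} \hat{f}_{\mathbf{t}} |\mathbf{t}\rangle,
\end{equation*}
where $\{|\mathbf{t}\rangle\}_{\mathbf{t} \in \{0,1,2,3\}^n}$ is the orthonormal basis of maximally entangled states introduced in the previous proof. Measuring in this basis (performed as $n$ independent two-qubit Bell-basis measurements) yields outcome $\mathbf{t}$ with probability $p_{\mathbf{t}} = |\hat{f}_{\mathbf{t}}|^2$.

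Next I would exploit the hypothesis $\fhats \ge (1+\epsilon)/\sqrt{2}$ to argue that $\mathbf{s}$ is a heavy outcome. Squaring gives
\begin{equation*}
p_{\mathbf{s}} = |\fhats|^2 \ge \frac{(1+\epsilon)^2}{2} \ge \frac{1}{2} + \epsilon.
\end{equation*}
This is the crucial observation: since $\sum_{\mathbf{t}} p_{\mathbf{t}} = 1$ by Parseval, and $p_{\mathbf{s}} > 1/2$, the outcome $\mathbf{s}$ is the \emph{unique} element that can appear strictly more than half the time in a sequence of independent trials. Thus the algorithm is simply: perform $N$ independent queries, each followed by the Bell-basis measurement, and output whichever label $\mathbf{t}$ was observed strictly more than $N/2$ times (declaring failure if none exists).

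Finally I would bound the failure probability. Let $N_{\mathbf{s}}$ be the number of trials returning $\mathbf{s}$; then $\E[N_{\mathbf{s}}] \ge N(1/2 + \epsilon)$ and by Hoeffding's inequality
\begin{equation*}
\Pr[N_{\mathbf{s}} \le N/2] \le \Pr\bigl[N_{\mathbf{s}} - \E[N_{\mathbf{s}}] \le -N\epsilon\bigr] \le e^{-2N\epsilon^2}.
\end{equation*}
Choosing $N = \lceil \tfrac{1}{2\epsilon^2}\ln(1/\delta)\rceil = O(\tfrac{1}{\epsilon^2}\log(1/\delta))$ makes this at most $\delta$, and whenever $N_{\mathbf{s}} > N/2$ the algorithm correctly identifies $\mathbf{s}$ (since no other label can exceed the same threshold). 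I do not foresee a serious obstacle here; the only subtle point worth flagging is the avoidance of a naive union bound over all $4^n$ possible outcomes, which would cost an extra factor of $n$ in $N$ --- this is sidestepped precisely because the gap $p_{\mathbf{s}} > 1/2$ lets us use the strict-majority criterion rather than comparing $\mathbf{s}$ to each competitor individually.
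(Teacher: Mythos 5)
Your proposal is correct and follows essentially the same route as the paper: apply $f$ to halves of maximally entangled states, measure in the $\{|\mathbf{s}\rangle\}$ basis so that each query returns $\mathbf{s}$ with probability $|\fhats|^2 \ge \tfrac12 + \epsilon$, take a majority vote, and bound the failure probability by a Hoeffding/Chernoff bound to get $O\bigl(\tfrac{1}{\epsilon^2}\log\tfrac{1}{\delta}\bigr)$ queries. Your explicit remark that the strict-majority criterion makes a union bound over the $4^n$ outcomes unnecessary is a welcome clarification of a point the paper leaves implicit, but it is not a different argument.
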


\begin{proof}
Note that, by Parseval, there can only be one character
$\chi_\mathbf{s}$ which satisfies (\ref{eq:nearlin}); the rest of
the characters must be further from $f$ than $\chi_{\mathbf{s}}$.

The strategy of our proof is simple: we make $q$ queries to $f$ by
applying it to sets of maximally entangled states $|\Phi\rangle$ and
then measure each resulting state in the $\{|\mathbf{s}\rangle\}$
basis. We then take a majority vote. For each query, with
probability $\mathbb{P}[\text{succ}] \ge \frac12 + \epsilon$, we get
the right answer. To work out the probability that the test fails we
bound the failure probability by bounding the cumulative
distribution function of the binomial distribution $B(q,p)$ with $p
= \frac12 + \epsilon$:
\begin{equation}
\begin{split}
\mathbb{P}[\text{test fails}] = \mathbb{P}[\text{at least $q/2$
failures}] &\le e^{-2\frac{(qp-\frac{q}{2})^2}{q}} \\
&= e^{-2q\epsilon^2} = \delta,
\end{split}
\end{equation}
so that choosing $q =
O\left(\frac{1}{\epsilon^2}\log\left(\frac{1}{\delta}\right)\right)$
gives us the desired result.
\end{proof}

\begin{lemma}\label{lem:fourchunks}
Let $f = \sum_{\mathbf{s}\in \{0,1,2,3\}^{n}} \hat{f}_{\mathbf{s}}
\chi_{\mathbf{s}}$ be a quantum boolean function. Then
$\hat{f}_{\mathbf{s}}^2 \ge \gamma^2$ for at most
$\frac{1}{\gamma^2}$ terms.
\end{lemma}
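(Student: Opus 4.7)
The plan is to invoke Parseval's equality directly. Since $f$ is a quantum boolean function, we have $f^2 = \1$, so $\|f\|_2^2 = \frac{1}{2^n}\tr(f^\dag f) = \frac{1}{2^n}\tr(\1) = 1$. By the quantum Parseval identity stated earlier in the excerpt, this yields
\begin{equation*}
\sum_{\mathbf{s}\in\{0,1,2,3\}^n} \hat{f}_{\mathbf{s}}^2 = 1.
\end{equation*}

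Now suppose, for contradiction, that there were strictly more than $1/\gamma^2$ indices $\mathbf{s}$ with $\hat{f}_{\mathbf{s}}^2 \ge \gamma^2$. Summing over just this subset of indices would already give a contribution strictly greater than $(1/\gamma^2)\cdot \gamma^2 = 1$, contradicting the Parseval identity above (all terms $\hat{f}_{\mathbf{s}}^2$ are non-negative, so restricting the sum only decreases it). Therefore the number of such indices is at most $1/\gamma^2$.

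There is essentially no technical obstacle here: the only ingredients are that quantum boolean functions have unit $2$-norm (immediate from $f^2 = \1$) and Parseval, both of which are already established in the excerpt. The single subtlety worth flagging is that $\hat{f}_{\mathbf{s}}$ is a priori complex, but for the Hermitian $f$ (which every quantum boolean function is) it is real, so writing $\hat{f}_{\mathbf{s}}^2$ rather than $|\hat{f}_{\mathbf{s}}|^2$ is consistent; either way the averaging argument goes through unchanged.
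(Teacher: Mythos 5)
Your argument is correct and is exactly the paper's intended proof: the paper simply remarks that the lemma is ``a simple consequence of Parseval's relation,'' which is the counting argument you spell out. No issues.
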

\begin{proof}
This is a simple consequence of Parseval's relation.
\end{proof}

\begin{lemma}\label{lem:fcoeffest}
For any $\mathbf{s}\in\{0,1,2,3\}^{n}$ it is possible to estimate
$\hat{f}_{\mathbf{s}}$ to within $\pm \eta$ with probability
$1-\delta$ using
$O\left(\frac{1}{\eta^2}\log\left(\frac{1}{\delta}\right)\right)$
queries.
\end{lemma}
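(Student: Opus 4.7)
The plan is to construct a bounded real-valued random variable whose expectation equals $\hat{f}_{\mathbf{s}}$, and then to average many independent copies and invoke Hoeffding's inequality. The natural vehicle is a Hadamard test on the unitary $\chi_{\mathbf{s}}f$, applied to one half of a maximally entangled state; throughout I adopt the standard convention of the quantum query model that controlled-$f$ has the same query cost as $f$.

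Concretely, I would prepare $|+\rangle_C \otimes |\Phi\rangle_{\mathbf{A}\mathbf{A}'}$ with $|\Phi\rangle$ as in the preceding sections, apply $f$ to $\mathbf{A}$ controlled on $C$ (one query), apply $\chi_{\mathbf{s}}$ to $\mathbf{A}$ controlled on $C$ (a Clifford, free), apply a Hadamard to $C$, and measure $C$ in the computational basis. Using the identities $\ip{\Phi}{(M \otimes \mathbb{I})\Phi} = \frac{1}{2^n}\tr(M)$ and $\chi_{\mathbf{s}}^\dagger = \chi_{\mathbf{s}}$, together with the Hermiticity of $f$ (so that $\hat{f}_{\mathbf{s}} \in \mathbb{R}$), a direct calculation shows that the outcome $0$ occurs with probability exactly $\frac{1}{2}(1 + \hat{f}_{\mathbf{s}})$. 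Setting $Y = 1 - 2b$ where $b \in \{0,1\}$ is the measurement outcome, one has $\mathbb{E}[Y] = \hat{f}_{\mathbf{s}}$ and $|Y| \le 1$; averaging $m = C\eta^{-2}\log(1/\delta)$ independent copies and applying Hoeffding yields $|\bar Y - \hat{f}_{\mathbf{s}}| \le \eta$ with probability at least $1-\delta$, using exactly $m$ queries in total.

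The real work is just the Hadamard-test calculation above, which I don't expect to pose a genuine obstacle. The only conceptual subtlety is the appeal to controlled-$f$; if one insisted on uncontrolled access one could instead prepare $|f\rangle = (f \otimes \mathbb{I})|\Phi\rangle$ and measure in the Pauli basis $\{|\mathbf{t}\rangle\}$ to estimate the unsigned quantity $|\hat{f}_{\mathbf{s}}|^2$ to within $\eta$ using $O(\eta^{-2}\log(1/\delta))$ queries, which is typically enough for Goldreich--Levin-style applications but weaker than the signed estimate as literally stated in the lemma.
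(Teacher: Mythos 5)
Your proposal is correct and is essentially the paper's own proof: both run a Hadamard test on $\ket{\Phi}$ comparing $f$ against $\chi_{\mathbf{s}}$ (the paper places $\chi_{\mathbf{s}}$ on the $\ket{0}$ branch and $f$ on the $\ket{1}$ branch rather than composing $\chi_{\mathbf{s}}f$ on a single branch, which gives the same outcome statistics $\frac12 \pm \frac12 \hat{f}_{\mathbf{s}}$), and both finish with Hoeffding's inequality. The paper also relies on controlled-$f$, noting it can be simulated from plain $f$ using ancillas and controlled-{\sc swap}s, so your convention about controlled access matches the paper's.
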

\begin{proof}
To prove this lemma we need access to the controlled-$f$ quantum
boolean function $U_f = |0\rangle_C\langle 0|\otimes \mathbb{I}_A +
|1\rangle_C \langle1|\otimes f_A$. (This can be easily implemented
using $f$ alone by adjoining $n$ ancilla qubits and pre-applying a
controlled-{\sc swap} operation between the main qubits and ancilla
qubits, applying $f$ to the ancillas and post-applying a
controlled-{\sc swap} operation and then discarding the ancillas.)

The method takes place on a register consisting of the system
$\mathbf{A}$ and a copy of the system ${\mathbf{A}'}$ and a
control qubit. It proceeds as follows.
\begin{enumerate}
\item Prepare the control+system+copy in
$|0\rangle_C|\Phi\rangle_{\mathbf{A}{\mathbf{A}'}}$.
\item Apply a Hadamard operation $H = \frac{1}{\sqrt2}\left(\begin{smallmatrix} 1 & 1 \\ 1 & -1\end{smallmatrix}\right)$ to $C$: the system is now in the state
\begin{equation}
\frac{1}{\sqrt{2}}|0\rangle_C|\Phi\rangle_{\mathbf{A}{\mathbf{A}'}}
+
\frac{1}{\sqrt{2}}|1\rangle_C|\Phi\rangle_{\mathbf{A}{\mathbf{A}'}}.
\end{equation}
\item Apply the controlled-$\chi_{\mathbf{s}}$ operation
 $V_{\chi_{\mathbf{s}}}
= |0\rangle_C\langle0|\otimes \chi_{\mathbf{s}}+ |1\rangle_C\langle
1|\otimes \mathbb{I}$ (implemented in the same way as $U_f$, above)
to yield
\begin{equation}
\frac{1}{\sqrt{2}}|0\rangle_C\chi_{\mathbf{s}}|\Phi\rangle_{\mathbf{A}{\mathbf{A}'}}
+
\frac{1}{\sqrt{2}}|1\rangle_C|\Phi\rangle_{\mathbf{A}{\mathbf{A}'}}.
\end{equation}
\item Next apply $U_f$ to give
\begin{equation}
\frac{1}{\sqrt{2}}|0\rangle_C\chi_{\mathbf{s}}|\Phi\rangle_{\mathbf{A}{\mathbf{A}'}}
+
\frac{1}{\sqrt{2}}|1\rangle_Cf|\Phi\rangle_{\mathbf{A}{\mathbf{A}'}}.
\end{equation}
\item Apply a Hadamard operation once again to $C$. The system is now in the
state
\begin{equation}
\frac{1}{2}|0\rangle_C(\chi_{\mathbf{s}}|\Phi\rangle_{\mathbf{A}{\mathbf{A}'}}+f|\Phi\rangle_{\mathbf{A}{\mathbf{A}'}})
+
\frac{1}{2}|1\rangle_C(\chi_{\mathbf{s}}|\Phi\rangle_{\mathbf{A}{\mathbf{A}'}}-f|\Phi\rangle_{\mathbf{A}{\mathbf{A}'}}).
\end{equation}
\item Now measure the control qubit in the computational basis. This
gives ``0'' with probability
\begin{equation}
\mathbb{P}[0] = \langle\Phi|\frac{(\chi_{\mathbf{s}} +
f)^2}{4}|\Phi\rangle = \frac12 +
\frac{1}{2\cdot2^n}\tr(\chi_{\mathbf{s}} f) = \frac12 + \frac{1}{2}
\hat{f}_{\mathbf{s}}
\end{equation}
and ``1'' with probability
\begin{equation}
\mathbb{P}[1] = \langle\Phi|\frac{(\chi_{\mathbf{s}} -
f)^2}{4}|\Phi\rangle = \frac12 -
\frac{1}{2\cdot2^n}\tr(\chi_{\mathbf{s}} f) = \frac12 - \frac{1}{2}
\hat{f}_{\mathbf{s}}.
\end{equation}
\end{enumerate}
An application of Hoeffding's inequality yields the desired result.
\end{proof}

\begin{remark}
	Note that one may improve the performance of the procedures involved the proofs of Proposition~\ref{prop:robust}  and Lemma~\ref{lem:fcoeffest} by exploiting quantum amplitude amplification. This achieves a square-root improvement of the dependence on $\epsilon$ and $\eta$, respectively.
\end{remark}

\subsection{The quantum Goldreich-Levin algorithm}
In this subsection we describe the quantum Goldreich-Levin
algorithm.

\begin{theorem}[quantum Goldreich-Levin]\label{thm:qgl}
Given oracle access to a unitary operator $f$ on $n$ qubits and its adjoint $f^\dag$, and given
$\gamma, \delta > 0$, there is a $\poly\left(n,
\frac{1}{\gamma}\right)\log\left(\frac{1}{\delta}\right)$-time
algorithm which outputs a list $L = \{\mathbf{s}_1, \mathbf{s}_2,
\ldots, \mathbf{s}_m\}$ such that with probability $1-\delta$: (1)
if $|\hat{f}_{\mathbf{s}}| \ge \gamma$, then $\mathbf{s} \in L$; and
(2) for all $\mathbf{s} \in L$, $|\hat{f}_{\mathbf{s}}| \ge \gamma/2$.
\end{theorem}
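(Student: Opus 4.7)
The plan is to combine two quantum primitives: \emph{Fourier sampling}, which efficiently generates candidate large Fourier coefficients, and Lemma~\ref{lem:fcoeffest}, which verifies and prunes them. The starting observation, already implicit in the preceding stabilizer-learning proposition, is that the state $\ket{f} \equiv f \otimes \1 \ket{\Phi}$ admits the Fourier expansion $\ket{f} = \sum_{\mathbf{s}} \hat{f}_{\mathbf{s}}\ket{\mathbf{s}}$ in the orthonormal basis $\{\ket{\mathbf{s}}\}$. A joint Bell-basis measurement on each of the $n$ pairs $(A_j, A'_j)$ therefore returns $\mathbf{s}$ with probability $|\hat{f}_{\mathbf{s}}|^2$, at a cost of a single query to $f$ and $O(n)$ gates.

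The algorithm has two phases. In the sampling phase, perform $m = \Theta\bigl(\gamma^{-2}\log(1/(\gamma\delta))\bigr)$ independent rounds of Fourier sampling and collect the outcomes into a candidate set $C$ of size at most $m$. In the pruning phase, for each $\mathbf{s} \in C$ invoke Lemma~\ref{lem:fcoeffest} to estimate $|\hat{f}_{\mathbf{s}}|$ to within $\gamma/4$ with failure probability $\delta/(2|C|)$, and place $\mathbf{s}$ in the output list $L$ iff the estimate is at least $3\gamma/4$. By the triangle inequality any $\mathbf{s}$ with $|\hat{f}_{\mathbf{s}}| \ge \gamma$ that has reached $C$ is retained (giving property~(1)), while every $\mathbf{s}\in L$ automatically satisfies $|\hat{f}_{\mathbf{s}}| \ge \gamma/2$ (giving property~(2)).

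Correctness of the sampling phase follows from Parseval: there are at most $1/\gamma^2$ strings with $|\hat{f}_{\mathbf{s}}| \ge \gamma$, and each such string is missed by a single Fourier sample with probability at most $1 - \gamma^2 \le e^{-\gamma^2}$, hence missed after $m$ samples with probability at most $e^{-\gamma^2 m}$. A union bound over the at most $1/\gamma^2$ relevant strings bounds the overall sampling failure probability by $\delta/2$ for the chosen $m$. Since $|C| \le m$, the pruning phase invokes Lemma~\ref{lem:fcoeffest} polynomially many times, each at cost $O\bigl(\gamma^{-2}\log(|C|/\delta)\bigr)$ queries, and the whole algorithm runs in time $\poly(n, 1/\gamma)\log(1/\delta)$.

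The main obstacle is that Lemma~\ref{lem:fcoeffest} is stated for quantum boolean (hence Hermitian) $f$, whereas here $f$ is only unitary, so $\hat{f}_{\mathbf{s}}$ may be complex. This is precisely what the $f^\dag$ oracle is for: writing $f = g + ih$ with $g = (f + f^\dag)/2$ and $h = (f - f^\dag)/(2i)$ both Hermitian, the real and imaginary parts of $\hat{f}_{\mathbf{s}}$ coincide with the (real) Fourier coefficients $\hat{g}_{\mathbf{s}}$ and $\hat{h}_{\mathbf{s}}$. Controlled-$g$ and controlled-$h$ can be simulated using a constant number of controlled-$f$ and controlled-$f^\dag$ calls, so the Hadamard-test routine of Lemma~\ref{lem:fcoeffest} estimates each part to accuracy $\gamma/(4\sqrt{2})$, giving $|\hat{f}_{\mathbf{s}}|$ to accuracy $\gamma/4$ as required.
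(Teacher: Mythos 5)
Your proposal is correct in substance, but it takes a genuinely different route from the paper. The paper's proof is a Kushilevitz--Mansour-style branch-and-bound over \emph{prefixes} of Pauli strings: it introduces the restricted operators $F_{\mathbf{s};I}$ and the weight $W(\mathcal{S})$ of an indicator string, shows (Proposition~\ref{prop:weightest}) that $W(\mathcal{S})$ can be estimated to $\pm\gamma^2$ with $O(\gamma^{-4}\log(1/\delta))$ queries to $f$ and $f^\dag$ via a Hadamard test on maximally entangled states, and then (Algorithm~\ref{alg:qgl}) extends surviving prefixes one qubit at a time, keeping the list of size $O(1/\gamma^2)$ by Lemma~\ref{lem:fourchunks}, for $O(n/\gamma^2)$ estimations in total. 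You instead sample directly from the distribution $|\hat{f}_{\mathbf{s}}|^2$ by Bell-measuring the Choi state $f\otimes\1\ket{\Phi}$ (one query per sample) and then verify the candidates with Lemma~\ref{lem:fcoeffest}. This is essentially the Adcock--Cleve/Fourier-sampling strategy lifted to the operator level; it is simpler (no recursion, no restricted operators, no weight-estimation primitive) and needs $f^\dag$ only in the verification step, whereas the paper's route imports the classical KM machinery wholesale and yields the prefix-weight estimator as a tool of possible independent use. Your sampling analysis (at most $1/\gamma^2$ heavy strings by Parseval, miss probability $\le\gamma^{-2}e^{-\gamma^2 m}$, thresholds $3\gamma/4$ and accuracy $\gamma/4$ for the two output properties) is sound.

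Two small repairs. First, the sentence about simulating ``controlled-$g$'' and ``controlled-$h$'' is not literally meaningful: $g=(f+f^\dag)/2$ and $h=(f-f^\dag)/(2i)$ are Hermitian but not unitary. What you actually need is that the Hadamard test of Lemma~\ref{lem:fcoeffest}, run with controlled-$f$, already outputs ``0'' with probability $\tfrac12+\tfrac12\operatorname{Re}\hat{f}_{\mathbf{s}}$ for \emph{any} unitary $f$ (the computation only uses $f^\dag f=\1$), and inserting a phase gate on the control qubit (or repeating with controlled-$f^\dag$ and combining) gives $\operatorname{Im}\hat{f}_{\mathbf{s}}$; estimating each part to $\gamma/(4\sqrt{2})$ then gives $|\hat{f}_{\mathbf{s}}|$ to $\gamma/4$ as you say. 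Second, as written your query count is quadratic in $\log(1/\delta)$: you perform $|C|=\Theta(\gamma^{-2}\log(1/(\gamma\delta)))$ estimations, each costing $\Theta(\gamma^{-2}\log(|C|/\delta))$ queries. To match the stated $\poly(n,1/\gamma)\log(1/\delta)$ bound, deduplicate the samples and keep only strings occurring at least $\gamma^2 m/2$ times: by a Chernoff bound every string with $|\hat{f}_{\mathbf{s}}|\ge\gamma$ survives with the required confidence, while at most $2/\gamma^2$ distinct strings can pass the threshold, so the pruning phase needs only $O(1/\gamma^2)$ estimations. With these cosmetic fixes your argument is complete.
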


This quantum algorithm can be understood as a kind of branch and
bound algorithm: we initially assume that the set $S_n$ of all $4^n$
strings $\mathbf{s}$ contributes significantly to the Fourier
expansion of $f$. We then partition this set into four equal chunks
and efficiently estimate (via Proposition~\ref{prop:weightest}) the
total weight of the Fourier expansion on each of these chunks. We
then throw away the chunks with low weight and repeat the process by
successively partitioning the remaining chunks into four, etc. The
reason the total number of remaining chunks doesn't blow
up exponentially is because of Lemma~\ref{lem:fourchunks}.

\begin{definition}
Let $f$ be a unitary operator on $n$ qubits. Let $I\subset [n]$. For any
$\mathbf{s}\in\{0,1,2,3\}^{|I|}$ with $S \equiv
\supp(\mathbf{s})\subset I$ define
\begin{equation}
F_{\mathbf{s}; I} \equiv
\frac{1}{2^{|I|}}\tr_{I}(\chi_{\mathbf{s}}f).
\end{equation}
\end{definition}

\begin{lemma}
Let $f$ be a unitary operator on $n$ qubits, then
\begin{equation}
\frac{1}{2^{n-|I|}}\tr(\chi_{\mathbf{t}}F_{\mathbf{s}; I}) =
\hat{f}_{\mathbf{s}\cup\mathbf{t}},
\end{equation}
for any $\mathbf{t}\in\{0,1,2,3\}^{n-|I|}$ with $T\equiv
\supp(\mathbf{t})\subset I^c$, where $I^c$ denotes the complement of the set $I$
and $\mathbf{s}\cup\mathbf{t}$ denotes concatenation, i.e.,
\begin{equation}
[\mathbf{s}\cup\mathbf{t}]_j =
\begin{cases}s_j, &\quad j\in I \\ t_j, &\quad j\not\in
I.\end{cases}
\end{equation}
\end{lemma}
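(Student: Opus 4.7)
The plan is to unfold the two definitions and collapse the inner partial trace into a full trace using the standard identity that, for any operator $X$ supported on $I^c$ and any operator $Y$ on the full system, $\tr\bigl(X\,\tr_I(Y)\bigr) = \tr\bigl((\1_I\otimes X)\,Y\bigr)$.

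More concretely, first I would substitute the definition of $F_{\mathbf{s};I}$ into the left-hand side to obtain
\begin{equation*}
\frac{1}{2^{n-|I|}}\tr\!\bigl(\chi_{\mathbf{t}} F_{\mathbf{s};I}\bigr) = \frac{1}{2^{n-|I|}}\cdot\frac{1}{2^{|I|}}\tr\!\bigl(\chi_{\mathbf{t}}\,\tr_I(\chi_{\mathbf{s}} f)\bigr) = \frac{1}{2^n}\tr\!\bigl(\chi_{\mathbf{t}}\,\tr_I(\chi_{\mathbf{s}} f)\bigr),
\end{equation*}
which already has the correct normalisation prefactor $2^{-n}$ needed to match $\hat f_{\mathbf{s}\cup\mathbf{t}}$.

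Next I would observe that $\chi_{\mathbf{t}}$ is supported on $I^c$ (since $\supp(\mathbf{t})\subset I^c$), so by the partial-trace identity above,
\begin{equation*}
\tr\!\bigl(\chi_{\mathbf{t}}\,\tr_I(\chi_{\mathbf{s}} f)\bigr) = \tr\!\bigl((\1_I\otimes \chi_{\mathbf{t}})\,(\chi_{\mathbf{s}}\otimes \1_{I^c})\,f\bigr) = \tr\!\bigl((\chi_{\mathbf{s}}\otimes \chi_{\mathbf{t}})\,f\bigr),
\end{equation*}
where we used that $\chi_{\mathbf{s}}$ is supported on $I$ so it commutes with $\chi_{\mathbf{t}}$ and the two Pauli factors combine on disjoint tensor factors.

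Finally, by the concatenation rule for Pauli strings on disjoint supports, $\chi_{\mathbf{s}}\otimes\chi_{\mathbf{t}} = \chi_{\mathbf{s}\cup\mathbf{t}}$ (up to the obvious reordering of tensor factors that interleaves the coordinates in $I$ and $I^c$), so the right-hand side equals $\frac{1}{2^n}\tr(\chi_{\mathbf{s}\cup\mathbf{t}}\,f) = \langle \chi_{\mathbf{s}\cup\mathbf{t}},f\rangle = \hat f_{\mathbf{s}\cup\mathbf{t}}$, as desired. There is no real obstacle here — the only thing to be careful about is that $\mathbf{s}$ and $\mathbf{t}$ have disjoint supports (hence their Pauli strings commute and tensor together cleanly into $\chi_{\mathbf{s}\cup\mathbf{t}}$) and that $\chi_{\mathbf{t}}$ is localised in $I^c$ so the partial trace identity applies.
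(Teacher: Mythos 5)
Your proof is correct and is essentially the argument the paper gives: the paper simply packages the same computation as the Fourier expansion of $F_{\mathbf{s};I}$ over the qubits in $I^c$, whose coefficients are obtained by exactly the partial-trace collapse you invoke, namely $\tr\bigl(\chi_{\mathbf{t}}\,\tr_I(\chi_{\mathbf{s}}f)\bigr)=\tr\bigl((\chi_{\mathbf{s}}\otimes\chi_{\mathbf{t}})f\bigr)$ for $\supp(\mathbf{t})\subset I^c$. Your bookkeeping of the normalisation factors $2^{-|I|}$ and $2^{-(n-|I|)}$ and of the disjoint supports is accurate, so nothing further is needed.
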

\begin{proof}
Note that
\begin{equation}
\begin{split}
F_{\mathbf{s}; I} &= \frac{1}{2^{n-|I|}}\sum_{\supp(\mathbf{t})\subset I^c} \tr(\chi_{\mathbf{t}} F_{\mathbf{s};I})\chi_{\mathbf{t}}  \\
&= \frac{1}{2^{n}}\sum_{\supp(\mathbf{t})\subset I^c}
\tr(\chi_{\mathbf{s}}\otimes\chi_{\mathbf{t}}f)\chi_{\mathbf{t}},
\end{split}
\end{equation}
from which the result follows.
\end{proof}

\begin{lemma}
Let $f$ be a unitary operator on $n$ qubits. Then
\begin{equation}
\frac{1}{2^{n-|I|}}\tr(F_{\mathbf{s}; I}^\dag F_{\mathbf{s}; I}) =
\sum_{\mathbf{t} \,|\, t_j = s_j, \forall j\in I} |\hat{f}_{\mathbf{t}}|^2.
\end{equation}
\end{lemma}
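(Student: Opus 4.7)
The plan is to apply the preceding lemma to express $F_{\mathbf{s};I}$ as a Fourier expansion over the qubits in $I^c$, and then invoke Parseval's relation on the $(n-|I|)$-qubit space to equate the Hilbert-Schmidt norm with a sum of squared Fourier coefficients.

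More concretely, first I would note that $F_{\mathbf{s};I}$ lives in $\mathcal{B}((\mathbb{C}^2)^{\otimes(n-|I|)})$, so it can be expanded in the Pauli basis for those qubits as
\begin{equation*}
F_{\mathbf{s}; I} = \sum_{\mathbf{t}\,:\,\supp(\mathbf{t})\subset I^c} c_{\mathbf{t}}\, \chi_{\mathbf{t}},
\end{equation*}
where the coefficients are given by the normalised Hilbert-Schmidt inner product on $n-|I|$ qubits, namely $c_{\mathbf{t}} = \frac{1}{2^{n-|I|}}\tr(\chi_{\mathbf{t}}^\dag F_{\mathbf{s};I})$. By the previous lemma these coefficients are precisely $\hat{f}_{\mathbf{s}\cup\mathbf{t}}$ (the Paulis are Hermitian, so $\chi_{\mathbf{t}}^\dag = \chi_{\mathbf{t}}$).

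Next I would apply Parseval's equality on the $(n-|I|)$-qubit space, using the orthonormality of the stabilizer basis with respect to the normalised Hilbert-Schmidt inner product there:
\begin{equation*}
\frac{1}{2^{n-|I|}}\tr(F_{\mathbf{s};I}^\dag F_{\mathbf{s};I}) = \sum_{\mathbf{t}\,:\,\supp(\mathbf{t})\subset I^c} |\hat{f}_{\mathbf{s}\cup\mathbf{t}}|^2.
\end{equation*}
Finally, I would observe that the map $\mathbf{t}\mapsto \mathbf{s}\cup\mathbf{t}$ (over $\mathbf{t}$ with $\supp(\mathbf{t})\subset I^c$) is a bijection onto the set of strings in $\{0,1,2,3\}^n$ whose restriction to $I$ equals $\mathbf{s}$, so the right-hand side is exactly $\sum_{\mathbf{t}\,:\,t_j = s_j\,\forall j\in I} |\hat{f}_{\mathbf{t}}|^2$, as claimed.

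I do not expect any real obstacle here: the only mildly subtle points are (i) keeping track of which normalisation of the trace appears on which factor, since the Hilbert-Schmidt inner product on $n-|I|$ qubits carries the factor $1/2^{n-|I|}$ exactly matching that in the statement, and (ii) verifying the indexing bijection between extensions of $\mathbf{s}$ and Pauli strings on $I^c$. Both are bookkeeping rather than genuine content.
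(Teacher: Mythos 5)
Your proposal is correct and follows essentially the same route as the paper: the paper's (very terse) proof is exactly Parseval's relation on the $(n-|I|)$-qubit space, with the coefficients identified as $\hat{f}_{\mathbf{s}\cup\mathbf{u}}$ via the preceding lemma, followed by the reindexing you describe. Your write-up just makes these steps explicit.
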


\begin{proof}
Consider
\begin{equation}
\frac{1}{2^{n-|I|}}\tr(F_{\mathbf{s}; I}^\dag F_{\mathbf{s}; I}) =
\sum_{\supp(\mathbf{u})\subset I^c}
|\hat{f}_{\mathbf{s}\cup\mathbf{u}}|^2 = \sum_{\mathbf{t} \,|\, t_j =
s_j, j\in I} |\hat{f}_{\mathbf{t}}|^2.
\end{equation}
\end{proof}

It is convenient to write the set $\mathcal{S}$ of all $\mathbf{t}$
such that $t_j = s_j$, $j\in I$ as an \emph{indicator string}:
\begin{definition}
Let $I\subset [n]$. The indicator string $\mathcal{S} = (s_{j_1},
s_{j_2}, \ldots, s_{j_{|I|}},
*, *, \ldots, *)$, where $s_{j_k} \in \{0, 1, 2, 3\}$, is the set
\begin{equation}
\{\mathbf{t} \,|\, t_{j_k} = s_{j_k}, j_k\in I\}.
\end{equation}
\end{definition}

\begin{definition}
The \emph{weight} $W(\mathcal{S})$ of an indicator string
$\mathcal{S}$ is
\begin{equation}
W(\mathcal{S}) = \sum_{\mathbf{t} \in \mathcal{S}}
|\fhatt|^2.
\end{equation}
\end{definition}

It turns out that we can efficiently estimate $W(\mathcal{S})$.
\begin{proposition}\label{prop:weightest}
Let $f$ be a unitary operator on $n$ qubits. Then for any indicator string
$\mathcal{S}$ it is possible to efficiently estimate the weight
$W(\mathcal{S})$ to within $\pm \gamma^2$ with probability
$1-\delta$ using
$O\left(\frac{1}{\gamma^4}\log\left(\frac1{\delta}\right)\right)$
queries to $f$ and $f^\dag$.
\end{proposition}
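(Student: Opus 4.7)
The plan is to realise $W(\mathcal{S})$ as the exact probability of a specific outcome in a single measurement on the Choi state of $f$, and then use Hoeffding's inequality to estimate that probability. The key observation is that applying $f \otimes \mathbb{I}$ to $|\Phi\rangle_{\mathbf{A}\mathbf{A}'}$ yields a state whose amplitudes in the Bell basis are precisely the Fourier coefficients of $f$. Explicitly, from $f = \sum_{\mathbf{t}} \hat{f}_\mathbf{t}\, \chi_{\mathbf{t}}$ and the orthonormal basis $|\mathbf{t}\rangle \equiv (\chi_{\mathbf{t}} \otimes \mathbb{I})|\Phi\rangle = \bigotimes_{j=1}^n (\chi_{t_j} \otimes \mathbb{I})|\phi\rangle_{A_j A_j'}$, one immediately obtains $|f\rangle = \sum_{\mathbf{t}} \hat{f}_\mathbf{t} |\mathbf{t}\rangle$, and crucially this basis factorises across the $n$ pairs $A_j A_j'$.

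The concrete procedure is then: (i) prepare $|f\rangle$ with one query to $f$; (ii) for each $j \in I$, perform a two-qubit Bell measurement on $A_j A_j'$ (implementable by a CNOT, a Hadamard, and a computational-basis measurement), obtaining an outcome $r_j \in \{0,1,2,3\}$; (iii) declare success iff $r_j = s_j$ for every $j \in I$. By orthonormality of $\{|\mathbf{t}\rangle\}$ and the tensor structure, the acceptance probability of a single trial equals
\begin{equation*}
\Pr[\text{accept}] \;=\; \sum_{\mathbf{t}:\, t_j = s_j\ \forall j \in I} |\hat{f}_\mathbf{t}|^2 \;=\; W(\mathcal{S}).
\end{equation*}

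Repeat the experiment $N$ times independently, outputting the empirical acceptance rate $\widehat{W}$. Since each trial is an independent $\mathrm{Bernoulli}(W(\mathcal{S}))$, Hoeffding gives $\Pr[\,|\widehat{W} - W(\mathcal{S})| > \gamma^2\,] \le 2 e^{-2N\gamma^4}$, so $N = O(\gamma^{-4}\log(1/\delta))$ trials suffice, matching the claimed query bound. There is really no obstacle to overcome beyond spotting the Choi-state rewriting in step (i); once that is in hand, the remaining analysis is standard quantum circuitry plus elementary concentration. In particular, only $f$ itself is needed for this implementation; the reference to $f^\dag$ in the statement presumably accommodates alternative strategies, e.g.\ an amplitude-amplification variant in the spirit of the remark following Lemma~\ref{lem:fcoeffest} that would improve the $\gamma$-dependence.
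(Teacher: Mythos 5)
Your proof is correct, but it takes a genuinely different route from the paper's. You use quantum Fourier sampling: one plain (uncontrolled) application of $f$ to half of $|\Phi\rangle_{\mathbf{A}\mathbf{A}'}$ gives the Choi state $|f\rangle = \sum_{\mathbf{t}} \hat{f}_{\mathbf{t}} |\mathbf{t}\rangle$, and a Bell-basis measurement on the pairs $A_jA_j'$ with $j \in I$ accepts (i.e.\ all outcomes equal $s_j$) with probability exactly $\sum_{\mathbf{t}\in\mathcal{S}} |\hat{f}_{\mathbf{t}}|^2 = W(\mathcal{S})$, because the basis $\{|\mathbf{t}\rangle\}$ is orthonormal and factorises across the $n$ pairs; Hoeffding then yields the stated $O(\gamma^{-4}\log(1/\delta))$ bound. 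The paper instead performs a Hadamard test: on a register $CA_1A_1'A_2A_2'BB'$ it prepares two Choi-type states, applies controlled-$(\chi_{\mathbf{s}}\otimes\chi_{\mathbf{s}})$ on one branch and controlled-$f_{A_1B}f_{A_2B}^\dag$ on the other, and shows via the restricted operator $F_{\mathbf{s};I}$ that the control qubit returns $0$ with probability $\tfrac12 + \tfrac12 W(\mathcal{S})$, again finishing with Hoeffding. Both give the same asymptotics, but yours is leaner: it needs only unconntrolled access to $f$, one query per trial rather than one each to $f$ and $f^\dag$, and no control ancilla, and it directly explains why $f^\dag$ is dispensable; it is essentially the same sampling primitive as the stabilizer-learning proposition earlier in the paper, whereas the paper's construction is the natural multi-coefficient extension of the single-coefficient Hadamard test of Lemma~\ref{lem:fcoeffest}. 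The only (harmless) difference in the analysis is a factor of two: in the paper $W(\mathcal{S})$ enters as $\tfrac12+\tfrac12 W(\mathcal{S})$, so the Bernoulli parameter must be estimated to $\pm\gamma^2/2$ rather than $\pm\gamma^2$, which does not change the asymptotic query count.
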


\begin{proof}
Our method takes place in a system consisting of one ancilla qubit
$C$, four copies, called $A_1$, $A_1'$, $A_2$, and $A_2'$, of the
qubits in $I$ and two copies, called $B$ and $B'$, of the qubits in
$I^c$. Thus the total system is $CA_1A_1'A_2A_2'BB'$. The algorithm
proceeds as follows.
\begin{enumerate}
\item Initialise the system (by applying a Hadamard on C) into
\begin{equation}
\frac{1}{\sqrt{2}} (|0\rangle +|1\rangle)_C
|\Phi\rangle_{A_1A_1'}|\Phi\rangle_{A_2A_2'}|\Phi\rangle_{BB'}.
\end{equation}
(From now on, for simplicity, we write $|\boldsymbol{\Phi}\rangle \equiv
|\Phi\rangle_{A_1A_1'}|\Phi\rangle_{A_2A_2'}|\Phi\rangle_{BB'}$.)
\item Apply the operation $U \equiv |0\rangle_C\langle0|\otimes\chi_{\mathbf{s}}\otimes\chi_{\mathbf{s}} + |1\rangle_C\langle1|\otimes \mathbb{I}_{A_1A_2}$ on $CA_1A_2$. The system is now in the state
\begin{equation}
\frac{1}{\sqrt{2}} |0\rangle_C (\chi_{\mathbf{s}}\otimes
\chi_{\mathbf{s}}|\boldsymbol{\Phi}\rangle) + \frac{1}{\sqrt{2}}
|1\rangle_C|\boldsymbol{\Phi}\rangle.
\end{equation}
\item Apply the controlled-$(f,f^\dag)$ operation $V \equiv
|0\rangle_C\langle0|\otimes \mathbb{I}_{A_1A_2B} +
|1\rangle_C\langle1|\otimes f_{A_1B}f_{A_2B}^\dag$ on $CA_1A_2B$. (This
operation is easy to implement with two applications of
controlled-$f$ and controlled-$f^\dag$ operations.) The system is now in the state
\begin{equation}
\frac{1}{\sqrt{2}} |0\rangle_C (\chi_{\mathbf{s}}\otimes
\chi_{\mathbf{s}}|\boldsymbol{\Phi}\rangle) + \frac{1}{\sqrt{2}}
|1\rangle_C(f_{A_1B}f_{A_2B}^\dag|\boldsymbol{\Phi}\rangle).
\end{equation}
\item Apply a Hadamard to the control register:
\begin{equation}
\frac{1}{2} |0\rangle_C (\chi_{\mathbf{s}}\otimes \chi_{\mathbf{s}}
+ f_{A_1B}f_{A_2B}^\dag)|\boldsymbol{\Phi}\rangle + \frac{1}{2} |1\rangle_C
(\chi_{\mathbf{s}}\otimes \chi_{\mathbf{s}} -
f_{A_1B}f_{A_2B}^\dag)|\boldsymbol{\Phi}\rangle.
\end{equation}
\item Measure the control register in the computational basis. We
get ``0'' with probability:
\begin{equation}
\begin{split}
\mathbb{P}[0] &= \frac14\langle \boldsymbol{\Phi}|(\chi_{\mathbf{s}}\otimes
\chi_{\mathbf{s}} + f_{A_2B} f_{A_1B}^\dag)(\chi_{\mathbf{s}}\otimes
\chi_{\mathbf{s}}
+ f_{A_1B} f_{A_2B}^\dag)|\boldsymbol{\Phi}\rangle\\
&= \frac{1}{2} +
\frac{1}{2}\frac{1}{2^{n+|I|}}\mbox{Re}(\tr(\chi_{\mathbf{s}}\otimes
\chi_{\mathbf{s}}f_{A_1B}f_{A_2B}^\dag)) \\
&= \frac{1}{2} + \frac{1}{2}
\left(\frac{1}{2^{n-|I|}}\tr(F_{\mathbf{s}; I}^\dag F_{\mathbf{s};
I})\right) = \frac12 +\frac12 W(\mathcal{S}),
\end{split}
\end{equation}
with a similar formula for $\mathbb{P}[1]$; the second equality follows from noting that $\langle \boldsymbol{\Phi}| M_{A_1A_2B}\otimes \mathbb{I}_{A_1'A_2'B'} | \boldsymbol{\Phi}\rangle = \frac{1}{2^{n+|I|}}\tr(M)$, for any operator $M$ on $A_1A_2B$ (the extra $|I|$ comes from the fact that $A$ appears twice). An application of
Hoeffding's inequality gives the desired result.
\end{enumerate}
\end{proof}

We now describe the quantum Goldreich-Levin algorithm.

\begin{algorithm}[H]
\caption{Quantum Goldreich-Levin algorithm} \label{alg:qgl}
\begin{algorithmic}
\STATE $L \leftarrow (*,*, \ldots, *)$ %
\FOR{$k=1$ to $n$} %
\FOR{each $\mathcal{S}\in L$, $\mathcal{S} = (s_1,
s_2, \ldots, s_{k-1}, *, *, \ldots, *)$} %
\STATE Let $\mathcal{S}_{s_k} = (s_1, s_2, \ldots, s_{k-1}, s_k,
*, *, \ldots,
*)$ then for $s_k = 0, 1, 2, 3$ estimate $W(\mathcal{S}_{s_k})$ to within $\pm
\gamma^2/4$ with probability at least $1-\delta$. %
\STATE Remove $\mathcal{S}$ from $L$. %
\STATE Add $\mathcal{S}_{s_k}$ if the estimate of
$W(\mathcal{S}_{s_k})$ is at least $\gamma^2/2$ for $s_k = 0, 1, 2,
3$.
\ENDFOR %
\ENDFOR
\end{algorithmic}
\end{algorithm}

We now analyse the algorithm. To simplify the analysis we'll assume
that all estimations are accurate. We'll remove this assumption
later.

\begin{lemma}
After 1 iteration of the algorithm, $W(\mathcal{S}) \ge
\frac{\gamma^2}{4}$ for all $\mathcal{S}\in L$.
\end{lemma}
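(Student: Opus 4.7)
The plan is essentially to unpack the single iteration of the outer \textbf{for}-loop and use the accuracy guarantee of the weight estimation. After $k=1$ the inner loop has run exactly once (on the initial indicator string $(\ast,\ast,\ldots,\ast)$), so every surviving $\mathcal{S}\in L$ has the form $\mathcal{S}_{s_1}=(s_1,\ast,\ast,\ldots,\ast)$ for some $s_1\in\{0,1,2,3\}$, and it was placed into $L$ only because the estimate produced by Proposition~\ref{prop:weightest} for $W(\mathcal{S}_{s_1})$ came back at least $\gamma^2/2$.

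From the assumption that all estimations are accurate, this estimate deviates from $W(\mathcal{S}_{s_1})$ by at most $\gamma^2/4$. Hence the true weight satisfies
\begin{equation}
W(\mathcal{S}_{s_1}) \;\ge\; \frac{\gamma^2}{2} - \frac{\gamma^2}{4} \;=\; \frac{\gamma^2}{4},
\end{equation}
which is exactly the claim. The only subtlety worth flagging is the direction of the error bound: a string survives when its estimated weight exceeds the threshold $\gamma^2/2$, so the relevant one-sided inequality is that the true weight is at least the estimate minus $\gamma^2/4$, not plus. There is no obstacle here; the lemma is a bookkeeping statement whose content lies entirely in the choice of the threshold $\gamma^2/2$ together with the precision $\pm\gamma^2/4$ in the estimation step. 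The real work was done in Proposition~\ref{prop:weightest}, which is invoked as a black box.
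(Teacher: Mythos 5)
Your proof is correct and follows the same reasoning as the paper: a string is added only when its estimated weight is at least $\gamma^2/2$, and under the accuracy assumption the estimate is within $\pm\gamma^2/4$ of the true weight, so the true weight is at least $\gamma^2/4$. Nothing further is needed.
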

\begin{proof}
All the estimates are assumed to be correct, and for all
$\mathcal{S}\in L$, $\mathcal{S}$ was entered into the list $L$
because its estimated weight was at least $\frac{\gamma^2}{2}$, and
the estimate is correct to within an additive $\frac{\gamma^2}{4}$.
\end{proof}

\begin{lemma}
At any time $|L| \le \frac{4}{\gamma^2}$.
\end{lemma}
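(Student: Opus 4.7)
The plan is to combine the lower bound on individual weights from the previous lemma with Parseval's identity for unitary operators, using that indicator strings in $L$ index disjoint collections of Fourier coefficients.

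First, I would observe the structural property of $L$ maintained by the algorithm: at every point in its execution, the indicator strings currently in $L$ are pairwise \emph{disjoint} as subsets of $\{0,1,2,3\}^n$. This holds by induction on the outer loop: initially $L$ contains only the single string $(*,*,\ldots,*)$, and whenever $\mathcal{S} = (s_1,\ldots,s_{k-1},*,\ldots,*)$ is removed, it is replaced by some subcollection of its four children $\mathcal{S}_0, \mathcal{S}_1, \mathcal{S}_2, \mathcal{S}_3$, which partition $\mathcal{S}$. Thus the collection of sets indexed by $L$ always refines the previous one and in particular remains disjoint.

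Next, by the preceding lemma (applied after the relevant iteration), every $\mathcal{S} \in L$ satisfies $W(\mathcal{S}) \ge \gamma^2/4$. Since $f$ is unitary, $\|f\|_2^2 = \frac{1}{2^n}\tr(f^\dagger f) = 1$, and Parseval's theorem gives $\sum_{\mathbf{s}} |\hat{f}_{\mathbf{s}}|^2 = 1$. Combining disjointness with this identity,
\begin{equation*}
|L| \cdot \frac{\gamma^2}{4} \le \sum_{\mathcal{S} \in L} W(\mathcal{S}) = \sum_{\mathcal{S} \in L} \sum_{\mathbf{t} \in \mathcal{S}} |\hat{f}_{\mathbf{t}}|^2 \le \sum_{\mathbf{t}} |\hat{f}_{\mathbf{t}}|^2 = 1,
\end{equation*}
which rearranges to the desired bound $|L| \le 4/\gamma^2$.

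There is no real obstacle here; the only subtlety is being careful that the bound $W(\mathcal{S}) \ge \gamma^2/4$ holds for \emph{all} $\mathcal{S}$ currently in $L$, not merely those added in the latest iteration. But this follows because once a string is added it is only removed (in favor of its children), so any surviving $\mathcal{S}$ must have passed the weight test when it was inserted, and the estimate is accurate to within $\gamma^2/4$. This argument is independent of $k$, so the bound holds at every stage of the algorithm.
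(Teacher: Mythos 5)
Your proof is correct and follows essentially the same route as the paper: the paper's one-line proof simply cites the Parseval-counting lemma (Lemma~\ref{lem:fourchunks}), and your argument is the fully spelled-out version of that count, making explicit the disjointness of the indicator strings in $L$, the lower bound $W(\mathcal{S}) \ge \gamma^2/4$ from the preceding lemma, and $\|f\|_2^2 = 1$ for unitary $f$. No issues.
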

\begin{proof}
The result follows from Lemma~\ref{lem:fourchunks}.
\end{proof}

\begin{lemma}
The quantum Goldreich-Levin algorithm requires at most a total of
$\frac{16n}{\gamma^2}$ estimations.
\end{lemma}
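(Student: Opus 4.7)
The plan is to combine the previous bound on the size of the list $L$ with the structure of Algorithm~\ref{alg:qgl}. The outer \textbf{for} loop runs $n$ times (once per index $k = 1, \ldots, n$). Inside each iteration, the inner \textbf{for} loop iterates over the current entries of $L$, and for each $\mathcal{S}\in L$ it performs exactly four weight estimations (one for each choice $s_k \in \{0,1,2,3\}$).

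Thus, in iteration $k$ of the outer loop, the total number of estimations is $4 \cdot |L|$, where $|L|$ is the size of the list at the start of that iteration. By the preceding lemma, $|L| \le 4/\gamma^2$ at every point in the execution, so each outer iteration uses at most $4 \cdot 4/\gamma^2 = 16/\gamma^2$ estimations. Summing over $n$ iterations gives the bound $16n/\gamma^2$.

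The only subtle point worth noting is that the bound $|L| \le 4/\gamma^2$ applies at the moment we read off entries of $L$ to process, not just at the end of the loop body; but since entries are only added after their parent $\mathcal{S}$ is removed, and each added $\mathcal{S}_{s_k}$ passes the weight threshold required by Lemma~\ref{lem:fourchunks}, the invariant holds throughout the inner loop as well. This is essentially a bookkeeping step with no real obstacle.
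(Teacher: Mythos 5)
Your proposal is correct and follows essentially the same argument as the paper: at most $4/\gamma^2$ list entries per outer iteration (from the preceding lemma), four estimations per entry, and $n$ outer iterations, giving $16n/\gamma^2$ in total. The extra remark about the invariant holding throughout the inner loop is just careful bookkeeping that the paper leaves implicit.
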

\begin{proof}
At each iteration there are at most $\frac{4}{\gamma^2}$ items in
the list and the algorithm needs at most $4$ estimations per
iteration. There are only $n$ iterations.
\end{proof}

\begin{lemma}
For any $\mathbf{s}$ such that $|\hat{f}_{\mathbf{s}}| \ge \gamma$,
there exists $\mathcal{S}\in L$ such that $\mathbf{s}\in
\mathcal{S}$.
\end{lemma}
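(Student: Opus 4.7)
The plan is to prove this by induction on the iteration count $k$, showing that at the end of the $k$-th iteration the indicator string $\mathcal{S}^{(k)} \equiv (s_1, s_2, \ldots, s_k, *, *, \ldots, *)$ obtained from the first $k$ coordinates of $\mathbf{s}$ lies in $L$. The base case is the initial list $L = \{(*,\ldots,*)\}$, which trivially contains $\mathcal{S}^{(0)}$.

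For the inductive step, I would observe that $\mathcal{S}^{(k)} \subset \mathcal{S}^{(k-1)}$, so since $\mathcal{S}^{(k-1)} \in L$ by the inductive hypothesis, the algorithm considers $\mathcal{S}^{(k)}$ as one of the four children and produces an estimate $\widetilde{W}(\mathcal{S}^{(k)})$ accurate to within $\pm \gamma^2/4$. The crucial step is to bound $W(\mathcal{S}^{(k)})$ from below: since $\mathbf{s} \in \mathcal{S}^{(k)}$, the sum defining the weight contains $|\hat{f}_{\mathbf{s}}|^2 \geq \gamma^2$ as one of its nonnegative terms, hence
\begin{equation*}
W(\mathcal{S}^{(k)}) \;=\; \sum_{\mathbf{t}\in\mathcal{S}^{(k)}} |\hat{f}_{\mathbf{t}}|^2 \;\geq\; |\hat{f}_{\mathbf{s}}|^2 \;\geq\; \gamma^2.
\end{equation*}
Consequently the estimate satisfies $\widetilde{W}(\mathcal{S}^{(k)}) \geq \gamma^2 - \gamma^2/4 = 3\gamma^2/4 \geq \gamma^2/2$, so $\mathcal{S}^{(k)}$ passes the threshold test and is added to $L$. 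Applying this for $k = 1, 2, \ldots, n$ shows that the fully specified string $\mathcal{S}^{(n)} = (s_1, \ldots, s_n)$ is in the final $L$, and since this is a singleton indicator string consisting of $\mathbf{s}$ itself, we have $\mathbf{s} \in \mathcal{S}^{(n)} \in L$.

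The only subtlety is that the analysis above assumed all weight estimates are accurate to within $\pm \gamma^2/4$. Since the algorithm uses $O(n/\gamma^2)$ estimations in total (by the previous lemma), applying Proposition~\ref{prop:weightest} with failure probability $\delta' = \delta \gamma^2/(16 n)$ per estimation, together with a union bound, ensures that all estimates are simultaneously accurate with probability at least $1-\delta$; this adjustment only changes the query complexity by a logarithmic factor, preserving the $\poly(n, 1/\gamma)\log(1/\delta)$ bound promised in Theorem~\ref{thm:qgl}. No step is really an obstacle here — the entire argument is an inductive pigeonholing once the key inequality $W(\mathcal{S}^{(k)}) \geq |\hat{f}_{\mathbf{s}}|^2$ is noted.
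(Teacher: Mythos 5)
Your proof is correct and follows essentially the same route as the paper: the whole argument rests on the observation that any indicator string containing $\mathbf{s}$ has weight at least $|\hat{f}_{\mathbf{s}}|^2 \ge \gamma^2$, so (with estimates accurate to $\pm\gamma^2/4$) its estimated weight clears the $\gamma^2/2$ threshold and the prefix of $\mathbf{s}$ survives every round. The paper compresses this into a single sentence (and defers the accuracy/union-bound issue to a separate paragraph, exactly as you do); your explicit induction over the $n$ iterations is just a more careful write-up of the same idea.
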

\begin{proof}
	It suffices to note that $|\hat{f}_{\mathbf{s}}|^2 \ge \gamma^2$, thus the weight of any string $\mathcal{S}$ containing $\mathbf{s}$ is greater than $\gamma^2$, hence at least once such $\mathcal{S}$ remains in the list $L$ after the first step.
\end{proof}

To remove the accuracy assumption, given $\delta > 0$, we define
$\delta' = \frac{\delta \gamma^2}{16n}$, and perform each estimation
with confidence $1-\delta'$. By the union bound, if the algorithm
performs $\frac{16n}{\gamma^2}$ estimations, they are all correct
with probability at least $1-\delta$, so the algorithm is correct
with probability at least $1-\delta$.

We now have all the ingredients to prove Theorem~\ref{thm:qgl}:
\begin{proof}[Proof of Theorem~\ref{thm:qgl}]

The total running time is dominated by the estimations. There are at
most $\frac{16n}{\gamma^2}$, and each takes
$O(\log(\frac{1}{\delta})/\gamma^2)$ samples to estimate, so the
overall running time is
$\mbox{poly}(n,\frac1\delta)\log(\frac1\delta)$.

\end{proof}
\subsection{Learning quantum dynamics}
In this subsection we show how to apply the quantum Goldreich-Levin
algorithm to learn the dynamics generated by local quantum systems.
In principle one needs an exponential number of queries to learn the
dynamics associated with a quantum system of $n$ qubits. However, if
we make the key physical assumption that the dynamics are generated
by a \emph{geometrically local} quantum system, then it turns out
that we can do much better. We'll focus, for simplicity, on
\emph{one-dimensional} quantum systems, but our results extend
pretty easily to higher dimensional systems and to any system which
is local on a graph with bounded isoperimetric dimension. There are
even some results available for quantum dynamics on general graphs,
which we leave to future works.

Before we begin we'll provide some background on quantum dynamics.
Let $\mathcal{H} \cong \C(2^{n})$ be the Hilbert space associated with a
collection of $n$ qubits. A \emph{Hamiltonian} is a Hermitian
operator $H$ on $\mathcal{H}$. The \emph{dynamics generated by $H$}
is the one-parameter family of unitary operators $U(t) = e^{itH}$.

We now need to describe what it means to \emph{learn} the dynamics
generated by a Hamiltonian $H$:
\begin{definition}
Let $U$ be a unitary operator (not necessarily a quantum boolean
function). We say that we have \emph{$(\gamma, \epsilon)$-learned}
the \emph{dynamics} of a \emph{known} Hermitian operator $M$ if, given $\gamma$ queries
of $U$, we can provide an estimate $\widetilde{U^\dag M U}$ of
$U^\dag M U$ such that $\|\widetilde{U^\dag M U} - U^\dag M U\|_2^2
\le \epsilon$ with probability greater than $1-\delta$.
\end{definition}

\begin{definition}
A \emph{one-dimensional quantum Hamiltonian} $H$ is any Hamiltonian
which can be written
\begin{equation}
H = \sum_{j=1}^{n-1} h_j
\end{equation}
with $h_j$ Hermitian, $\|h_j\|_{\infty} = O(1)$, and $\supp(h_j)
\subset \{j, j+1\}$ for $j\le n-1$.
\end{definition}

The key to our main result is the following estimate.
\begin{theorem}[Lieb-Robinson bound \cite{lieb:1972a}]
Let $H$ be a one-dimensional quantum system. Then for all $s$ and $j$
\begin{equation}
\|e^{-itH}\sigma_j^{s} e^{itH} - e^{-itH_{\Lambda}}\sigma_j^{s}
e^{itH_{\Lambda}}\|_\infty \le ce^{k|t|-v|\Lambda|},
\end{equation}
where $c$, $k$, and $v$ are constants independent of $n$, $\Lambda \subset
[n-1]$ is any contiguous subset centred on $j$, and $H_{\Lambda} = \sum_{j\in\Lambda} h_j$.
\end{theorem}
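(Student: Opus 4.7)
The plan is to reduce the stated inequality to the \emph{commutator form} of the Lieb-Robinson bound via a Duhamel-type interpolation, and then to establish the commutator bound itself. Write $\tau_t(A) := e^{-itH}A e^{itH}$ and $\tau_t^\Lambda(A) := e^{-itH_\Lambda} A e^{itH_\Lambda}$. Differentiating the interpolation $g(u) := \tau_{t-u}(\tau_u^\Lambda(\sigma_j^s))$ between $g(0)=\tau_t(\sigma_j^s)$ and $g(t)=\tau_t^\Lambda(\sigma_j^s)$ yields, after cancelling the common $\tau_{t-u}([H,\cdot])$ contribution,
\[
 \tau_t(\sigma_j^s) - \tau_t^\Lambda(\sigma_j^s) \;=\; -i\int_0^t \tau_{t-u}\bigl([\,H - H_\Lambda,\; \tau_u^\Lambda(\sigma_j^s)\,]\bigr)\,du.
\]
Because each $\tau_{t-u}$ is a unitary conjugation and so preserves $\|\cdot\|_\infty$, it follows that
\[
 \bigl\|\tau_t(\sigma_j^s) - \tau_t^\Lambda(\sigma_j^s)\bigr\|_\infty \;\le\; \int_0^{|t|} \bigl\|[\,H - H_\Lambda,\; \tau_u^\Lambda(\sigma_j^s)\,]\bigr\|_\infty\,du.
\]
The operator $H - H_\Lambda$ is the sum of at most two boundary bonds $h_k$ with $\|h_k\|_\infty = O(1)$, and each $h_k$ is supported at graph-distance at least $|\Lambda|/2$ from $j$ since $\Lambda$ is centred on $j$.

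The main work is to establish the commutator form of the Lieb-Robinson bound itself: for local operators $A,B$ on a one-dimensional chain,
\[
 \bigl\|[\tau_t^\Lambda(A),B]\bigr\|_\infty \;\le\; c\,\|A\|_\infty\|B\|_\infty\, e^{k|t| - v\,d(\supp A,\supp B)},
\]
with $c,k,v>0$ depending only on the uniform bound on $\|h_j\|_\infty$. I would prove this by the standard recursive estimate: set $C(t) := \|[\tau_t^\Lambda(A),B]\|_\infty$ and use $\frac{d}{dt}\tau_t^\Lambda(A) = i\sum_m \tau_t^\Lambda([h_m,A])$ to produce a Volterra-type integral inequality in which the right-hand side only sees those $h_m$ whose support intersects $\supp A$. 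Iterating yields a Dyson-like sum over sequences of bonds that form a path from $\supp A$ to $\supp B$; in one dimension these chains are strictly ordered, so the number of length-$\ell$ paths is $O(1)$ and the resulting series is dominated by the expected $e^{k|t|}$ factor while the shortest-path length $d(\supp A,\supp B)$ gives the exponential spatial decay. The same manipulations go through with $H$ replaced by $H_\Lambda$, with constants that do not deteriorate with $|\Lambda|$ or $n$.

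Combining the two ingredients, I apply the commutator bound to each boundary term with $A = \sigma_j^s$ and $B = h_k$, using $d(\supp A,\supp B) \ge |\Lambda|/2$, to get
\[
 \|\tau_t(\sigma_j^s) - \tau_t^\Lambda(\sigma_j^s)\|_\infty \;\le\; c'\int_0^{|t|} e^{ku - v|\Lambda|/2}\,du \;\le\; c''\,e^{k|t| - v|\Lambda|/2},
\]
which is the claimed inequality after absorbing the factor $1/2$ into $v$. The main obstacle in this plan is the commutator bound in the second paragraph: controlling the Dyson expansion rigorously requires balancing the exponential growth in $|t|$ (coming from the number of bond-terms that can be inserted in time $t$) against the exponential suppression from the length of the path, and one must choose the constants so that the geometric series converges uniformly in $n$ and in the choice of $\Lambda$. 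The one-dimensional geometry makes this combinatorial bookkeeping transparent, and indeed the same argument extends to any interaction graph of bounded isoperimetric dimension with constants depending on that dimension.
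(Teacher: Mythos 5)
The paper does not prove this statement at all: it is imported verbatim as a known result, cited to Lieb and Robinson's 1972 paper, and used as a black box in the learning-dynamics section. So there is no internal proof to compare against; what you have written is a reconstruction of the standard argument from that literature (Duhamel interpolation to reduce the difference of the two Heisenberg evolutions to an integral of a commutator, then the commutator form of the Lieb--Robinson bound proved by iterating a Volterra-type inequality into a Dyson-like sum over paths of bonds). That is indeed the right strategy, and your interpolation identity and the final assembly (distance from $j$ to the boundary of $\Lambda$ is about $|\Lambda|/2$, absorbed into $v$) are correct.

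Two points deserve tightening. First, $H-H_\Lambda=\sum_{k\notin\Lambda}h_k$ is \emph{not} literally ``at most two boundary bonds''; it contains $O(n)$ terms. What saves the step is that $\tau^\Lambda_u(\sigma_j^s)$ is exactly supported on the region $\bigcup_{m\in\Lambda}\{m,m+1\}$ acted on by $H_\Lambda$, so every $h_k$ with $k\notin\Lambda$ except the two bonds adjacent to $\partial\Lambda$ commutes with it identically; you should state this locality argument explicitly rather than mis-describe $H-H_\Lambda$. Second, the real content of the theorem is the commutator bound
\begin{equation*}
\bigl\|[\tau^\Lambda_t(A),B]\bigr\|_\infty \le c\,\|A\|_\infty\|B\|_\infty\,e^{k|t|-v\,d(\supp A,\supp B)},
\end{equation*}
which you only sketch: the iteration, the counting of bond sequences joining $\supp A$ to $\supp B$, and the uniformity of the constants in $n$ and $\Lambda$ are precisely where the work lies, and as written the proposal defers rather than completes this step. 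As a plan it is sound and matches the cited source's method; as a proof it is incomplete until that estimate is carried out.
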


We can use the Lieb-Robinson bound to establish the following
corollary
\begin{corollary}
Let $H$ be a one-dimensional quantum system. Then
\begin{equation}
\frac{1}{2^n}\|e^{-itH}\sigma_j^s e^{itH} -
e^{-itH_{\Lambda}}\sigma_j^s e^{itH_{\Lambda}}\|_2^2 \le
ce^{k|t|-v|\Lambda|},
\end{equation}
with possibly new constants $c$, $k$, and $v$.
\end{corollary}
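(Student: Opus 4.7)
The plan is to derive this as an immediate consequence of the Lieb--Robinson bound by comparing the (normalised) 2-norm to the operator norm. Let
\[
A(t,\Lambda) \equiv e^{-itH}\sigma_j^{s} e^{itH} - e^{-itH_{\Lambda}}\sigma_j^{s} e^{itH_{\Lambda}}.
\]
The Lieb--Robinson bound stated just above gives $\|A(t,\Lambda)\|_\infty \le c\, e^{k|t|-v|\Lambda|}$.

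The key observation is the standard matrix-norm inequality between the Hilbert--Schmidt and operator norms: for any operator $A$ acting on a $d$-dimensional Hilbert space with singular values $s_1\ge s_2\ge\cdots\ge s_d$,
\[
\frac{1}{d}\,\tr(A^\dagger A) \;=\; \frac{1}{d}\sum_{i=1}^{d} s_i^2 \;\le\; s_1^2 \;=\; \|A\|_\infty^2.
\]
In our setting $d=2^n$, and the left-hand side is precisely $\frac{1}{2^n}\|A\|_2^2$ (interpreting $\|A\|_2^2 = \tr(A^\dagger A)$, so that the prefactor $\tfrac{1}{2^n}$ supplies the normalisation used elsewhere in the paper).

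Applying this inequality to $A(t,\Lambda)$ and then invoking the Lieb--Robinson bound yields
\[
\frac{1}{2^n}\|A(t,\Lambda)\|_2^2 \;\le\; \|A(t,\Lambda)\|_\infty^2 \;\le\; c^{2}\,e^{2k|t|-2v|\Lambda|}.
\]
Relabelling the constants by $c' \equiv c^{2}$, $k' \equiv 2k$, $v'\equiv 2v$ (all still independent of $n$) gives the stated estimate. There is no genuine obstacle here; the only subtle point is bookkeeping the normalisation convention, and the argument is otherwise a one-line consequence of $\|\cdot\|_2 \le \sqrt{d}\,\|\cdot\|_\infty$ combined with squaring the Lieb--Robinson exponential.
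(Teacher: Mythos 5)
Your proposal is correct and follows essentially the same route as the paper: the authors likewise combine the Lieb--Robinson bound with the elementary inequality $\|M\|_2^2 \le m\|M\|_\infty^2$ (for $m \times m$ matrices, here $m = 2^n$) and absorb the squaring of the exponential into new constants. Your handling of the normalisation bookkeeping matches the convention implicit in the corollary's statement, so there is nothing to add.
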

\begin{proof}
The result is a simple application of the matrix norm inequality
\begin{equation}
\|M\|_2^2 \le m\|M\|^2_\infty
\end{equation}
for Hermitian $m\times m$ matrices $M$.
\end{proof}

\begin{proposition}
Let $t= O(\log(n))$. Then, with probability $1-\delta$ we can
$(\gamma, \epsilon)$-learn the quantum boolean functions
$\sigma_j^{s}(t)\equiv e^{-itH}\sigma_j^{s} e^{itH}$ using $\gamma
= \poly(n,1/\epsilon,\log(1/\delta))$ queries of $e^{itH}$. (Note that all queries are made to the unitary operator $U = e^{itH}$ and the pauli operators are assumed to be not evolving during the execution of the algorithm.)
\end{proposition}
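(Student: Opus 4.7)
The plan is to combine the Lieb-Robinson bound with the Fourier coefficient estimation of Lemma~\ref{lem:fcoeffest}. The key observation is that the Lieb-Robinson corollary localises $\sigma_j^s(t)$ to a region of size $O(\log n)$, which in turn caps the number of Fourier coefficients we must estimate at $\poly(n, 1/\epsilon)$.

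First I would pick a contiguous set $\Lambda \subset [n]$ centred on $j$ of size $|\Lambda| = \lceil (k|t| + \log(4c/\epsilon))/v \rceil = O(\log n + \log(1/\epsilon))$, so that the Lieb-Robinson corollary gives
\be
\left\|\sigma_j^s(t) - e^{-itH_\Lambda}\sigma_j^s e^{itH_\Lambda}\right\|_2^2 \le \epsilon/4.
\ee
Because $e^{-itH_\Lambda}\sigma_j^s e^{itH_\Lambda}$ acts as the identity outside $\Lambda$, its Fourier expansion is supported entirely on Pauli strings $\mathbf{s}'$ with $\supp(\mathbf{s}') \subset \Lambda$. Let $P$ denote the orthogonal projector (with respect to the Hilbert-Schmidt inner product) onto the subspace of $\mathcal{B}(\mathcal{H})$ spanned by these Pauli strings. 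Since $P(\sigma_j^s(t))$ is the closest operator in this subspace to $\sigma_j^s(t)$, the truncation error $\|\sigma_j^s(t) - P(\sigma_j^s(t))\|_2^2$ is also bounded by $\epsilon/4$.

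Next, for each of the at most $4^{|\Lambda|} = \poly(n, 1/\epsilon)$ Pauli strings $\mathbf{s}'$ with $\supp(\mathbf{s}') \subset \Lambda$, I would invoke Lemma~\ref{lem:fcoeffest} to estimate the Fourier coefficient $\widehat{\sigma_j^s(t)}_{\mathbf{s}'}$ to additive accuracy $\eta$ with failure probability at most $\delta' = \delta/4^{|\Lambda|}$. Each invocation needs access to controlled-$\sigma_j^s(t)$, which can be built from one application each of controlled-$e^{itH}$, controlled-$\sigma^s_j$, and controlled-$e^{-itH}$ (the controlled versions of $e^{\pm itH}$ being implemented via the ancilla-swap trick described in the proof of Lemma~\ref{lem:fcoeffest}). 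Choosing $\eta^2 = \epsilon/(4 \cdot 4^{|\Lambda|})$ ensures, via Parseval, that the estimation contribution satisfies
\be
\sum_{\supp(\mathbf{s}')\subset \Lambda}\left|\widetilde{\widehat{\sigma_j^s(t)}_{\mathbf{s}'}} - \widehat{\sigma_j^s(t)}_{\mathbf{s}'}\right|^2 \le 4^{|\Lambda|}\eta^2 = \epsilon/4,
\ee
and combining this with the truncation error via the triangle inequality in $\|\cdot\|_2$ then squaring gives $\|\widetilde{\sigma_j^s(t)} - \sigma_j^s(t)\|_2^2 \le \epsilon$.

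A union bound over the $4^{|\Lambda|}$ estimations yields overall success probability at least $1-\delta$, and the total query count is $4^{|\Lambda|} \cdot O(\eta^{-2}\log(1/\delta'))$, which is $\poly(n, 1/\epsilon, \log(1/\delta))$ since $|\Lambda| = O(\log n + \log(1/\epsilon))$. The only real obstacle is the parameter bookkeeping: $|\Lambda|$ must be large enough for the Lieb-Robinson tail to fall below $\epsilon/4$, yet small enough that $4^{|\Lambda|}$ remains polynomial in $n$ and $1/\epsilon$; this is precisely where the hypothesis $t = O(\log n)$ enters, keeping $|\Lambda|$ logarithmic and therefore the overall query count polynomial.
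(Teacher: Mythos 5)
Your proposal is correct, but it takes a genuinely different route from the paper. The paper's proof uses the Lieb-Robinson bound only to argue that the significant Fourier coefficients of $\sigma_j^s(t)$ live on strings whose support is centred on $j$ and has size $O(|t|)$, and then invokes the quantum Goldreich-Levin algorithm (Theorem~\ref{thm:qgl}) to produce the list of such coefficients, estimating each one afterwards with Lemma~\ref{lem:fcoeffest} and finishing with a union bound. You bypass Goldreich-Levin entirely: you use the Lieb-Robinson corollary to fix, in advance, a region $\Lambda$ of size $O(\log n + \log(1/\epsilon))$, observe that the best Fourier-truncated approximation supported in $\Lambda$ inherits the $\epsilon/4$ error bound, and then brute-force estimate all $4^{|\Lambda|} = \poly(n,1/\epsilon)$ coefficients with Lemma~\ref{lem:fcoeffest}, balancing the per-coefficient accuracy $\eta$ against the number of coefficients via Parseval. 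What your approach buys is a more self-contained argument with fully explicit error and query bookkeeping (the paper's reliance on Goldreich-Levin to ``output a list of them'' is left rather informal); what the paper's approach buys is adaptivity -- Goldreich-Levin only pays for coefficients that are actually large, which would matter if the localisation region were bigger while the function remained Fourier-sparse, though in the regime $t = O(\log n)$ both methods are polynomial. Two trivial bookkeeping points you may wish to note: since $\supp(h_j) \subset \{j,j+1\}$, the operator $e^{-itH_\Lambda}\sigma_j^s e^{itH_\Lambda}$ may be supported on $\Lambda$ plus one neighbouring site, so enlarge $\Lambda$ by one; and, as you observe, controlled-$\sigma_j^s(t)$ really only needs uncontrolled $e^{\pm itH}$ together with controlled-$\sigma_j^s$, since conjugation commutes with the control.
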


\begin{proof}
The Lieb-Robinson bound tells us that if $t= O(\log(n))$ then the
only significant Fourier coefficients of the
quantum boolean function $\sigma_j^{s}(t)$ are those whose support is centred on $j$ and have size 
the same order as $|t|$. Since there are at most $O(n4^{|t|}) =
O(\poly(n))$ such coefficients we can \emph{efficiently} apply the
quantum Goldreich-Levin algorithm to output a list of them. Given
this list we then use Lemma~\ref{lem:fcoeffest} to individually
estimate them to accuracy $\epsilon$. An application of the union
bound gives us the result.
\end{proof}

\begin{remark}
The operators $\sigma_j^{s}(t)$ are significant in condensed matter
physics as they represent the dynamics of applied magnetic fields.

Although it is straightforward to compute an approximation to
$\langle \psi|\sigma_j^{s}(t)|\psi\rangle$ for a \emph{given}
initial quantum state $|\psi\rangle$, when we
$(\gamma,\epsilon)$-learn $\sigma_j^{s}(t)$ we are demanding that we can, \emph{on
average}, calculate a good approximation to $\langle
\psi|\sigma_j^{s}(t)|\psi\rangle$ for arbitrary quantum states which
are randomly chosen \emph{after} the algorithm has terminated.
Notice also that the algorithm we've presented here doesn't need to
know which qubit interacts with which, just that the qubits interact
in a line. Indeed, simple modifications of the Lieb-Robinson bound
allow us to conclude that we can efficiently learn the dynamics of
qubits which interact on graphs with, eg., a finite number of
randomly placed bridges.
\end{remark}


\section{Noise and quantum hypercontractivity}

One of the most useful tools in the classical analysis of boolean
functions has been the hypercontractive (also known as
``Bonami-Gross-Beckner'') inequality \cite{beckner:1975a, bonami:1968a,
bonami:1970a, gross:1975a, nelson:1965a, nelson:1973a, rudin:1960a}.
Perhaps the most intuitive way to write down this inequality is in
terms of a \emph{noise operator} on functions, which can be defined
in two equivalent ways.

\begin{definition}
For a bit-string $x \in \{0,1\}^n$, define the distribution $y \sim_\epsilon x$ as follows: each bit of $y$ is equal to the corresponding bit of $x$ with probability $1/2+\epsilon/2$, and flipped with probability $1/2-\epsilon/2$. Then the noise operator with rate $-1 \le \epsilon \le 1$, written $T_\epsilon$, is defined
via
\begin{equation}
\label{eqn:smoothing}
(T_\epsilon f)(x) = \E_{y \sim_\epsilon x}[f(y)].
\end{equation}
Equivalently, $T_\epsilon$ may be defined by its action on Fourier coefficients, as follows.
\begin{equation}
T_\epsilon f = \sum_{\mathbf{s}\in\{0,1\}^n} \epsilon^{|\mathbf{s}|}
\fhats \chi_{\mathbf{s}}.
\end{equation}
\end{definition}

It is easy to see that noise rate $1$ leaves the function as it is, whereas noise rate
$0$ replaces the function with the constant function $\hat{f}_{\mathbf 0} \1$.
The Fourier-analytic definition given above immediately extends to the quantum setting, giving
a superoperator defined as follows.

\begin{definition}
The noise superoperator with rate\footnote{This restriction on
$\epsilon$ is necessary for the map to be completely positive
\cite{king:2001a}.} $-1/3 \le \epsilon \le 1$, written $T_\epsilon$,
is defined as follows.
\begin{equation}
T_\epsilon f = \sum_{\mathbf{s}\in\{0,1,2,3\}^n} \epsilon^{|\mathbf{s}|}
\fhats \chi_{\mathbf{s}}.
\end{equation}
\end{definition}

Perhaps surprisingly, just as the classical noise operator has an
alternative natural definition in terms of ``local smoothing'' (eqn.\ (\ref{eqn:smoothing})), its
quantum generalisation has a natural definition too, in terms of the action
of the qubit depolarising channel.

\begin{proposition}
Let $f$ be a Hermitian operator on $n$ qubits. Then, for $-1/3 \le \epsilon \le 1$,
$T_\epsilon f = \mathcal{D}^{\otimes n}_\epsilon f$, where $\mathcal{D}_\epsilon$ is the
\emph{qubit depolarising channel} with noise rate $\epsilon$, i.e.\
\begin{equation}\label{eq:depnoise}
\mathcal{D}_\epsilon (f) = \frac{(1-\epsilon)}{2} \tr (f) \1 + \epsilon
f.
\end{equation}
\end{proposition}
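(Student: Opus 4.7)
The plan is to prove the identity by checking it on the Pauli basis and then extending by linearity, using the fact that the Pauli matrices (other than $\1$) are traceless. Since both $T_\epsilon$ and $\mathcal{D}_\epsilon^{\otimes n}$ are linear maps on $\mathcal{B}(\mathcal{H})$, it suffices to show they agree on each $\chi_{\mathbf{s}} = \sigma^{\mathbf{s}}$.

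First I would compute the action of a single-qubit depolarising channel on each of the four Pauli matrices. Using $\tr(\sigma^0) = 2$ and $\tr(\sigma^j) = 0$ for $j \in \{1,2,3\}$, one finds
\begin{equation*}
\mathcal{D}_\epsilon(\sigma^0) = \tfrac{1-\epsilon}{2}\cdot 2\cdot \1 + \epsilon\,\1 = \1 = \epsilon^{|0|}\sigma^0,\qquad \mathcal{D}_\epsilon(\sigma^j) = 0 + \epsilon\,\sigma^j = \epsilon^{|j|}\sigma^j \quad (j\neq 0),
\end{equation*}
where $|s|$ denotes the indicator that $s\neq 0$, matching the weight notation from the paper. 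Thus each Pauli is an eigenoperator of $\mathcal{D}_\epsilon$ with eigenvalue $\epsilon^{|s|}$.

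Next I would tensorise. Because $\mathcal{D}_\epsilon^{\otimes n}$ acts as $\mathcal{D}_\epsilon$ on each qubit independently, and $\sigma^{\mathbf{s}} = \sigma^{s_1}\otimes\cdots\otimes\sigma^{s_n}$ is a tensor product,
\begin{equation*}
\mathcal{D}_\epsilon^{\otimes n}(\chi_{\mathbf{s}}) = \bigotimes_{j=1}^n \mathcal{D}_\epsilon(\sigma^{s_j}) = \bigotimes_{j=1}^n \epsilon^{|s_j|}\sigma^{s_j} = \epsilon^{\sum_j |s_j|}\chi_{\mathbf{s}} = \epsilon^{|\mathbf{s}|}\chi_{\mathbf{s}}.
\end{equation*}
Expanding a general Hermitian $f$ as $f = \sum_{\mathbf{s}} \hat f_{\mathbf{s}}\chi_{\mathbf{s}}$ and invoking linearity then yields
\begin{equation*}
\mathcal{D}_\epsilon^{\otimes n}(f) = \sum_{\mathbf{s}} \hat f_{\mathbf{s}}\,\epsilon^{|\mathbf{s}|}\chi_{\mathbf{s}} = T_\epsilon f,
\end{equation*}
which is the desired identity.

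There is essentially no serious obstacle here; the argument is a direct basis-by-basis check. The only subtlety worth flagging is the role of the parameter range $-1/3 \le \epsilon \le 1$: the formula $\mathcal{D}_\epsilon(f) = \tfrac{1-\epsilon}{2}\tr(f)\1 + \epsilon f$ defines a valid linear map on operators for any real $\epsilon$, and the Pauli-eigenvalue computation above goes through for every $\epsilon \in \mathbb{R}$; the restriction is needed only so that each factor $\mathcal{D}_\epsilon$ is completely positive and hence a genuine quantum channel, as noted in the footnote. Hermiticity of $f$ is preserved automatically because the Fourier coefficients $\hat f_{\mathbf{s}}$ are real and each $\chi_{\mathbf{s}}$ is Hermitian.
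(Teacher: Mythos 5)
Your proof is correct and follows essentially the same route as the paper: checking the identity on the single-qubit Paulis, tensorising to get $\mathcal{D}_\epsilon^{\otimes n}(\chi_{\mathbf{s}}) = \epsilon^{|\mathbf{s}|}\chi_{\mathbf{s}}$, and extending by linearity. The added remark about the parameter range only mattering for complete positivity is accurate and consistent with the paper's footnote.
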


\begin{proof}
To verify the claim it is only necessary to check
(\ref{eq:depnoise}) on the characters $\chi_{\mathbf{s}}$ and then
extend by linearity. Thus, since for a single qubit
$\mathcal{D}_{\epsilon}(\chi_{0}) = \chi_{0}$, and
$\mathcal{D}_{\epsilon}(\chi_{j}) = \epsilon\chi_{j}$ for $j\not=0$ we have
that
\begin{equation}
\mathcal{D}^{\otimes n}_\epsilon (\chi_{\mathbf{s}}) =
\epsilon^{|\mathbf{s}|} \chis = T_\epsilon(\chi_{\mathbf{s}}).
\end{equation}
\end{proof}
We have the following easy observations about the behaviour of the noise superoperator.

\begin{itemize}
\item $\|T_\epsilon f\|_p \le \|f\|_p$ for any $0\le \epsilon \le 1$. Follows because $T_\epsilon f$ is a convex combination of conjugations by unitaries.
\item The semigroup property $T_\delta T_\epsilon = T_{\delta \epsilon}$ is immediate.
\item If $\delta \le \epsilon$, $\|T_\delta f\|_p \le \|T_\epsilon f\|_p$ for any $p\ge 1$. Follows from the previous two properties.
\item For any $f$ and $g$, $\ip{T_\epsilon f}{g} = \ip{f}{T_\epsilon g}$. Follows from Plancherel's theorem: $\ip{T_\epsilon f}{g}=\sum_{\mathbf{s}} \widehat{T_\epsilon f}_\mathbf{s}^*\,\ghats = \sum_{\mathbf{s}} \epsilon^{|\mathbf{s}|}\fhats^* \ghats = \sum_{\mathbf{s}} \fhats^* \widehat{T_\epsilon g}_\mathbf{s} = \ip{f}{T_\epsilon g}$.
\end{itemize}

We are now ready to state the quantum hypercontractive inequality. By the identification of the noise superoperator with a tensor product of qubit depolarising channels, this inequality is really a statement about the properties of this channel, and can also be seen as a generalisation of a hypercontractive inequality of Carlen and Lieb \cite{carlen:1993a}.

\begin{theorem}\label{thm:hypercontractivity}
Let $f$ be a Hermitian operator on $n$ qubits and assume that $1 \le p \le 2 \le q \le \infty$. Then, provided that
\begin{equation}
\epsilon \le \sqrt{\frac{p-1}{q-1}},
\end{equation}
we have
\begin{equation}
\|T_\epsilon f\|_q \le \|f\|_p.
\end{equation}
\end{theorem}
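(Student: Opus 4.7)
The strategy is to reduce the $n$-qubit inequality to a matrix-coefficient strengthening of the single-qubit case: the bare single-qubit hypercontractivity is handled by a direct eigenvalue calculation, and tensorisation then follows from iterated application combined with contractivity of the depolarising channel in the $p$-norm.

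For the single-qubit case ($n=1$) with scalar coefficients, Hermiticity forces $f = a_0 \sigma^0 + a_1 \sigma^1 + a_2 \sigma^2 + a_3 \sigma^3$ with $a_i \in \mathbb{R}$. Because $f$ is a $2\times 2$ Hermitian matrix its eigenvalues are $a_0 \pm r$ where $r = \sqrt{a_1^2+a_2^2+a_3^2}$, and $T_\epsilon f = \mathcal{D}_\epsilon f$ has eigenvalues $a_0 \pm \epsilon r$. The normalised Schatten $p$-norm of $f$ is therefore $(\tfrac12(|a_0+r|^p + |a_0-r|^p))^{1/p}$, with the analogous expression for $\|T_\epsilon f\|_q$ after replacing $r$ by $\epsilon r$. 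The desired bound is then exactly the classical Bonami--Beckner two-point inequality, which holds under $\epsilon \le \sqrt{(p-1)/(q-1)}$.

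To pass from $n=1$ to arbitrary $n$, I would establish the matrix-coefficient analogue of the single-qubit inequality: for every $n$-qubit Hermitian $F$, applying the depolarising channel on the first qubit alone satisfies $\|\mathcal{D}_\epsilon^{(1)} F\|_q \le \|F\|_p$ under the same condition on $\epsilon$. Granted this, factorise $T_\epsilon = \mathcal{D}_\epsilon^{(1)} \circ \cdots \circ \mathcal{D}_\epsilon^{(n)}$ and set $g = \mathcal{D}_\epsilon^{(2)} \circ \cdots \circ \mathcal{D}_\epsilon^{(n)} f$, which is still Hermitian. Applying the matrix-coefficient bound on the first qubit gives $\|T_\epsilon f\|_q = \|\mathcal{D}_\epsilon^{(1)} g\|_q \le \|g\|_p$. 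Each remaining $\mathcal{D}_\epsilon^{(i)}$ is a convex combination of unitary conjugations for $\epsilon \ge -1/3$, explicitly $\mathcal{D}_\epsilon f = \tfrac{1+3\epsilon}{4}f + \tfrac{1-\epsilon}{4}\sum_{s=1}^{3}\sigma^s f \sigma^s$, and is therefore a $p$-norm contraction, so iterating these yields $\|g\|_p \le \|f\|_p$ and closes the argument.

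The main obstacle is the matrix-coefficient single-qubit inequality itself. Writing $F = \sigma^0 \otimes F_0 + \sigma^1 \otimes F_1 + \sigma^2 \otimes F_2 + \sigma^3 \otimes F_3$ with $F_s$ Hermitian on the remaining $n-1$ qubits, the bound compares the $q$-norm of $\sigma^0 \otimes F_0 + \epsilon \sum_{s=1}^{3}\sigma^s \otimes F_s$ to the $p$-norm of the unscaled operator. When the $F_s$ are scalars, the eigenvalue argument of the preceding paragraph still trivialises the problem; when they are noncommuting operators, it does not, since one can no longer collapse the three non-identity components into a single magnitude. The route I would take is Gross's semigroup / log-Sobolev method: parametrise $\mathcal{D}_\epsilon = e^{-t\mathcal{L}}$ with $\epsilon = e^{-t}$ and differentiate $\log \|\mathcal{D}_\epsilon^{(1)} F\|_{q(t)}$ along the critical hypercontractive curve $q(t) - 1 = (p-1)e^{2t}$; non-positivity of this derivative is equivalent to a matrix log-Sobolev inequality for the single-qubit depolarising channel, which one then establishes by a direct two-point calculation on the $2\times 2$ block structure of $F$, exploiting operator-convexity of $F \mapsto \tr(|F|^p)$ in the style of Carlen--Lieb.
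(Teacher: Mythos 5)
Your base case (diagonalise and invoke the Bonami--Gross--Beckner two-point inequality) matches the paper, but the tensorisation step rests on a lemma that is false. You propose to prove that for \emph{every} $n$-qubit Hermitian $F$ one has $\|\mathcal{D}_\epsilon^{(1)} F\|_q \le \|F\|_p$, i.e.\ hypercontractivity of the depolarising channel applied to a single qubit, with both sides measured in the global normalised Schatten norms. Take $F = \1 \otimes P$ with $P$ a rank-one projector on the remaining $n-1$ qubits: then $\mathcal{D}_\epsilon^{(1)} F = F$, while $\|F\|_q = 2^{-(n-1)/q} > 2^{-(n-1)/p} = \|F\|_p$ whenever $q>p$ and $n\ge 2$. (The same failure occurs classically: noise on one coordinate cannot improve integrability in the coordinates it does not touch.) Consequently the final paragraph of your proposal --- the log-Sobolev/semigroup argument on the first qubit's $2\times 2$ block structure --- is an attempt to prove a false statement. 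Restricting $F$ to operators of the special form $g = \mathcal{D}_\epsilon^{(2)}\circ\cdots\circ\mathcal{D}_\epsilon^{(n)} f$ does not rescue the plan: for such $F$ the claim is essentially the full theorem again, and an argument that only sees the first qubit's block structure has no access to the noise already applied to the other qubits, which is precisely what must supply the gain. Relatedly, your scheme uses the hypercontractive jump from $p$ to $q$ only once and mere $p$-norm contractivity elsewhere; no such scheme can work, since the ratio $\|f\|_q/\|f\|_p$ can grow with $n$.

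The missing ingredient is a mechanism for handling operator-valued coefficients when inducting over qubits. Classically this role is played by Minkowski's inequality; the paper notes that the naive quantum substitute (multiplicativity of $q\rightarrow p$ norms under tensor products of channels) is false in general by the Hayden--Winter and Hastings counterexamples, so the argument must be specific to the depolarising channel. The paper's route is: first reduce to $p=2$ by duality (H\"older plus self-adjointness of $T_\epsilon$) and the semigroup law $T_\delta T_\epsilon = T_{\delta\epsilon}$; then prove $\|T_\epsilon f\|_{1+1/\epsilon^2} \le \|f\|_2$ by induction on $n$, writing $f$ as a $2\times 2$ block matrix with respect to the first qubit, assuming $f \ge 0$ without loss of generality, and applying King's non-commutative Hanner inequality (valid for $q\ge 2$ and positive block matrices) to bound $\|T_\epsilon f\|_q$ by the $q$-norm of the $2\times 2$ numerical matrix of block norms; the inductive hypothesis is then applied entrywise, the scalar two-point inequality is applied to that $2\times 2$ matrix, and a matrix Cauchy--Schwarz estimate closes the computation. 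Your proposal needs a replacement for this step before the $n>1$ case can be claimed.
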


Just as in the classical case, the proof of this inequality will
involve two steps: first, a proof for $n=1$, and then an inductive
step to extend to all $n$. The proof of the base case is essentially the same as in the classical setting.
In the classical proof, the inductive step uses a quite general tensor product argument. One might hope that this argument extended to the quantum setting. This would be true if it held that, for any channels (superoperators) $C$ and $D$ on $n$ qubits,
\be \| C \otimes D \|_{q \rightarrow p} \le \| C \|_{q \rightarrow
p} \|D\|_{q \rightarrow p}, \ee
where we define the $q\rightarrow p$ norm as
\be \| C \|_{q \rightarrow p} = \sup_f \frac{\| C f \|_p}{\|f\|_q}.
\ee
However, this most general statement is actually {\em false}, as it would imply the so-called ``maximal output $p$-norm multiplicativity conjecture'' in the case $q=1$, to which counterexamples have recently been found by Hayden and Winter \cite{hayden:2008b} for all $p>1$ and for $p=1$ by Hastings \cite{hastings:2009a}. Our proof must therefore be specific to the depolarising channel, and will turn out to rely on a non-commutative generalisation of Hanner's inequality recently proven by King \cite{king:2003a}, rather than the Minkowski inequality used in the classical proof. In fact, a corollary of our result is the proof of multiplicativity of the maximum output $q \rightarrow p$ norm for the depolarising channel for certain values of $q$ and $p$.

Before we begin the proof of Theorem~\ref{thm:hypercontractivity} in
earnest, we will need some subsidiary lemmata.

\begin{lemma}
\label{lem:basecase}
Let $f$ be a single qubit Hermitian operator and let $1 \le p \le q \le \infty$. Then, provided that
\begin{equation}
\epsilon \le \sqrt{\frac{p-1}{q-1}}
\end{equation}
we have
\begin{equation}\label{eq:1qubithc}
\|T_\epsilon f\|_q \le \|f\|_p.
\end{equation}
\end{lemma}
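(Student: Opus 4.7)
The plan is to reduce this single-qubit inequality to the classical two-point Bonami-Beckner inequality, which is the standard base case of the classical hypercontractive proof.

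First, I would parametrise: since $f$ is a Hermitian single-qubit operator, it can be written as
\begin{equation*}
f = a\,\sigma^0 + b_1\sigma^1 + b_2\sigma^2 + b_3\sigma^3, \qquad a, b_1, b_2, b_3 \in \mathbb{R}.
\end{equation*}
Setting $b = \sqrt{b_1^2+b_2^2+b_3^2}$, the standard fact that any $2\times 2$ traceless Hermitian matrix $\vec b\cdot\vec\sigma$ has eigenvalues $\pm b$ gives that the eigenvalues of $f$ are $a\pm b$, so
\begin{equation*}
\|f\|_p^p = \tfrac{1}{2}\bigl(|a+b|^p + |a-b|^p\bigr).
\end{equation*}
The noise superoperator acts diagonally on the Pauli basis with $T_\epsilon(\sigma^0)=\sigma^0$ and $T_\epsilon(\sigma^j)=\epsilon\sigma^j$ for $j=1,2,3$, so
\begin{equation*}
T_\epsilon f = a\,\sigma^0 + \epsilon(b_1\sigma^1+b_2\sigma^2+b_3\sigma^3),
\end{equation*}
which has eigenvalues $a\pm\epsilon b$, giving $\|T_\epsilon f\|_q^q = \tfrac{1}{2}(|a+\epsilon b|^q + |a-\epsilon b|^q)$.

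So (\ref{eq:1qubithc}) is equivalent to the scalar inequality
\begin{equation*}
\left(\tfrac{1}{2}|a+\epsilon b|^q + \tfrac{1}{2}|a-\epsilon b|^q\right)^{1/q} \le \left(\tfrac{1}{2}|a+b|^p + \tfrac{1}{2}|a-b|^p\right)^{1/p},
\end{equation*}
which is exactly the classical two-point Bonami-Beckner inequality on the group $\mathbb{Z}/2\mathbb{Z}$ (take the classical function $g:\{+1,-1\}\to\mathbb{R}$ with $g(+1)=a+b$, $g(-1)=a-b$; its classical noise image is $(T_\epsilon g)(+1) = a+\epsilon b$, $(T_\epsilon g)(-1) = a-\epsilon b$). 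Thus the single-qubit quantum hypercontractive inequality is literally the classical two-point inequality in disguise, and can be quoted directly from, e.g., \cite{odonnellnotes}.

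Note that the range condition $\epsilon\le\sqrt{(p-1)/(q-1)}$ appears in the statement without a lower bound on $\epsilon$, which is fine since the quantity $|a+\epsilon b|^q + |a-\epsilon b|^q$ depends only on $|\epsilon|$, so the inequality for $\epsilon<0$ follows from the case $|\epsilon|\le\sqrt{(p-1)/(q-1)}$. There is no real obstacle here: the only nontrivial input is the scalar two-point inequality (a standard calculus argument establishing convexity of an auxiliary one-variable function), and the genuinely quantum content is confined to the elementary computation of the eigenvalues of a $2\times 2$ Hermitian matrix in the Pauli basis. The actual difficulty in proving Theorem \ref{thm:hypercontractivity} will lie not in this base case but in the inductive/tensorisation step, which (as the preceding discussion warns) cannot proceed by the naive classical tensor-product argument and will require King's noncommutative Hanner inequality \cite{king:2003a}.
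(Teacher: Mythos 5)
Your proposal is correct and follows essentially the same route as the paper: reduce the single-qubit statement to the classical two-point Bonami--Gross--Beckner inequality applied to the eigenvalues, using the fact that $T_\epsilon$ fixes the identity component and scales the traceless part by $\epsilon$. The paper phrases this reduction via the unitary covariance of the depolarising channel and unitary invariance of the Schatten norms (diagonalising $f$ first) rather than via your explicit Pauli parametrisation, but the content is identical.
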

\begin{proof}
We diagonalise $f$ as $U^\dag \Lambda U$, where $\Lambda$ is diagonal. Since the depolarising
channel is symmetric we have that $T_\epsilon(U^\dag \Lambda U) =
\mathcal{D}_\epsilon(U^\dag \Lambda U) = U^\dag
\mathcal{D}_\epsilon(\Lambda) U = U^\dag T_\epsilon(\Lambda) U$ and since
the $q$-norm is unitarily invariant we have that $\|T_\epsilon(f)\|_q =
\|T_\epsilon(\Lambda)\|_q$. So we may as well focus on diagonal
operators $\Lambda = \left(\begin{smallmatrix} \lambda_1 & 0 \\ 0 &
\lambda_2\end{smallmatrix}\right)$. In terms of the eigenvalues of
$\Lambda$ the inequality (\ref{eq:1qubithc}) is just the
two-point inequality established by Bonami \cite{bonami:1970a},
Gross \cite{gross:1975a} and Beckner \cite{beckner:1975a}.
\end{proof}

\begin{lemma}
\label{lem:p=2}
It suffices to prove Theorem \ref{thm:hypercontractivity} for $p=2$.
\end{lemma}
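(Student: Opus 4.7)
The plan is to exploit two properties of the noise superoperator already established in the paper: its self-adjointness with respect to the Hilbert-Schmidt inner product ($\ip{T_\epsilon f}{g} = \ip{f}{T_\epsilon g}$), and the semigroup property $T_{\delta}T_{\epsilon} = T_{\delta\epsilon}$. Combining these with H\"older's inequality will let us bootstrap from the single exponent pair $p=2$ to every admissible $(p,q)$.

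First I would establish a duality observation: for Hermitian $f$, the inequality $\|T_\epsilon f\|_q \le \|f\|_p$ is equivalent to $\|T_\epsilon g\|_{p'} \le \|g\|_{q'}$ for Hermitian $g$, where $1/p+1/p' = 1/q+1/q' = 1$. One direction is immediate from H\"older and self-adjointness:
\begin{equation*}
|\ip{f}{T_\epsilon g}| = |\ip{T_\epsilon f}{g}| \le \|T_\epsilon f\|_q \|g\|_{q'} \le \|f\|_p \|g\|_{q'},
\end{equation*}
and taking the supremum over Hermitian $f$ with $\|f\|_p \le 1$ gives $\|T_\epsilon g\|_{p'} \le \|g\|_{q'}$. (Since $T_\epsilon$ preserves Hermiticity, restricting to Hermitian test operators is legitimate.) The hypothesis on $\epsilon$ is invariant under this swap: $p'-1 = 1/(p-1)$ and $q'-1 = 1/(q-1)$, so $(q'-1)/(p'-1) = (p-1)/(q-1)$.

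Applying the duality to the assumed $p=2$ case yields, for every $1 \le p \le 2$ and every $\epsilon \le \sqrt{p-1}$, the bound $\|T_\epsilon g\|_2 \le \|g\|_p$. Finally, I would handle a general pair $1\le p\le 2\le q\le\infty$ with $\epsilon = \sqrt{(p-1)/(q-1)}$ by splitting $\epsilon = \epsilon_1 \epsilon_2$, where $\epsilon_1 = \sqrt{1/(q-1)}$ and $\epsilon_2 = \sqrt{p-1}$. By the semigroup property $T_\epsilon = T_{\epsilon_1} T_{\epsilon_2}$, and we can estimate
\begin{equation*}
\|T_\epsilon f\|_q = \|T_{\epsilon_1}(T_{\epsilon_2} f)\|_q \le \|T_{\epsilon_2} f\|_2 \le \|f\|_p,
\end{equation*}
invoking the $p=2$ hypothesis on $T_{\epsilon_2} f$ in the first inequality and its dual form in the second.

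There is no serious obstacle; the only point requiring care is that the H\"older-based duality must be run inside the subspace of Hermitian operators so that the assumed $p=2$ inequality (stated for Hermitian inputs) can be applied. Since $T_\epsilon$ maps Hermitian operators to Hermitian operators and the Hilbert-Schmidt pairing of two Hermitian operators is real, this causes no difficulty.
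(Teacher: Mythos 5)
Your proposal is correct and follows essentially the same route as the paper: first derive the $q=2$ case (for $1\le p\le 2$) from the assumed $p=2$ case by duality, using self-adjointness of $T_\epsilon$ together with H\"older's inequality, and then combine the two via the semigroup property $T_{\sqrt{(p-1)/(q-1)}} = T_{1/\sqrt{q-1}}\,T_{\sqrt{p-1}}$. Your explicit remark about restricting the duality argument to Hermitian operators (legitimate since $T_\epsilon$ preserves Hermiticity) is a small point of care that the paper leaves implicit, but the argument is the same.
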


\begin{proof}
The classical proof (see, e.g., \cite[Lecture 12]{mosselnotes}) goes through unchanged. The first step is to prove that Theorem \ref{thm:hypercontractivity} holds for $q=2$, and any $1 \le p \le 2$. Assume the theorem holds for $p=2$ and any $q$, and take $p'$ such that $1/p' + 1/p = 1$; then
\bea
\|T_{\sqrt{p-1}} f\|_2 &=& \sup_{\|g\|_2=1} |\ip{g}{T_{\sqrt{p-1}}\,f}| = \sup_{\|g\|_2=1} |\ip{T_{\sqrt{p-1}}\,g}{f}| \\
&\le& \|f\|_p \sup_{\|g\|_2=1} \|T_{\sqrt{p-1}}\,g\|_{p'} \le \|f\|_p,
\eea
where the first inequality is H\"older's inequality. Now, to prove the theorem when $1 \le p < 2 < q$, we use the semigroup property:
\be \|T_{\sqrt{\frac{p-1}{q-1}}} f\|_q = \|T_{1/\sqrt{q-1}}\,T_{\sqrt{p-1}}\,f\|_q \le \|T_{\sqrt{p-1}}\,f\|_2 \le \|f\|_p. \ee
\end{proof}

\begin{lemma}
\label{lem:positive} For all $p\ge 1$, $\|T_\epsilon f\|_p \le
\|T_\epsilon |f|\|_p$, where $|f|$ is the operator $\sqrt{f^2}$.
\end{lemma}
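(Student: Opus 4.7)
The plan is to reduce the inequality to a statement about positive operators and then apply a factorisation-plus-H\"older argument. First, I would take the Jordan decomposition of the Hermitian operator $f$: write $f = f_+ - f_-$ with $f_+, f_- \ge 0$ having orthogonal supports, so that $|f| = f_+ + f_-$. Since the qubit depolarising channel $\mathcal{D}_\epsilon$ is completely positive on the permissible range of $\epsilon$, the superoperator $T_\epsilon = \mathcal{D}_\epsilon^{\otimes n}$ sends positive operators to positive operators. Hence $A \equiv T_\epsilon f_+$ and $B \equiv T_\epsilon f_-$ are both positive semidefinite, and $T_\epsilon f = A - B$ while $T_\epsilon |f| = A + B$. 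The lemma thus reduces to the purely operator-theoretic claim
\[
\|A - B\|_p \le \|A + B\|_p \quad \text{whenever } A, B \ge 0,\ p \ge 1.
\]

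For this sub-claim I would use a rectangular factorisation. Let $M_1 = \bigl(\sqrt{A}\ \ \sqrt{B}\bigr)$ be the $d \times 2d$ block-row operator and let $M_2 = \left(\begin{smallmatrix} \sqrt{A} \\ -\sqrt{B} \end{smallmatrix}\right)$ be the $2d \times d$ block-column operator. Direct multiplication yields $M_1 M_2 = A - B$, while $M_1 M_1^* = M_2^* M_2 = A + B$. Using the elementary identity $\|M\|_{2p}^2 = \|M M^*\|_p = \|M^* M\|_p$ (immediate from the singular value decomposition), both $\|M_1\|_{2p}$ and $\|M_2\|_{2p}$ equal $\|A + B\|_p^{1/2}$. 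H\"older's inequality for Schatten norms with conjugate exponents $2p, 2p$ (so that $1/(2p) + 1/(2p) = 1/p$) then gives
\[
\|A - B\|_p \;=\; \|M_1 M_2\|_p \;\le\; \|M_1\|_{2p}\,\|M_2\|_{2p} \;=\; \|A + B\|_p,
\]
which is what was needed. The passage between the normalised Schatten $p$-norm used in the paper and the standard Schatten norm invoked in the intermediate rectangular step is harmless: both sides of the reduced inequality pick up the same prefactor $d^{-1/p}$, so the inequality is equivalent in either convention.

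I do not anticipate any serious obstacle. The two ingredients one must verify are (i) positivity of $T_\epsilon$, so that the Jordan decomposition of $f$ yields honest positive operators after applying $T_\epsilon$, and (ii) H\"older's inequality for Schatten norms in the form $\|XY\|_p \le \|X\|_{2p}\|Y\|_{2p}$ for compatible rectangular matrices, which is standard. The only mildly creative step is guessing the factorisation $A - B = (\sqrt{A},\sqrt{B})(\sqrt{A},-\sqrt{B})^{\top}$, which converts the difference of positive operators into a product whose two factors both have Schatten $2p$-norm equal to $\|A + B\|_p^{1/2}$.
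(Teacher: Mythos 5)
Your proof is correct, but it takes a genuinely different route from the paper's. The paper's argument is a one-line order-theoretic one: since $-|f| \le f \le |f|$ in the semidefinite sense and $T_\epsilon$ is a convex combination of conjugations by unitaries, it preserves this ordering, so $\pm T_\epsilon f \le T_\epsilon |f|$, and the norm inequality then follows from the standard fact that such a sandwich implies domination in every unitarily invariant norm. You instead take the Jordan decomposition $f = f_+ - f_-$, use positivity of $T_\epsilon = \mathcal{D}_\epsilon^{\otimes n}$ to get positive $A = T_\epsilon f_+$ and $B = T_\epsilon f_-$, and reduce to the claim $\|A-B\|_p \le \|A+B\|_p$ for positive $A,B$, which you prove via the factorisation $A-B = M_1 M_2$, the identity $\|M\|_{2p}^2 = \|MM^*\|_p = \|M^*M\|_p$, and H\"older's inequality with exponents $(2p,2p)$; all of these steps check out, and your remark about the normalised versus unnormalised Schatten norms is the right way to handle the paper's convention. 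What each approach buys: the paper's proof is shorter and exposes the mechanism directly (positivity of the channel plus order-monotonicity of unitarily invariant norms), though as written it leaves the passage from the operator sandwich to the norm inequality implicit; your argument is fully self-contained, and since it only uses positivity of $T_\epsilon$ rather than a representation as a convex combination of unitary conjugations, it applies verbatim on the whole completely positive range $-1/3 \le \epsilon \le 1$ and not just for $0 \le \epsilon \le 1$.
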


\begin{proof}
This holds because $f \le |f|$ (in a positive semidefinite sense)
and applying $T_\epsilon$ doesn't change this ordering because it is
a convex combination of conjugations by unitaries.
\end{proof}

We are now ready to prove Theorem \ref{thm:hypercontractivity}. By Lemma \ref{lem:p=2}, it suffices to prove the following statement.

\begin{proposition}
Let $f$ be a Hermitian operator. Then $\|T_\epsilon f
\|_{1+1/\epsilon^2} \le \|f\|_2$ for any $0\le \epsilon \le 1$.
\end{proposition}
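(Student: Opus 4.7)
The plan is to prove this by induction on the number of qubits $n$, mimicking the structure of the classical proof but replacing the Minkowski inequality (which fails for non-commutative $L^q$ in the tensor-product form we would need) with the non-commutative Hanner inequality of King~\cite{king:2003a}. The base case $n=1$ is exactly Lemma~\ref{lem:basecase} with $q = 1 + 1/\epsilon^2$. By Lemma~\ref{lem:positive} we may assume $f$ is positive semidefinite throughout, which is the regime in which the positivity-restricted Hanner-type inequalities behave well.

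For the inductive step, decompose $f$ in the Pauli basis of the first qubit,
\begin{equation*}
f \;=\; \sum_{s=0}^{3} \sigma^{s}_1 \otimes f_s,
\end{equation*}
where $f_0,\ldots,f_3$ are Hermitian operators on $n-1$ qubits (with $\sigma^0 = \1$). Since $T_\epsilon = T^{(1)}_\epsilon \otimes \tilde T_\epsilon$, where $\tilde T_\epsilon$ acts on the remaining $n-1$ qubits, one has
\begin{equation*}
T_\epsilon f \;=\; \1 \otimes \tilde T_\epsilon f_0 \;+\; \epsilon \sum_{s=1}^{3} \sigma^{s}_1 \otimes \tilde T_\epsilon f_s .
\end{equation*}
Parseval gives $\|f\|_2^2 = \sum_s \|f_s\|_2^2$ and the inductive hypothesis gives $\|\tilde T_\epsilon f_s\|_q \le \|f_s\|_2$ for each $s$. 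It therefore suffices to establish the operator-valued Minkowski-type bound
\begin{equation*}
\|T_\epsilon f\|_q \;\le\; \bigl( \|\tilde T_\epsilon f_0\|_q^{2} + \|\tilde T_\epsilon f_1\|_q^{2} + \|\tilde T_\epsilon f_2\|_q^{2} + \|\tilde T_\epsilon f_3\|_q^{2} \bigr)^{1/2}.
\end{equation*}
Viewing $T_\epsilon f$ as the $2\times 2$ block matrix over the first qubit with entries built from the $\tilde T_\epsilon f_s$, this bound is the precise point at which the classical proof invoked Minkowski on $L^{q/2}$ after a pointwise two-point inequality.

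The main obstacle is exactly this step: the general tensorisation statement $\|C \otimes D\|_{2\to q} \le \|C\|_{2\to q}\|D\|_{2\to q}$ is known to fail, so we cannot simply iterate the one-qubit case. The substitute is King's non-commutative Hanner inequality, which, for matrices arranged in the block form coming from a first-qubit Pauli expansion, produces exactly the $\ell^2$-type aggregation of $q$-norms on the right-hand side above. After applying Hanner to reduce the four-term block matrix to a sum of $q$-norms of the components (using that the Pauli decomposition corresponds to orthogonal "directions" in $\mathcal{B}(\mathbb{C}^2)$), a final application of Minkowski in $L^{q/2}$ on the $(n-1)$-qubit side collapses the mixed norms into $(\sum_s \|\tilde T_\epsilon f_s\|_q^2)^{1/2}$. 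Combined with the inductive hypothesis this yields $\|T_\epsilon f\|_q \le \|f\|_2$, closing the induction.

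The delicate part of the execution will be checking that King's inequality applies to the four-term Pauli block rather than merely a two-term $\pm$ splitting (as would arise classically), and verifying that the positivity reduction via Lemma~\ref{lem:positive} is compatible with the block decomposition; these technical checks aside, the argument follows the shape of the classical hypercontractive induction one step at a time.
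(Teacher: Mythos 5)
Your overall skeleton matches the paper's proof (induction on $n$, base case from Lemma~\ref{lem:basecase}, reduction to positive $f$ via Lemma~\ref{lem:positive}, expansion over the first qubit, and King's non-commutative Hanner inequality in place of Minkowski), but there is a genuine gap at the decisive step. King's inequality does \emph{not} produce the $\ell^2$-type aggregation
\begin{equation*}
\|T_\epsilon f\|_q \le \Bigl( \textstyle\sum_{s=0}^3 \|\tilde T_\epsilon f_s\|_q^{2} \Bigr)^{1/2}
\end{equation*}
that you assert. For a positive $2\times 2$ block matrix it bounds $\|T_\epsilon f\|_q$ by the Schatten $q$-norm of the $2\times 2$ \emph{scalar} matrix whose entries are the $q$-norms of the blocks, and in the block form over the first qubit these blocks are $T_\epsilon(a\pm\epsilon d)$ on the diagonal and $\epsilon\,T_\epsilon(b\mp ic)$ off the diagonal (writing $f=\1\otimes a+\sigma^1\otimes b+\sigma^2\otimes c+\sigma^3\otimes d$). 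Consequently, after the inductive hypothesis you are left with the $2$-norms of the \emph{combinations} $a\pm\epsilon d$, not of $a$ and $d$ separately, and no application of ``Minkowski in $L^{q/2}$ on the $(n-1)$-qubit side'' converts the $q$-norm of that scalar matrix into $(\sum_s\|\tilde T_\epsilon f_s\|_q^2)^{1/2}$. Indeed your intermediate bound is itself a nontrivial operator-coefficient (vector-valued) hypercontractive statement over four non-commuting directions; it does not follow from the two tools you name, and you give no argument for it, so the induction does not close as written.

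What the paper does instead, and what is missing from your proposal, is the following: after Hanner and the inductive hypothesis, the resulting $2\times2$ scalar matrix $g$ with entries $\|a\pm\epsilon d\|_2$ and $\epsilon\|b\mp ic\|_2$ is recognised as $g=T_\epsilon h$ for an explicit one-qubit Hermitian $h$, so the one-qubit base case can be applied a \emph{second} time to give $\|g\|_q\le$ (constant) $\|h\|_2$; an explicit computation of $\|h\|_2^2$ then reduces the whole inductive step to the scalar inequality $\|a\|_2^2-\epsilon^2\|d\|_2^2\le\|a+\epsilon d\|_2\,\|a-\epsilon d\|_2$, which is proved by Cauchy--Schwarz for the trace. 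These two steps are precisely what replaces the classical Minkowski argument and what your ``technical checks aside'' elides; your own flagged worry about whether Hanner applies to a four-term Pauli splitting rather than a two-term $\pm$ splitting is resolved in the paper exactly by working with the $2\times2$ block form (diagonal blocks $a\pm d$, off-diagonal $b\mp ic$) rather than with the four Pauli components separately.
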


\begin{proof}
For readability, it will be convenient to switch to the un-normalised standard
Schatten $p$-norm, so for the remainder of the proof $\|f\|_p =
\left(\tr |f|^p\right)^{1/p}$. With this normalisation,
what we want to prove is
\be \|T_\epsilon f \|_{1+1/\epsilon^2} \le
2^{-n\left(\frac{1-\epsilon^2}{2(1+\epsilon^2)}\right)} \|f\|_2. \ee
The proof is by induction on $n$. The theorem is true for $n=1$ by
Lemma \ref{lem:basecase}, so assume $n>1$ and expand $f$ as follows.
\be f = \1 \otimes a + \sigma^1 \otimes b + \sigma^2 \otimes c +
\sigma^3 \otimes d  =
\begin{pmatrix}a+d&b-ic\\b+ic&a-d\end{pmatrix}.\ee
Then, by direct calculation, we have
\be T_\epsilon f = \begin{pmatrix} T_\epsilon(a + \epsilon d) &
\epsilon\, T_\epsilon(b-ic)\\ \epsilon\, T_\epsilon(b+ic) &
T_\epsilon(a-\epsilon d) \end{pmatrix}, \ee
%
where the operator $T_\epsilon$ on the left-hand side acts on $n$ qubits, while
the operator $T_\epsilon$ on the right-hand side acts on $n-1$ qubits.
For brevity, set $q=1+1/\epsilon^2$, and assume that $\|T_\epsilon f\|_q^q \le 2^{-(n-1)(q/2-1)} \|f\|_2^q$
for any $(n-1)$-qubit Hermitian operator $f$.
Using a non-commutative Hanner's inequality for positive block
matrices \cite{king:2003a}, which holds for $q \ge 2$, and noting that we can assume that $f$ is positive by
Lemma \ref{lem:positive}, we obtain
\beas
\| T_\epsilon f \|_q^q &\le& \left\|\begin{pmatrix}\|T_\epsilon(a + \epsilon d)\|_q & \|\epsilon T_\epsilon(b-ic)\|_q\\ \|\epsilon T_\epsilon(b+ic)\|_q & \|T_\epsilon(a - \epsilon d)\|_q \end{pmatrix}\right\|_q^q\\
&\le& 2^{-(n-1)(q/2-1)} \left\|\begin{pmatrix}\|a + \epsilon d\|_2 & \epsilon\|b-ic\|_2 \\ \epsilon\|b+ic\|_2 & \|a - \epsilon d\|_2 \end{pmatrix}\right\|_q^q,
\eeas
where we use the inductive hypothesis. The proposition will thus be
proven if we can show that
\beas
\|g\|_q^q &\equiv& \left\|\begin{pmatrix}\|a + \epsilon d\|_2 & \epsilon\|b-ic\|_2 \\ \epsilon\|b+ic\|_2 & \|a - \epsilon d\|_2 \end{pmatrix}\right\|_q^q\\
&\le& 2^{-(q/2-1)} \|f\|_2^q = 2\left( \|a\|_2^2 + \|d\|_2^2 + \|b-ic\|_2^2 \right)^{q/2},
\eeas
where we call the matrix on the left-hand side of the inequality
$g$. Write $g = T_\epsilon h$ for some $h$, where the elements of $h$ are
\beas
h_{11} &=& \frac{1}{2}\left(\left(1+\frac{1}{\epsilon}\right)\|a+\epsilon d\|_2 + \left(1-\frac{1}{\epsilon}\right)\|a-\epsilon d\|_2 \right),\\
h_{12} &=& \|b-ic\|_2,\\
h_{21} &=& \|b+ic\|_2,\\
h_{22} &=& \frac{1}{2}\left(\left(1-\frac{1}{\epsilon}\right)\|a+\epsilon d\|_2 + \left(1+\frac{1}{\epsilon}\right)\|a-\epsilon d\|_2 \right).
\eeas
Note that $h$ is indeed Hermitian; $\|b-ic\|_2 = \|b+ic\|_2$, using the cyclicity of trace.
Now, by Lemma \ref{lem:basecase}, $\|g\|_q^q \le 2^{-(q/2-1)}
\|h\|_2^q = 2 \left(\frac{1}{2}\|h\|_2^2\right)^{q/2}$. An explicit
expansion gives
\be
\|h\|_2^2 = 2 \|b-ic\|_2^2 + \left(1+\frac{1}{\epsilon^2}\right) \left(\|a\|_2^2 + \epsilon^2 \|d\|_2^2 \right) + \left(1-\frac{1}{\epsilon^2}\right)\|a+\epsilon d\|_2\|a-\epsilon d\|_2,
\ee
implying that, in order to prove the proposition, we need
\be
\left(\frac{1}{2}+\frac{1}{2\epsilon^2}\right) \left(\|a\|_2^2 + \epsilon^2 \|d\|_2^2 \right) + \left(\frac{1}{2} - \frac{1}{2\epsilon^2}\right)\|a+\epsilon d\|_2\|a-\epsilon d\|_2 \le \|a\|_2^2 + \|d\|_2^2.
\ee
So, noting that $1/2-1/(2\epsilon^2)$ is negative, it suffices to show that
\be \|a\|_2^2 - \epsilon^2 \|d\|_2^2 \le \|a+\epsilon d\|_2\|a-\epsilon d\|_2.\ee
This last inequality can be proven using the matrix Cauchy-Schwarz inequality:
\bea \|a\|_2^2 - \epsilon^2 \|d\|_2^2 &=& \tr((a + \epsilon d)(a - \epsilon d))\\
&\le& \sqrt{\tr((a + \epsilon d)^2)}\sqrt{\tr((a - \epsilon d)^2)}\\
&=& \|a+\epsilon d\|_2\|a-\epsilon d\|_2,
\eea
so we are done.
\end{proof}

Theorem~\ref{thm:hypercontractivity} has the following easy
corollaries. The first says, informally, that low-degree quantum
boolean functions are smooth.

\begin{corollary}
\label{cor:hypercontractivity} Let $f$ be a Hermitian operator on $n$ qubits with
degree at most $d$. Then, for any $q \ge 2$, $\|f\|_q \le (q-1)^{d/2} \|f\|_2$. Also, for any $p \le 2$, $\|f\|_p \ge (p-1)^{d/2} \|f\|_2$.
\end{corollary}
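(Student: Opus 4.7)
The plan is to derive both bounds from Theorem~\ref{thm:hypercontractivity} specialised to $p=2$ or $q=2$, using the degree hypothesis to relate $f$ to $T_\epsilon f$ (or $T_\epsilon^{-1}f$) on the 2-norm side via Parseval.

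For the first bound $\|f\|_q \le (q-1)^{d/2}\|f\|_2$ with $q \ge 2$, set $\epsilon = 1/\sqrt{q-1}$ and define the operator
\begin{equation}
g \equiv \sum_{\mathbf{s},\,|\mathbf{s}|\le d} \epsilon^{-|\mathbf{s}|}\,\fhats\,\chis,
\end{equation}
which is a well-defined Hermitian operator because $f$ has degree at most $d$. By construction $T_\epsilon g = f$, so applying Theorem~\ref{thm:hypercontractivity} (with $p=2$) gives $\|f\|_q = \|T_\epsilon g\|_q \le \|g\|_2$. Parseval then yields
\begin{equation}
\|g\|_2^2 = \sum_{\mathbf{s}} \epsilon^{-2|\mathbf{s}|}|\fhats|^2 \le \epsilon^{-2d}\sum_{\mathbf{s}}|\fhats|^2 = (q-1)^d \|f\|_2^2,
\end{equation}
giving the desired bound.

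For the second bound $\|f\|_p \ge (p-1)^{d/2}\|f\|_2$ with $1 \le p \le 2$, we apply Theorem~\ref{thm:hypercontractivity} directly with $q=2$ and $\epsilon = \sqrt{p-1}$, obtaining $\|T_\epsilon f\|_2 \le \|f\|_p$. Since $\epsilon \le 1$ and $f$ has degree at most $d$, Parseval again gives
\begin{equation}
\|T_\epsilon f\|_2^2 = \sum_{\mathbf{s}} \epsilon^{2|\mathbf{s}|}|\fhats|^2 \ge \epsilon^{2d}\sum_{\mathbf{s}}|\fhats|^2 = (p-1)^d\|f\|_2^2,
\end{equation}
and the bound follows.

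There is no real obstacle here: both steps are just the observation that on the degree-$\le d$ subspace the multiplier $\epsilon^{|\mathbf{s}|}$ defining $T_\epsilon$ is pinched between $\epsilon^d$ and $1$, so one can trade between $\|T_\epsilon f\|_2$ and $\|f\|_2$ with only a factor of $\epsilon^{\pm d}$. The only small care needed is to note that for the first inequality we invert $T_\epsilon$ on the (finite-dimensional, degree-restricted) subspace, which is legitimate precisely because of the degree hypothesis.
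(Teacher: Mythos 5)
Your proof is correct. The first inequality is obtained exactly as in the paper: your operator $g=\sum_{|\mathbf{s}|\le d}\epsilon^{-|\mathbf{s}|}\fhats\chis$ with $\epsilon=1/\sqrt{q-1}$ is precisely the paper's $\sum_{k\le d}(q-1)^{k/2}f^{=k}$, and the step $\|f\|_q=\|T_\epsilon g\|_q\le\|g\|_2\le(q-1)^{d/2}\|f\|_2$ is the same combination of Theorem~\ref{thm:hypercontractivity} (at $p=2$) and Parseval; your remark about inverting $T_\epsilon$ on the degree-restricted subspace is fine, since $g$ is defined directly through its (real) Fourier coefficients. For the second inequality you diverge from the paper: you invoke the $(p,q)=(p,2)$ branch of Theorem~\ref{thm:hypercontractivity} with $\epsilon=\sqrt{p-1}$ and then lower-bound $\|T_\epsilon f\|_2$ by $(p-1)^{d/2}\|f\|_2$ via Parseval and the degree hypothesis, whereas the paper deduces it from the first inequality together with H\"older's inequality, $\|f\|_2^2\le\|f\|_p\|f\|_q\le(q-1)^{d/2}\|f\|_p\|f\|_2$ with $1/p+1/q=1$ and $q-1=1/(p-1)$. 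Both routes are legitimate; yours is arguably more direct at this point, but it leans on the $q=2$ case of the theorem, which in the paper is itself established from the $p=2$ case by exactly the duality/H\"older argument used in its proof of the corollary, so the two derivations are essentially the same trick applied at different stages. (In both versions one implicitly takes $1\le p\le 2$ so that $(p-1)^{d/2}$ is meaningful and the theorem applies.)
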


\begin{proof}
The proof follows that of Corollary 1.3 in Lecture 16 of \cite{odonnellnotes} with no changes required. Explicitly,
\bea
\|f\|_q^2 &=& \left\| \sum_{k=0}^d f^{=k} \right\|_q^2 = \left\| T_{1/\sqrt{q-1}} \left( \sum_{k=0}^d (q-1)^{k/2} f^{=k} \right) \right\|_q^2\\
&\le& \left\| \sum_{k=0}^d (q-1)^{k/2} f^{=k} \right\|_2^2 = \sum_{k=0}^d (q-1)^k \sum_{\mathbf{s},|\mathbf{s}|=k} \fhats^2\\
&\le& (q-1)^d \sum_{\mathbf{s}} \fhats^2 = (q-1)^d \|f\|_2^2.
\eea
The second part is proved using the first part and H\"older's inequality, following immediately from
\be \|f\|_2^2 = \ip{f}{f} \le \|f\|_p \|f\|_q \le (q-1)^{d/2} \|f\|_p \|f\|_2, \ee
where $1/p + 1/q = 1$.
\end{proof}
The second corollary is a quantum counterpart of the fundamental Schwartz-Zippel Lemma \cite{schwartz:1980,zippel:1979}. This lemma states that any non-zero function $f:\{0,1\}^n \rightarrow \R$ of degree $d$ must take a non-zero value on at least a $2^{-d}$ fraction of the inputs. By analogy with the classical lemma, we conjecture that the constant in the exponent of this corollary can be improved.

\begin{corollary}
Let $f$ be a non-zero Hermitian operator on $n$ qubits with $m$
non-zero eigenvalues and degree $d$. Then $m
\ge 2^{n - (2\log e)d} \approx 2^{n - 2.89d}$.
\end{corollary}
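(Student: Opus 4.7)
The plan is to follow the classical Schwartz–Zippel argument via moments, adapted to Schatten norms: use the hypercontractive corollary as an upper bound on $\|f\|_q$, and the power mean inequality on the non-zero eigenvalues as a lower bound, then combine and optimise over $q$.

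First I would observe that, since $f$ has only $m$ non-zero eigenvalues $\lambda_1,\dots,\lambda_m$, the power mean inequality (equivalently, H\"older applied to the vector of singular values) gives for any $q \ge 2$,
\begin{equation*}
\left(\frac{1}{m}\sum_{i=1}^m \lambda_i^2\right)^{1/2} \le \left(\frac{1}{m}\sum_{i=1}^m |\lambda_i|^q\right)^{1/q}.
\end{equation*}
Translating to the normalised Schatten norms $\|f\|_p = (\frac{1}{2^n}\sum_i |\lambda_i|^p)^{1/p}$, this rearranges to
\begin{equation*}
\|f\|_2 \le \left(\frac{m}{2^n}\right)^{1/2 - 1/q} \|f\|_q.
\end{equation*}

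Next I would apply Corollary~\ref{cor:hypercontractivity}, which says $\|f\|_q \le (q-1)^{d/2}\|f\|_2$ for $q \ge 2$. Plugging this into the previous display and cancelling the (non-zero) factor $\|f\|_2$ yields
\begin{equation*}
1 \le \left(\frac{m}{2^n}\right)^{1/2 - 1/q}(q-1)^{d/2}, \qquad\text{i.e.}\qquad m \ge \frac{2^n}{(q-1)^{dq/(q-2)}}
\end{equation*}
for every $q > 2$. The final step is to optimise: setting $q = 2 + \varepsilon$ and expanding $\log(1+\varepsilon) = \varepsilon - \varepsilon^2/2 + O(\varepsilon^3)$ gives
\begin{equation*}
\frac{q\log(q-1)}{q-2} = \frac{(2+\varepsilon)(\varepsilon - \varepsilon^2/2 + O(\varepsilon^3))}{\varepsilon} \xrightarrow{\varepsilon \to 0^+} 2,
\end{equation*}
so $(q-1)^{dq/(q-2)} \to e^{2d}$ as $q \to 2^+$. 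Taking the supremum of the lower bounds over $q > 2$ gives $m \ge 2^n/e^{2d} = 2^{n - (2\log e)d}$, as required.

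None of the steps look hard: the hypercontractive input is already established, and the rest is elementary. The only mildly delicate point is checking that the quantity $(q-1)^{dq/(q-2)}$ indeed approaches $e^{2d}$ from above as $q \to 2^+$, so that the supremum of the lower bounds $2^n/(q-1)^{dq/(q-2)}$ equals (and does not exceed) $2^n e^{-2d}$; this is just the elementary expansion above. The matching of constants with the statement ($2\log e$, in base-$2$ logarithms) then falls out automatically.
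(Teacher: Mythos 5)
Your proof is correct and follows essentially the same route as the paper: both hinge on the inequality $\|f\|_2 \le (m/2^n)^{1/2-1/q}\|f\|_q$ (which the paper obtains via H\"older against the support projector, and you via the power-mean inequality on the eigenvalues — the same estimate), followed by Corollary~\ref{cor:hypercontractivity} and the limit $q \to 2^+$. Your worry about the limit being approached ``from above'' is unnecessary — since $m$ dominates the bound for every $q>2$, it dominates the limit regardless — but it is harmless.
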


\begin{proof}
Let $f_1$ denote the projector onto the subspace spanned by $f$'s eigenvectors that have non-zero eigenvalues. Then, for any $q \ge p \ge 1$,
\be \|f\|_p = \ip{f^p}{f_1}^{1/p} \le (\|f^p\|_{q/p} \|f_1\|_{q/(q-p)} )^{1/p} = \|f\|_q \left(\frac{m}{2^n}\right)^{1/p-1/q}, \ee
where we use H\"older's inequality and the fact that $f_1$'s non-zero eigenvalues are all 1. Thus, using Corollary \ref{cor:hypercontractivity}, for any $q \ge 2$ we have
\be \left(\frac{m}{2^n}\right)^{1/2-1/q} \ge \frac{\|f\|_2}{\|f\|_q} \ge (q-1)^{-d/2}, \ee
implying
\be m \ge \frac{2^n}{\left( (q-1)^{q/(q-2)} \right)^d}. \ee
Taking the limit of this expression as $q\rightarrow 2$ gives the desired result.
\end{proof}
The final corollary is a quantum generalisation of a lemma
attributed to Talagrand \cite{talagrand:1996a}, which bounds the
weight of a projector on the first level in terms of its 1-norm
(equivalently, its dimension). This lemma quantifies the intuition that low-rank projections (eg., pure states) on the Hilbert space of $n$ qubits must have a Fourier spectrum whose support includes high-weight Fourier coefficients.

\begin{corollary}
Let $P^2 = P$ be a projector. Then
\begin{equation}
\|P^{=1}\|_2^2 \le (q-1)\|P\|_1^{\frac{2}{p}},
\end{equation}
where $\frac{1}{p}+ \frac{1}{q} =1$.
\end{corollary}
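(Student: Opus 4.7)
The plan is to combine hypercontractivity applied to the degree-one component $P^{=1}$ with H\"older's inequality, and then exploit the fact that a projector has especially simple singular values to relate $\|P\|_p$ to $\|P\|_1$.

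First I would write $\|P^{=1}\|_2^2 = \ip{P^{=1}}{P^{=1}} = \ip{P^{=1}}{P}$, since by Plancherel the cross terms between different Fourier levels vanish, i.e.\ $\ip{P^{=1}}{P^{=k}}=0$ for $k\neq 1$. Then H\"older's inequality (with the stated conjugate exponents $1/p+1/q=1$) gives
\begin{equation}
\|P^{=1}\|_2^2 = \ip{P^{=1}}{P} \le \|P^{=1}\|_q \|P\|_p.
\end{equation}
Since $P^{=1}$ has degree at most one, Corollary~\ref{cor:hypercontractivity} yields $\|P^{=1}\|_q \le (q-1)^{1/2}\|P^{=1}\|_2$ for $q\ge 2$. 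Substituting and dividing by $\|P^{=1}\|_2$ (assuming the nontrivial case) gives $\|P^{=1}\|_2 \le (q-1)^{1/2}\|P\|_p$, hence $\|P^{=1}\|_2^2 \le (q-1)\|P\|_p^2$.

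It remains to identify $\|P\|_p^2$ with $\|P\|_1^{2/p}$. Because $P$ is a projector its singular values are $0$ or $1$; if $P$ has rank $m$, then using the normalised Schatten norm one has $\|P\|_p = (m/2^n)^{1/p}$ for every $p\ge 1$, and in particular $\|P\|_1 = m/2^n$. Therefore $\|P\|_p^2 = (m/2^n)^{2/p} = \|P\|_1^{2/p}$, which combined with the previous inequality yields the stated bound.

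I do not anticipate a serious obstacle here: the main corollary (hypercontractivity applied to the level-$1$ slice) does all the work, H\"older handles the coupling with $\|P\|_p$, and the projector hypothesis is used only in the final cosmetic step of rewriting $\|P\|_p$ as a power of $\|P\|_1$. The only mild subtlety is to make sure the $q\ge 2$ regime (required for Corollary~\ref{cor:hypercontractivity}) matches the assumption $1/p+1/q=1$ with $p\le 2$; this is automatic.
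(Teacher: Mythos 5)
Your proposal is correct and follows essentially the same route as the paper: write $\|P^{=1}\|_2^2 = \ip{P^{=1}}{P}$, apply H\"older, use the hypercontractive bound $\|P^{=1}\|_q \le \sqrt{q-1}\,\|P^{=1}\|_2$, and finish with $\|P\|_p = \|P\|_1^{1/p}$ for a projector. The only cosmetic difference is that the paper substitutes $\|P\|_p = \|P\|_1^{1/p}$ before dividing by $\|P^{=1}\|_2$, while you do it at the end.
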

\begin{proof}
We begin by writing
\begin{equation}
P = \sum_{\mathbf{s}}\hat{P}_{\mathbf{s}} \chis  = P^{=1} + h,
\end{equation}
where
\begin{equation}
P^{=1} = \sum_{\mathbf{s}\,|\, |\mathbf{s}| = 1}\hat{P}_{\mathbf{s}}
\chis.
\end{equation}
Note that $\|P^{=1}\|_2^2 = \ip{P^{=1}}{P}$. Applying H{\"o}lder's
inequality:
\begin{equation}
\|P^{=1}\|_2^2 = \ip{P^{=1}}{P} \le \|P\|_p\|P^{=1}\|_q,
\end{equation}
where $\frac{1}{p}+ \frac{1}{q} =1$. Next we use hypercontractivity
to show that $\|P^{=1}\|_q \le \sqrt{q-1}\|P^{=1}\|_2$, so that
\begin{equation}
\|P^{=1}\|_2^2 \le \sqrt{q-1}\|P^{=1}\|_2\|P\|_p =
\sqrt{q-1}\|P^{=1}\|_2\|P\|_1^{\frac{1}{p}}.
\end{equation}
Dividing both sides by $\|P^{=1}\|_2$ and squaring both sides gives
us
\begin{equation}
\|P^{=1}\|_2^2 \le (q-1)\|P\|_1^{\frac{2}{p}}.
\end{equation}
\end{proof}


\section{A quantum FKN theorem}

The Friedgut-Kalai-Naor (FKN) theorem \cite{friedgut:2002a} states that, if a boolean function has most of its Fourier weight on the first level or below, then it is close to being a dictator. It has proven useful in social choice theory \cite{kalai:2002a} and the study of hardness of approximation \cite{khot:2004a,dinur:2007a}. In this section, we state and prove two quantum variants of the FKN theorem. The first is a direct generalisation of the classical result, and uses the quantum hypercontractive inequality. The second is a different generalisation, to the $\infty$-norm.


\subsection{Balancing quantum boolean functions}
\label{sec:balancing}

It will be convenient for the later results in this section to deal
only with quantum boolean functions which have no weight on level 0
(i.e.\ are traceless). We therefore describe a method to {\em
balance} quantum boolean functions. That is, from a quantum boolean
function $f$ we produce another quantum boolean function $g$ which
satisfies $\tr(g) = 0$ and has $\|f^{\le 1}\|_2^2 = \|g^{=1}\|_2^2$.

\begin{definition}
The \emph{spin flip} operation $S$ is defined as
\begin{equation}
\label{eq:spinflip}
S(M) = \sigma^2 M^* \sigma^2
\end{equation}
for any single qubit operator $M$.
\end{definition}
The spin flip operation is a superoperator but is not a CP map. Note
that $S(\sigma^j) = -\sigma^j$ for all $j \in \{1,2,3\}$.

\begin{definition}
The \emph{balancing operation} $\mathcal{B}$ is the superoperator
\begin{equation}
\begin{split}
\mathcal{B}(f) &= |0\rangle\langle0| \otimes f - |1\rangle\langle
1|\otimes S^{\otimes n}(f) \\
&= |0\rangle\langle0| \otimes f - |1\rangle\langle 1|\otimes
(\sigma^2\otimes \cdots \otimes\sigma^2)f^*(\sigma^2\otimes \cdots
\otimes\sigma^2),
\end{split}
\end{equation}
where we attach an ancilla qubit, denoted $A$.
\end{definition}

\begin{lemma}
Let $f$ be a quantum boolean function. Then $g = \mathcal{B}(f)$ is
a quantum boolean function.
\end{lemma}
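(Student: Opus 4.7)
The plan is to exploit the block-diagonal structure of $g$ with respect to the ancilla qubit $A$. Writing $g = |0\rangle\langle 0|\otimes f \,-\, |1\rangle\langle 1|\otimes S^{\otimes n}(f)$, the two blocks are $f$ and $-S^{\otimes n}(f)$ respectively, so it suffices to show that each block is Hermitian and squares to $\mathbb{I}$.

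First I would verify Hermiticity block-by-block. For the $|0\rangle\langle 0|$ block, $f$ is Hermitian because it is a quantum boolean function (unitary with square $\mathbb{I}$ forces eigenvalues $\pm 1$). For the $|1\rangle\langle 1|$ block, I would compute
\begin{equation*}
(S^{\otimes n}(f))^\dagger = \bigl((\sigma^2)^{\otimes n} f^* (\sigma^2)^{\otimes n}\bigr)^\dagger = (\sigma^2)^{\otimes n} f^T (\sigma^2)^{\otimes n},
\end{equation*}
and then use $f = f^\dagger$ to rewrite $f^T = (\overline{f})^T{}^T = f^*$, which gives $(S^{\otimes n}(f))^\dagger = S^{\otimes n}(f)$.

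Next I would check $g^2 = \mathbb{I}$. Because the two ancilla projectors are orthogonal, $g^2 = |0\rangle\langle 0|\otimes f^2 + |1\rangle\langle 1|\otimes (S^{\otimes n}(f))^2$. The first summand reduces to $|0\rangle\langle 0|\otimes \mathbb{I}$ by hypothesis. For the second, I would use $(\sigma^2)^2 = \mathbb{I}$ to collapse the middle factor:
\begin{equation*}
(S^{\otimes n}(f))^2 = (\sigma^2)^{\otimes n} f^* (\sigma^2)^{\otimes n}(\sigma^2)^{\otimes n} f^* (\sigma^2)^{\otimes n} = (\sigma^2)^{\otimes n} (f^2)^* (\sigma^2)^{\otimes n} = \mathbb{I}.
\end{equation*}
Combining the two blocks yields $g^2 = \mathbb{I}$. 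Since $g$ is Hermitian and squares to identity, it is automatically unitary, completing the verification that it is a quantum boolean function.

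There is no real obstacle here: the lemma is essentially a bookkeeping exercise that relies on $(\sigma^2)^2 = \mathbb{I}$ and the fact that complex conjugation commutes with squaring. The only mild subtlety is remembering that $S$ involves the entrywise conjugate $f^*$ rather than $f^\dagger$, which is what makes $(S^{\otimes n}(f))^2 = \mathbb{I}$ (rather than, say, $-\mathbb{I}$) when one unpacks the definitions.
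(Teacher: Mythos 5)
Your proposal is correct and follows essentially the same route as the paper: compute $g^2$ block-by-block using the orthogonality of the ancilla projectors and $(\sigma^2)^2 = \mathbb{I}$, so that $(S^{\otimes n}(f))^2 = S^{\otimes n}(f^2) = \mathbb{I}$. The only difference is that you also spell out the Hermiticity of each block, which the paper leaves implicit.
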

\begin{proof}
Consider
\begin{equation}
\begin{split}
g^2 &= |0\rangle\langle0| \otimes f^2 + |1\rangle\langle 1|\otimes
S^{\otimes n}(f^2) \\
&= |0\rangle\langle0| \otimes \mathbb{I} + |1\rangle\langle
1|\otimes S^{\otimes n}(\mathbb{I}) = \mathbb{I}.
\end{split}
\end{equation}
\end{proof}

\begin{proposition}
Let $f$ be a quantum boolean function. Then $\tr(\mathcal{B}(f)) =
0$.
\end{proposition}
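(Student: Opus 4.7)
The plan is to compute $\tr(\mathcal{B}(f))$ directly from the definition, using linearity and the fact that $\tr(|b\rangle\langle b|\otimes A)=\tr(A)$ for $b\in\{0,1\}$. This immediately reduces the claim to the identity $\tr(S^{\otimes n}(f)) = \tr(f)$, since then
\[
\tr(\mathcal{B}(f)) = \tr(f) - \tr(S^{\otimes n}(f)) = 0.
\]

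To verify $\tr(S^{\otimes n}(f)) = \tr(f)$, I would unwind the definition (\ref{eq:spinflip}): $S^{\otimes n}(f) = (\sigma^2)^{\otimes n}\,f^*\,(\sigma^2)^{\otimes n}$. Using cyclicity of the trace and $(\sigma^2)^2 = \1$ gives $\tr(S^{\otimes n}(f)) = \tr(f^*) = \overline{\tr(f)}$. Since $f$ is a quantum boolean function it is Hermitian (noted right after Definition~\ref{def:qbf}), so $\tr(f)$ is real and $\overline{\tr(f)} = \tr(f)$, completing the proof.

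Alternatively, one could give a slightly more Fourier-flavoured argument: since $S(\sigma^0)=\sigma^0$ and $S(\sigma^j) = -\sigma^j$ for $j\in\{1,2,3\}$, we get $S^{\otimes n}(\chi_{\mathbf{s}}) = (-1)^{|\mathbf{s}|}\chi_{\mathbf{s}}$. Only $\chi_{\mathbf{0}} = \1$ has nonzero trace, so $\tr(S^{\otimes n}(f)) = 2^n\hat{f}_{\mathbf{0}} = \tr(f)$. This formulation makes it transparent that the second term of $\mathcal{B}(f)$ is specifically engineered to cancel the level-$0$ component of $f$ without disturbing the higher-level structure.

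There really is no substantive obstacle here; the statement is essentially a consistency check on the construction. The only mild subtlety is remembering that the Hermiticity of $f$ is what forces $\tr(f^*) = \tr(f)$, which is why the construction uses complex conjugation rather than, say, a plus sign between the two blocks.
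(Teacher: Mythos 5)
Your proof is correct and follows essentially the same route as the paper's: a direct calculation reducing $\tr(\mathcal{B}(f))$ to $\tr(f)-\tr(f^*)$ via cyclicity of the trace and $(\sigma^2)^2=\1$, which vanishes because $f$ is Hermitian. Your version is in fact slightly more explicit than the paper's about where Hermiticity enters, and the Fourier-level remark is a nice but inessential addition.
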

\begin{proof}
The proof is by direct calculation.
\begin{equation}
\tr(\mathcal{B}(f)) = \tr(f) - \tr(f^*) = 2\Im(\tr(f)) = 0,
\end{equation}
where the first equality follows from an application of the cyclic rule of trace.
\end{proof}

\begin{proposition}
Let $f$ be a quantum boolean function, and $g =\mathcal{B}(f)$. Then
$\|g^{\le 1}\|_2^2 = \|f^{\le 1}\|_2^2$.
\end{proposition}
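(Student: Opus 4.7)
The plan is to compute the Fourier expansion of $g$ on the enlarged $(n+1)$-qubit Hilbert space explicitly, then read off $\|g^{=1}\|_2^2$ and compare with $\|f^{\le 1}\|_2^2$.

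First I would analyse the spin flip $S$ on a single qubit. Direct computation gives $S(\sigma^0)=\sigma^0$ and $S(\sigma^j)=-\sigma^j$ for $j\in\{1,2,3\}$: for $j=0$ this is $\sigma^2\sigma^2=\1$; for $j=1,3$ the conjugate is $\sigma^j$ itself and $\sigma^2$ anticommutes with both; for $j=2$ the conjugate flips a sign which is then cancelled by $(\sigma^2)^2=\1$. Although $S$ is conjugate-linear, when we apply $S^{\otimes n}$ to a Hermitian $f=\sum_{\mathbf{s}}\hat{f}_{\mathbf{s}}\chi_{\mathbf{s}}$ the coefficients $\hat{f}_{\mathbf{s}}$ are real, and a bookkeeping check (separating the factor $(-1)^{\#\{j:s_j=2\}}$ from complex conjugation of $\chi_{\mathbf{s}}$, and the factor $(-1)^{\#\{j:s_j\in\{1,3\}\}}$ from conjugation by $\sigma^{2\otimes n}$) collapses to the clean formula
\begin{equation*}
S^{\otimes n}(f)=\sum_{\mathbf{s}}(-1)^{|\mathbf{s}|}\hat{f}_{\mathbf{s}}\chi_{\mathbf{s}}.
\end{equation*}

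Next I would substitute $|0\rangle\langle0|=\tfrac12(\1+\sigma^3)$ and $|1\rangle\langle1|=\tfrac12(\1-\sigma^3)$ into the definition of $\mathcal{B}$, obtaining
\begin{equation*}
g=\tfrac12\,\1\otimes\bigl(f-S^{\otimes n}(f)\bigr)+\tfrac12\,\sigma^3\otimes\bigl(f+S^{\otimes n}(f)\bigr)=\1\otimes f_{\mathrm{odd}}+\sigma^3\otimes f_{\mathrm{even}},
\end{equation*}
where $f_{\mathrm{odd}}$ and $f_{\mathrm{even}}$ collect the Fourier terms of $f$ with odd, respectively even, $|\mathbf{s}|$. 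Reading off Fourier coefficients of $g$ in the $(n+1)$-qubit Pauli basis (with ancilla $A$ in the first slot): the only non-vanishing ones have $A$-coordinate $0$ or $3$, with $\hat{g}_{(0,\mathbf{s})}=\hat{f}_{\mathbf{s}}$ when $|\mathbf{s}|$ is odd and $\hat{g}_{(3,\mathbf{s})}=\hat{f}_{\mathbf{s}}$ when $|\mathbf{s}|$ is even.

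The level-$1$ Fourier terms of $g$ are therefore precisely $(0,\mathbf{s})$ with $|\mathbf{s}|=1$ and $(3,\mathbf{0})$, so
\begin{equation*}
\|g^{=1}\|_2^2=\sum_{|\mathbf{s}|=1}\hat{f}_{\mathbf{s}}^2+\hat{f}_{\mathbf{0}}^2=\|f^{\le 1}\|_2^2,
\end{equation*}
which is the claim. The only real obstacle is the first step: keeping careful track of the signs introduced by the conjugate-linearity of $S$ together with the behaviour of $(\chi_{\mathbf{s}})^*$ under complex conjugation, and verifying that these combine to give exactly $(-1)^{|\mathbf{s}|}$. Once that identity is in hand, everything else is linear algebra and Parseval.
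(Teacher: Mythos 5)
Your proof is correct and takes essentially the same route as the paper's: a direct computation of the Fourier coefficients of $\mathcal{B}(f)$ from the fact that $S$ fixes $\sigma^0$ and negates $\sigma^1,\sigma^2,\sigma^3$ (so $S^{\otimes n}(\chis) = (-1)^{|\mathbf{s}|}\chis$, using that Hermiticity makes the coefficients real and tames the antilinearity of $S$), merely packaged globally as $g = \1\otimes f_{\mathrm{odd}} + \sigma^3\otimes f_{\mathrm{even}}$ instead of tracing $g$ against individual weight-one Pauli operators as the paper does. The only point worth making explicit is that the claim concerns $\|g^{\le 1}\|_2^2$ rather than $\|g^{=1}\|_2^2$, but your expansion already shows the level-zero coefficient of $g$ vanishes (the constant term of $f_{\mathrm{odd}}$ is zero), which is exactly the tracelessness the paper imports from the preceding proposition, so the argument closes.
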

\begin{proof}

We can calculate the first level of $\mathcal{B}(f)$ by tracing against
weight-$1$ operators on the system:
\begin{equation}
\begin{split}
\tr(\sigma_j^s\mathcal{B}(f)) &= \tr(\sigma_j^sf) - \tr(\sigma_j^s
(\sigma^2\otimes \cdots \otimes\sigma^2)(f^*)(\sigma^2\otimes \cdots
\otimes\sigma^2))
\\
&= 2\tr(\sigma_j^sf),
\end{split}
\end{equation}
recalling that the notation $\sigma^j_i$ is used for the operator which acts as $\sigma^j$ at the $i$'th
position, and trivially elsewhere. That is, $\ip{\sigma_j^s}{g} = \ip{\sigma_j^s}{f}$, so all the degree 1
terms in the Fourier expansion of $g$ (on the system) are identical to
those for $f$. The only non-zero weight-$1$ term of $g$ on the ancilla $A$ is
given by
\begin{equation}
\tr(\sigma^3_A g) = 2 \tr(f).
\end{equation}
\end{proof}
Our balancing operation reduces to the previously known classical balancing operation on
classical boolean functions \cite{friedgut:2002a}.


\subsection{Exact quantum FKN}

To gain some intuition for the later results, we begin by sketching the (straightforward) proof of an exact variant of the FKN theorem.

\begin{proposition}
Let $f$ be a quantum boolean function on $n$ qubits. If $\sum_{|s|>1} \hat{f}_s^2 = 0$, then $f$ is either a dictator or constant.
\end{proposition}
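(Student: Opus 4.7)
The plan is to expand $f$ in the Pauli basis, use the hypothesis to restrict the expansion to weight-$\le 1$ terms, and then exploit the constraint $f^2 = \1$ to force at most one non-identity term to survive.

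Write
\begin{equation*}
f \;=\; \alpha_0 \1 + \sum_{j=1}^n A_j, \qquad A_j \;=\; \sum_{s=1}^3 \alpha_{j,s}\,\sigma^s_j,
\end{equation*}
where the coefficients are real because $f$ is Hermitian, and the hypothesis guarantees that no weight-$\ge 2$ stabilizer appears. Now expand $f^2$. Since $A_j$ and $A_k$ act on disjoint qubits for $j \neq k$, they commute, so
\begin{equation*}
f^2 \;=\; \Bigl(\alpha_0^2 + \sum_j A_j^2\Bigr) + 2\alpha_0 \sum_j A_j + 2\sum_{j<k} A_j A_k.
\end{equation*}
A short calculation using $\sigma^s\sigma^t = \delta_{st}\1 + i\,\epsilon_{stu}\sigma^u$ and the antisymmetry of $\epsilon_{stu}$ against the symmetric coefficient matrix $\alpha_{j,s}\alpha_{j,t}$ gives $A_j^2 = \|A_j\|_2^2\,\1$ (scaled with the un-normalised Hilbert–Schmidt norm appropriate to one qubit, i.e.\ $\sum_s \alpha_{j,s}^2$).

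The key step is to match $f^2 = \1$ Fourier-level by Fourier-level. The stabilizers $\sigma^s_j \sigma^t_k$ for $s,t\in\{1,2,3\}$, $j<k$ are linearly independent weight-$2$ basis elements, so the level-$2$ equation $A_j A_k = 0$ forces $\alpha_{j,s}\alpha_{k,t} = 0$ for all $s,t$. Hence for every pair $j \neq k$ either $A_j = 0$ or $A_k = 0$, so \emph{at most one} $A_j$ is non-zero. The level-$1$ equation $2\alpha_0 A_j = 0$ then says that if some $A_j \neq 0$, then $\alpha_0 = 0$; and the level-$0$ equation becomes $\alpha_0^2 + \sum_j \|A_j\|_2^2 = 1$.

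Finally, conclude by cases. If all $A_j = 0$, then $\alpha_0^2 = 1$, so $f = \pm\1$ is constant (support $0$). Otherwise there is a unique $j^\star$ with $A_{j^\star} \neq 0$ and $\alpha_0 = 0$, giving $f = A_{j^\star}$ with $\|A_{j^\star}\|_2^2 = 1$, i.e.\ $A_{j^\star}^2 = \1$; this $f$ is Hermitian, squares to identity, and has $\supp(f) = \{j^\star\}$, so it is a dictator. I expect no real obstacle beyond being careful that the weight-$2$ Paulis $\sigma^s_j\sigma^t_k$ are linearly independent, which is immediate from orthonormality of the stabilizer basis.
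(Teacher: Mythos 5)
Your proof is correct, and it takes a genuinely different route from the paper's. The paper first invokes the balancing operation of Section 8.1 to assume $f$ is traceless, writes $f=\sum_i U_i$ with each $U_i$ a traceless single-qubit operator, and then argues spectrally: the product of the top eigenvectors of the $U_i$ is an eigenvector of $f$ with eigenvalue $\sum_i \lambda_i$, where $\lambda_i = \bigl(\hat{f}_{\mathbf{i:1}}^2+\hat{f}_{\mathbf{i:2}}^2+\hat{f}_{\mathbf{i:3}}^2\bigr)^{1/2}$, and since $\|f\|_\infty = 1$ while $\sum_i \lambda_i^2 = 1$, the sum $\sum_i \lambda_i$ exceeds $1$ unless at most one $\lambda_i$ is non-zero. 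You instead work purely algebraically: expand $f^2=\1$ in the stabilizer basis and match Fourier levels, so the weight-$2$ equations $A_jA_k=0$ (valid because the $\sigma^s_j\sigma^t_k$ are orthonormal, hence independent) force at most one non-trivial single-qubit block, and the weight-$1$ equations kill the identity coefficient whenever a block survives. Your version is more self-contained --- it needs no balancing step, no spectral argument, and treats the constant term directly rather than removing it --- whereas the paper's eigenvalue/$\infty$-norm argument is structured to foreshadow the perturbative machinery used for the approximate (2-norm FKN) theorem that follows. Both are complete proofs of the stated proposition.
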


\begin{proof}
By the results of Section \ref{sec:balancing}, we can assume that $f$ is balanced. Expand $f$ as
\be f = \sum_{i=1}^n U_i, \ee
where $U_i$ is a traceless operator that acts non-trivially on only the $i$'th qubit, with at most two distinct eigenvalues. Let the positive eigenvalue of the non-trivial component of $U_i$ be $\lambda_i$, and the corresponding eigenvector be $\ket{e_i}$. Then the tensor product $\bigotimes_i \ket{e_i}$ is an eigenvector of $f$ with eigenvalue $\lambda = \sum_i \lambda_i \le \|f\|_\infty$. Let $\mathbf{i:x}$ denote the string of length $n$ with value $x$ at position $i$, and 0 elsewhere. Because $U_i = \mathbb{I}_{[i-1]}\otimes M_i \otimes \mathbb{I}_{[i+1,\ldots, n]}$ and $M_i = \hat{f}_{\mathbf{i:1}}\sigma_i^1 + \hat{f}_{\mathbf{i:2}}\sigma^2_i + \hat{f}_{\mathbf{i:1}}\sigma^3_i = \left(\begin{smallmatrix} \hat{f}_{\mathbf{i:3}} & \hat{f}_{\mathbf{i:1}}-i\hat{f}_{\mathbf{i:2}} \\
\hat{f}_{\mathbf{i:1}}+i\hat{f}_{\mathbf{i:2}} & -\hat{f}_{\mathbf{i:3}}\end{smallmatrix}\right)$, we have that
\be \lambda_i = \sqrt{\hat{f}_{\mathbf{i:1}}^2+\hat{f}_{\mathbf{i:2}}^2+\hat{f}_{\mathbf{i:3}}^2}. \ee
Thus $\sum_i \lambda_i$ is strictly greater than 1 unless $f$ is a dictator.
\end{proof}


\subsection{Quantum FKN in the 2-norm}

In this section, we will prove the following result.

\begin{theorem}
\label{thm:quantumfkn}
There is a constant $K$ such that, for every quantum boolean function $f$, if $\sum_{|s|>1} \hat{f}_s^2 < \epsilon$, $f$ is $K\epsilon$-close to being a dictator or constant.
\end{theorem}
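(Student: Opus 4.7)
The plan is to first reduce, via the balancing construction of Section~\ref{sec:balancing}, to the case of a traceless quantum boolean function $f$ whose Fourier weight above level one is at most $\epsilon$; one checks that the balancing operation preserves the $2$-norm distance to dictators and sends constants to a dictator on the ancilla qubit, so it suffices to show that such an $f$ is $O(\epsilon)$-close to a dictator. Writing $f = L + h$ with $L = f^{=1}$ and $h = f^{>1}$, the identity $f^2 = \mathbb{I}$ rearranges to
\begin{equation*}
L^2 - \mathbb{I} = -(Lh + hL + h^2).
\end{equation*}
Expanding $L = \sum_{i,j} a_{ij}\,\sigma^j_i$ and observing that the same-qubit cross terms $a_{ij}a_{ij'}(\sigma^j_i\sigma^{j'}_i + \sigma^{j'}_i\sigma^j_i)$ vanish because distinct Paulis anticommute, one computes $L^2 = \|L\|_2^2\,\mathbb{I} + \sum_{i<i',\,j,j'} 2a_{ij}a_{i'j'}\,\sigma^j_i\sigma^{j'}_{i'}$. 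Writing $\alpha_i = \sum_j a_{ij}^2$, this gives the clean Fourier identity
\begin{equation*}
\|L^2 - \mathbb{I}\|_2^2 = (1-\|L\|_2^2)^2 + 4\sum_{i<i'}\alpha_i\alpha_{i'}.
\end{equation*}

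The heart of the argument is to bound $\|L^2 - \mathbb{I}\|_2^2 = O(\epsilon)$. The naive matrix H\"older estimate in the $2$-norm would give $\|Lh\|_2 \le \|L\|_4\|h\|_4$, but $\|h\|_4$ admits no bound better than the absolute constant $\|f\|_4 + \|L\|_4 \le 1 + \sqrt{3}$, producing only a useless $O(1)$ bound. My plan is to work instead in the $4/3$-norm: matrix H\"older with $1/4 + 1/2 = 3/4$ yields
\begin{equation*}
\|Lh\|_{4/3} \le \|L\|_4 \|h\|_2, \qquad \|h^2\|_{4/3} \le \|h\|_4 \|h\|_2,
\end{equation*}
so using the hypercontractive bound $\|L\|_4 \le \sqrt{3}\,\|L\|_2 \le \sqrt{3}$ (Corollary~\ref{cor:hypercontractivity} with $d=1$) together with $\|h\|_4 \le 1 + \sqrt{3}$ and $\|h\|_2 \le \sqrt{\epsilon}$ gives $\|L^2 - \mathbb{I}\|_{4/3} = O(\sqrt{\epsilon})$. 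Since $L^2 - \mathbb{I}$ has degree at most $2$, the reverse direction of Corollary~\ref{cor:hypercontractivity} upgrades this to $\|L^2 - \mathbb{I}\|_2 \le 3\,\|L^2 - \mathbb{I}\|_{4/3} = O(\sqrt{\epsilon})$, and squaring gives the desired $\|L^2 - \mathbb{I}\|_2^2 = O(\epsilon)$.

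From the Fourier identity above this forces $\sum_{i<i'}\alpha_i\alpha_{i'} = O(\epsilon)$. Combined with $\sum_i \alpha_i = \|L\|_2^2 \ge 1 - \epsilon$, the elementary identity $\sum_i\alpha_i^2 = (\sum_i\alpha_i)^2 - 2\sum_{i<i'}\alpha_i\alpha_{i'}$ together with $\sum_i \alpha_i^2 \le (\max_i\alpha_i)\sum_i\alpha_i$ forces $\alpha_{i^*} \ge 1 - O(\epsilon)$ for some $i^*$ and $\sum_{i \ne i^*}\alpha_i = O(\epsilon)$. Setting $D = \alpha_{i^*}^{-1/2}\sum_j a_{i^*j}\,\sigma^j_{i^*}$ produces a single-qubit Hermitian unitary (hence a dictator), and orthogonality of Fourier supports gives
\begin{equation*}
\|f - D\|_2^2 = (1 - \sqrt{\alpha_{i^*}})^2 + \sum_{i \ne i^*}\alpha_i + \|h\|_2^2 = O(\epsilon).
\end{equation*}
Unwinding the balancing step yields the theorem.

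The main obstacle is precisely the $\|L^2 - \mathbb{I}\|_2^2 = O(\epsilon)$ step: the direct $2$-norm approach stalls because $\|h\|_4$ cannot be controlled better than a constant, so one loses the small $\epsilon$ entirely. The key idea is the ``go down in $p$, then return'' trick---move into $\|\cdot\|_{4/3}$ so that H\"older charges only one factor of $\|h\|_2 \sim \sqrt{\epsilon}$, and then exploit the low degree of $L^2 - \mathbb{I}$ to come back to the $2$-norm at bounded cost via reverse hypercontractivity. Without the low-degree structure of $L$ this argument would not close, and it is what makes the quantum hypercontractive inequality of Theorem~\ref{thm:hypercontractivity} indispensable here.
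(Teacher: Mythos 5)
Your proposal is correct, and although its skeleton matches the paper's (balance $f$, split as $f = l + h$ with $l = f^{=1}$ and $h$ the high part, use $f^2 = \1$ to control the non-identity component of $l^2$, then deduce Fourier concentration on a single qubit and read off a dictator), the crucial step is carried out by a genuinely different mechanism. The paper bounds the degree-2 part $q$ of $l^2$ by an eigenvalue-counting argument: writing $q = \epsilon\1 - hf - fh + h^2$, it applies Chebyshev to the spectrum of $h$, singular-value domination facts from Bhatia (Problems III.6.2 and III.6.5) to transfer the concentration to $q$, and then the dedicated Lemma~\ref{lem:deg2} --- itself resting on the $(2,4)$ hypercontractive bound for degree-2 operators --- to convert ``most eigenvalues of $q$ are $O(\sqrt{\epsilon})$'' into $\|q\|_2^2 = O(\epsilon)$. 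You instead bound $\|l^2-\1\|_{4/3}$ directly via the tracial H\"older inequality $\|AB\|_{4/3}\le\|A\|_4\|B\|_2$, charging only one factor $\|h\|_2\le\sqrt{\epsilon}$ and using $\|l\|_4\le\sqrt{3}\,\|l\|_2$ and $\|h\|_4=O(1)$, and then return to the 2-norm with the reverse direction of Corollary~\ref{cor:hypercontractivity} applied to the degree-$\le 2$ Hermitian operator $l^2-\1$, giving $\|l^2-\1\|_2\le 3\|l^2-\1\|_{4/3}$; your identification of why the naive 2-norm H\"older estimate stalls is exactly right. This route avoids Lemma~\ref{lem:deg2}, the Chebyshev step and the matrix perturbation facts entirely, produces explicit constants with little effort, and uses both directions of Corollary~\ref{cor:hypercontractivity} rather than only the forward one; the single ingredient not stated in the paper, the three-factor H\"older inequality for normalised Schatten norms, is standard and compatible with the paper's normalisation. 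Your endgame is also slightly cleaner: you exhibit the dictator $D=\alpha_{i^*}^{-1/2}\sum_j a_{i^*j}\,\sigma^j_{i^*}$ explicitly and evaluate $\|f-D\|_2^2$ by Parseval, where the paper truncates the Fourier expansion and renormalises; both treatments of unwinding the balancing reduction are at the same informal level of detail as the paper's.
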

The proof is essentially a quantisation of one of the two proofs of the classical theorem given in the original paper \cite{friedgut:2002a}; see also the exposition in Lecture 13 of \cite{odonnellnotes}. We will require the following lemma.

\begin{lemma}
\label{lem:deg2}
Let $q$ be a degree 2 Hermitian operator on $n$ qubits such that $\Pr_i[|\lambda_i(q)| > \delta] = p$, where the probability is taken with respect to the uniform distribution on the eigenvalues of $q$. Then $\|q\|_2^2 \le \frac{\delta^2(1-p)}{1-9\sqrt{p}}$.
\end{lemma}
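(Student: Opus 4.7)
The plan is to decompose $\|q\|_2^2$ according to whether an eigenvalue is small or large, and control the large-eigenvalue part using the quantum hypercontractive inequality from Corollary~\ref{cor:hypercontractivity}. Using the normalised $2$-norm, $\|q\|_2^2 = \E_i[\lambda_i(q)^2]$, where the expectation is uniform over the $2^n$ eigenvalues. Writing $A = \{i : |\lambda_i(q)| > \delta\}$, so $|A|/2^n = p$, I would first bound
\begin{equation*}
\|q\|_2^2 = \E_i[\lambda_i(q)^2\,\1[i\notin A]] + \E_i[\lambda_i(q)^2\,\1[i\in A]] \le \delta^2(1-p) + \E_i[\lambda_i(q)^2\,\1[i\in A]].
\end{equation*}

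Next, I would apply the Cauchy--Schwarz (i.e.\ H\"older with $p=p'=2$) inequality to the second term:
\begin{equation*}
\E_i[\lambda_i(q)^2\,\1[i\in A]] \le \bigl(\E_i[\lambda_i(q)^4]\bigr)^{1/2}\bigl(\E_i[\1[i\in A]]\bigr)^{1/2} = \|q\|_4^2 \sqrt{p}.
\end{equation*}
Now comes the key step: since $q$ is Hermitian and of degree at most $2$, Corollary~\ref{cor:hypercontractivity} with exponent $4$ and $d=2$ gives $\|q\|_4 \le (4-1)^{2/2}\|q\|_2 = 3\|q\|_2$, hence $\|q\|_4^2 \le 9\|q\|_2^2$.

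Combining these estimates yields
\begin{equation*}
\|q\|_2^2 \le \delta^2(1-p) + 9\sqrt{p}\,\|q\|_2^2,
\end{equation*}
so that, provided $9\sqrt{p}<1$, we may rearrange to obtain $\|q\|_2^2 \le \delta^2(1-p)/(1-9\sqrt{p})$, which is exactly the claimed bound. I do not expect any real obstacle here: the only non-routine ingredient is invoking the hypercontractive estimate at $q=4$, and that is precisely the quantum inequality established in the preceding section. The bound $9\sqrt{p}<1$ needed to make the rearrangement meaningful is implicit in the statement (otherwise the right-hand side is vacuous).
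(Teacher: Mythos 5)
Your proof is correct and is essentially identical to the paper's: the paper also splits $\|q\|_2^2$ over the eigenvalues with $|\lambda_i|\le\delta$ and $|\lambda_i|>\delta$ (written there as a decomposition $q = r\oplus s$), applies Cauchy--Schwarz to the large-eigenvalue part to pick up the factor $\sqrt{p}\,\|q\|_4^2$, and then invokes Corollary~\ref{cor:hypercontractivity} with $d=2$ to get $\|q\|_4^2\le 9\|q\|_2^2$ before rearranging.
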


\begin{proof}
Expand $q=r\oplus s$, where $r$ projects onto the eigenvectors of $q$ with eigenvalues at most $\delta$ in absolute value. Thus $\mathrm{rank}(r) \le (1-p)2^n$ and $\mathrm{rank}(s)=p2^n$. Then the lemma follows from
\beas
\|q\|_2^2 &=& \|r\|_2^2 + \|s\|_2^2 \le (1-p) \delta^2 + \sqrt{p} \|s\|_4^2\\
&\le& (1-p) \delta^2 + \sqrt{p} \|q\|_4^2 \le (1-p) \delta^2 + 9 \sqrt{p} \|q\|_2^2,
\eeas
where the final inequality is Corollary \ref{cor:hypercontractivity}.
\end{proof}

We now turn to the proof of Theorem \ref{thm:quantumfkn}.

\begin{proof}[Proof of Theorem \ref{thm:quantumfkn}]
By the results of Section \ref{sec:balancing}, we can assume that $f$ has no weight on level 0, i.e.\ is traceless. Given $f = \sum_{|s|=1} \hat{f}_s \chis + \sum_{|s|>1} \hat{f}_s \chis$, call the first sum $l$ and the second $h$ (``low'' and ``high''). As $f$ is quantum boolean, $(l+h)^2 = \1$. Thus $l^2 + hl + lh + h^2 = \1$. On the other hand, by explicit expansion of $l$, we have
\be l^2 = \sum_{\mathbf{s},|\mathbf{s}|=1} \fhats^2 \1 +
\mathop{\mathop{\sum_{\mathbf{s},\mathbf{t},}}_{|\mathbf{s}|=|\mathbf{t}|=1,}}_{|\mathbf{s}\cap
\mathbf{t}|=0} \fhats \fhatt \chis \chit  = (1-\epsilon)\1 + q, \ee
where we define a new operator $q$. Our goal will be to show that $\|q\|_2^2$ is small, i.e.\ at most $K \epsilon$ for some constant $K$. The theorem will follow: if
\bea
K \epsilon \ge \|q\|_2^2 &=& \mathop{\mathop{\sum_{\mathbf{s},\mathbf{t},}}_{|\mathbf{s}|=|\mathbf{t}|=1,}}_{|\mathbf{s}\cap \mathbf{t}|=0} \fhats^2 \fhatt^2
= \left(\sum_{\mathbf{s},|\mathbf{s}|=1} \fhats^2 \right)^2 - \mathop{
\mathop{\sum_{\mathbf{s},\mathbf{t},}}_{|\mathbf{s}|=|\mathbf{t}|=1,}}_{|\mathbf{s}\cap \mathbf{t}|=1} \fhats^2 \fhatt^2  \\
&=& (1 - \epsilon)^2 - \mathop{
\mathop{\sum_{\mathbf{s},\mathbf{t},}}_{|\mathbf{s}|=|\mathbf{t}|=1,}}_{|\mathbf{s}\cap \mathbf{t}|=1} \fhats^2 \fhatt^2,
\eea
then, for some $K'$,
\bea
1 - K' \epsilon &\le& \mathop{
\mathop{\sum_{\mathbf{s},\mathbf{t},}}_{|\mathbf{s}|=|\mathbf{t}|=1,}
}_{|\mathbf{s}\cap \mathbf{t}|=1} \fhats^2 \fhatt^2 = \sum_{\mathbf{s},|\mathbf{s}|=1} \fhats^2 \Big( \mathop{\sum_{\mathbf{t},|\mathbf{t}|=1,}}_{|\mathbf{s}\cap \mathbf{t}|=1} \fhatt^2 \Big)\\
&\le& \max_{\mathbf{s},|\mathbf{s}|=1} \mathop{\sum_{\mathbf{t},|\mathbf{t}|=1,}}_{|\mathbf{s}\cap \mathbf{t}|=1} \fhatt^2,
\eea
so there exists some index $i$ such that all but $K' \epsilon$ of the weight of $f$ is on terms that only depend on the $i$'th qubit. Setting all the other terms in the Fourier expansion of $f$ to zero and renormalising gives a quantum boolean function that is a dictator (on the $i$'th qubit) and is distance $K'' \epsilon$ from $f$, for some other constant $K''$.

It remains to show that $\|q\|_2^2$ is small. Expand $q$ as
\be q = \epsilon \1 - hl - lh - h^2 = \epsilon \1 - h(f-h) - (f-h)h - h^2 = \epsilon \1 - hf - fh + h^2 \ee
and consider the terms in this sum.  By the hypothesis of the theorem, $\|h\|_2^2 = \frac{1}{2^n} \sum_i \lambda_i(h)^2 = \epsilon$. We also have $\tr h=0$. By Chebyshev's inequality, this implies that, for any $K>0$, $\Pr_i[|\lambda_i(h)| > K \sqrt{\epsilon}] \le 1/K^2$ (taking the uniform distribution on the eigenvalues of $h$).

We also have $\sigma_i(hf) \le \sigma_i(h)$ (see \cite[Problem III.6.2]{bhatia:1997a} and note that $\|f\|_\infty=1$), and similarly for $\sigma_i(fh)$. As $h^2 \le \1$, for all $i$, $|\lambda_i| \le 1$ and so $\Pr_i[|\lambda_i(h^2)| > K \sqrt{\epsilon}] \le \Pr_i[|\lambda_i(h)| > K \sqrt{\epsilon}] \le 1/K^2$. This implies that (see \cite[Problem III.6.5]{bhatia:1997a})
\be \Pr_i [|\lambda_i(q)| > 3K \sqrt{\epsilon} + \epsilon] \le 3/K^2. \ee
Using Lemma \ref{lem:deg2}, taking $K$ to be a sufficiently small constant, the result follows.
\end{proof}


\subsection{Quantum FKN in the $\infty$-norm}
In this subsection we'll prove a quantum generalisation of the FKN
theorem in terms of the \emph{supremum} or infinity norm $\|\cdot
\|_\infty$. Our proof doesn't make use of hypercontractivity: only
standard results from matrix analysis are employed.
\begin{theorem}\label{thm:inftyfkn}
Let $f$ be a quantum boolean function. If $\|f-g\|_\infty \le
\epsilon$, where $g$ is a Hermitian operator with $g = g^{= 1}$ and
$\epsilon < \frac12$, then $f$ is close to a dictator $h$, i.e.,
$\|f-h\|_\infty\le 2\epsilon$.
\end{theorem}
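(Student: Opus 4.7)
The plan is to expand $g$ into its single-qubit constituents and exploit the fact that these commute. Since $g = g^{=1}$ is Hermitian and traceless, we can write $g = \sum_{i=1}^n M_i$ where $M_i = \hat{g}_{\mathbf{i:1}}\sigma^1_i + \hat{g}_{\mathbf{i:2}}\sigma^2_i + \hat{g}_{\mathbf{i:3}}\sigma^3_i$ acts non-trivially only on qubit $i$. Each $M_i$ is a traceless $2\times 2$ Hermitian with eigenvalues $\pm\lambda_i$, where $\lambda_i \equiv \sqrt{\hat{g}_{\mathbf{i:1}}^2 + \hat{g}_{\mathbf{i:2}}^2 + \hat{g}_{\mathbf{i:3}}^2} \ge 0$. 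Because these single-qubit operators act on disjoint registers they commute, so $g$ is simultaneously diagonalisable in a product basis and its eigenvalues are exactly $\sum_{i=1}^n e_i \lambda_i$ for all sign choices $e \in \{\pm 1\}^n$.

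Next I would use Weyl's inequality, together with $\|f - g\|_\infty \le \epsilon$ and the fact that $f$ has eigenvalues $\pm 1$, to conclude that every eigenvalue of $g$ lies in the disjoint union $[-1-\epsilon,-1+\epsilon] \cup [1-\epsilon,1+\epsilon]$ (disjoint since $\epsilon < 1/2$). Order the $\lambda_i$ so that $\lambda_1 \ge \lambda_2 \ge \cdots \ge \lambda_n$ and set $T = \sum_{i\ge 2}\lambda_i$. The maximal eigenvalue $\lambda_1 + T$ gives $\lambda_1 + T \le 1+\epsilon$, and the signed eigenvalue $\lambda_1 - T$ must satisfy $|\lambda_1 - T| \ge 1 - \epsilon$.

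The one case I must rule out is $\lambda_1 - T \le -(1-\epsilon)$. Adding and subtracting with $\lambda_1 + T \le 1 + \epsilon$ forces $\lambda_1 \le \epsilon$, hence $\lambda_i \le \epsilon$ for every $i$. But a trivial induction (at each step assign the next sign to reduce the running total in absolute value) shows one can always choose signs $e_i$ so that $|\sum_i e_i \lambda_i| \le \max_i \lambda_i \le \epsilon < 1-\epsilon$, contradicting that every signed sum has absolute value at least $1-\epsilon$. This is the main obstacle, but it is resolved precisely by the hypothesis $\epsilon < 1/2$. So we must be in the positive case, giving the pair of bounds
\begin{equation*}
\lambda_1 + T \le 1+\epsilon, \qquad \lambda_1 - T \ge 1-\epsilon,
\end{equation*}
which combine to $|\lambda_1 - 1| + T \le \epsilon$ (in particular $\lambda_1 > 0$, so what follows is well-defined).

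Finally I would take the dictator $h = M_1/\lambda_1$, which is Hermitian and squares to $\1$ on qubit $1$ (tensored with identity elsewhere). The three operators $h, M_1, M_2, \ldots, M_n$ all mutually commute, so their linear combinations' infinity norms are computed as the maximum absolute value of a sum of eigenvalues:
\begin{equation*}
\|g-h\|_\infty = \Bigl\|\,\bigl(1-\tfrac{1}{\lambda_1}\bigr)M_1 + \sum_{i\ge 2} M_i\,\Bigr\|_\infty = |\lambda_1 - 1| + T \le \epsilon.
\end{equation*}
The triangle inequality then delivers $\|f-h\|_\infty \le \|f-g\|_\infty + \|g-h\|_\infty \le 2\epsilon$, as required.
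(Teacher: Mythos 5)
Your proof is correct, and while it shares the paper's overall skeleton (Weyl's perturbation theorem, the observation that the spectrum of $g=\sum_i M_i$ consists of all signed sums $\sum_i e_i\lambda_i$, and a final triangle inequality), the heart of your argument differs from the paper's. The paper matches the two spectra as $\lambda(\mathbf{x})=\sum_j x_j\mu_j+\epsilon(\mathbf{x})$, takes discrete derivatives of these $2^n$ equations, argues that the correction $\epsilon(\mathbf{x})$ must be a linear function, and from $D_j\lambda\in\{-1,0,1\}$ deduces that exactly one $\mu_j$ is near $\pm1$; it then takes $h=\sgn(g)$. You instead work with just two extremal eigenvalues of $g$ --- the top one $\lambda_1+T\le 1+\epsilon$ and the signed one $\lambda_1-T$ --- and rule out the ``all coefficients small'' alternative by the greedy sign-choice argument showing some eigenvalue of $g$ would have absolute value at most $\max_i\lambda_i\le\epsilon<1-\epsilon$, yielding the clean quantitative conclusion $|\lambda_1-1|+T\le\epsilon$. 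This buys two things: you avoid the paper's somewhat delicate constancy/linearity argument for $\epsilon(\mathbf{x})$ (and you need no balancing step, since you only use that $f$ has eigenvalues $\pm1$), and your dictator $h=M_1/\lambda_1$ is manifestly a weight-one quantum boolean function with $\|g-h\|_\infty=|\lambda_1-1|+T\le\epsilon$ computed exactly from the commuting single-qubit pieces, whereas for the paper's choice $h=\sgn(g)$ the dictator property is less immediate because $g$ retains small contributions on the other qubits. The paper's route, on the other hand, extracts more structural information (the full error term is linear, with $\sum_j|\epsilon_j|\le\epsilon$), though that extra information is not needed for the stated bound.
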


\begin{remark}
This result has no classical analogue as $\|f-g\|_\infty$ can never
be small if $f$ and $g$ are different.
\end{remark}

\begin{proof}
By the results of Section \ref{sec:balancing}, we can assume that
$f$ is traceless. Our proof works by using the infinity-norm
closeness of $f$ and $g$ in an application of Weyl's perturbation
theorem \cite{bhatia:1997a} to force the eigenvalues of $f$ to be
close to those of $g$:
\begin{equation}
|\lambda_j^\downarrow(f)-\lambda_j^\downarrow(g)| \le \epsilon,
\quad j= 1, 2, \ldots, 2^n,
\end{equation}
where $\lambda_j^{\downarrow}(M)$ denote the eigenvalues of $M$, in
descending order. Thus:
\begin{equation}
\lambda_j^\downarrow(f) = \lambda_j^\downarrow(g) + \epsilon(j),
\quad j= 1, 2, \ldots, 2^n.
\end{equation}
Since $g$ can be written as $g = \sum_{j\in[n]} g_j$, with
$\supp(g_j) = \{j\}$ and $\tr(g_j) = 0$, the eigenvalues of $g$ are given by
$\sum_{j=1}^n x_j \mu_j$, where $x_j \in \{-1, +1\}$ and $\mu_j =
\|g_j\|_\infty$. (This follows because $g_j$ is acts on the $j$th qubit as a 2-dimensional matrix.)

So we can label the eigenvalues of $g$ with $\mathbf{x} \in \{-1,
+1\}^n$ and we rewrite the perturbation condition as
\begin{equation}\label{eq:epseigeqs}
\lambda(\mathbf{x}) = \sum_{j=1}^n x_j \mu_j + \epsilon(\mathbf{x}),
\end{equation}
where $\lambda(\mathbf{x})$ is the eigenvalue of $f$ corresponding
to the eigenvalue $\sum_{j=1}^n x_j \mu_j$ of $g$ and
$\epsilon(\mathbf{x})$ is a correction. Our strategy is to now show
that $\epsilon(\mathbf{x})$ is a linear function.

To this end, we differentiate the $2^n$ equations
(\ref{eq:epseigeqs}) with respect to $x_j$:
\begin{equation}\label{eq:mujeps}
\mu_j = D_j\lambda - D_j\epsilon(\mathbf{x}),
\end{equation}
where, eg.,
\begin{equation}
D_j\lambda = \frac{\lambda(x_1, \ldots, x_{j}=+1, \ldots, x_n) -
\lambda(x_1, \ldots, x_{j}=-1, \ldots, x_n)}{2},
\end{equation}
i.e., we take the difference between $\lambda$ with the $j$th
variable assigned to $1$ and to $-1$. Note that $D_j\lambda :
\{-1,+1\}^{n-1} \rightarrow \mathbb{R}$.

Since $f$ is quantum boolean, by assumption, we have that
$|\lambda(\mathbf{x})| = 1$, $\forall \mathbf{x}$. So, because
$\mu_j$ is constant and $|D_j\epsilon(\mathbf{x})| \le \epsilon < 
1/2$, and (\ref{eq:mujeps}) is true for all $(x_1, \ldots, x_{j-1},
x_{j+1},\ldots x_n)$, we conclude that
$D_j\epsilon(\mathbf{x})$ is constant for all $(x_1, \ldots,
x_{j-1}$, $x_{j+1},\ldots x_n)$. (Otherwise we'd have a contradiction: as we run through all the assignments of $(x_1, \ldots, x_{j-1}$, $x_{j+1}, \ldots, x_n)$ the value of $D_j\lambda$ can, in principle, take both the values $0$ and $1$. However, owing to the constancy of $\mu_j$ and the fact that $|D_j\epsilon(\mathbf{x})| \le \epsilon < 
1/2$ only one of two possible values can be taken.) This is true for all $j \in [n]$. So
we learn that $\epsilon(\mathbf{x})$ is linear:
\begin{equation}
\epsilon(\mathbf{x}) = \sum_{j=1}^n x_j \epsilon_j.
\end{equation}
But the condition that $|\epsilon(\mathbf{x})| \le \epsilon$,
$\forall \mathbf{x}$, implies that
\begin{equation}
\sum_{j=1}^n |\epsilon_j|\le \epsilon.
\end{equation}
Summarising what we've learnt so far:
\begin{equation}
D_j\lambda = \mu_j + \epsilon_j.
\end{equation}
Since $D_j\lambda:\{-1,+1\}^{n-1}\rightarrow \{-1,0,1\}$ we must
have that $(\mu_j+\epsilon_j) \in \{-1,0,1\}$, $j\in [n]$. But this
actually means that there is exactly one $j$ for which $\mu_j =
1-\epsilon_j$ (or $-1-\epsilon_j$); the rest satisfy $|\mu_j| =
\epsilon_j$. The reason for this is as follows. Suppose there was
more than one such $j$. This would then lead to a contradiction as
one can always find an assignment of the variables $x_k$ so that
$|\lambda(\mathbf{x})| > 1$, contradicting the quantum booleanity of
$f$.

We now need to show that $f$ is in fact close to a dictator. We
define our dictator to be $h = \sgn(g)$. An application of the
triangle inequality gives us the result:
\begin{equation}
\begin{split}
\|f-h\|_\infty &\le \|f-g\|_\infty + \|g-h\|_\infty \\
&\le \epsilon + \sum_{j=1}^n |\epsilon_j| \le 2\epsilon
\end{split}
\end{equation}
\end{proof}


\section{Influence of quantum variables}
\label{sec:influence}
In this section we introduce the notion of \emph{influence} for
quantum boolean functions, and establish some basic properties of
the influence.

The classical definition of the influence of variable $j$ on a
boolean function $f$ is the probability that $f$'s value is undefined
if the value of $j$ is unknown, formally defined as
\be I_j(f) = \mathbb{P}_x [f(x) \neq f(x \oplus e_j)], \ee
where $x \oplus e_j$ flips the $j$th bit of $x$. In the quantum
case, we define the influence in terms of {\em derivative
operators}.

\begin{definition}
The $j$th derivative operator $d_j$ is the superoperator
\begin{equation}
d_j \equiv \frac{1}{2}(\mathcal{I}-S_j),
\end{equation}
where $\mathcal{I}$ is the identity superoperator and $S_j$ is the
spin flip operation (\ref{eq:spinflip}) on the $j$th qubit. Note
that
\begin{equation}
d_j(\chi_{\mathbf{s}}) =\begin{cases} \chi_{\mathbf{s}}, \quad s_j
\not= 0 \\
0, \quad s_j=0.\end{cases}
\end{equation}
The \emph{gradient} of $f$ is
\begin{equation}
\nabla f \equiv (d_1(f), d_2(f), \ldots, d_n(f)).
\end{equation}
The \emph{laplacian} of $f$ is
\begin{equation}
\nabla^2 f = \|\nabla f\|_2^2 = \sum_{j=1}^n \|d_j(f)\|_2^2.
\end{equation}
\end{definition}

The following lemma is immediate from the definition of the derivative operator.

\begin{lemma}
Let $f$ be an operator on $n$ qubits. Then the $d_j$ operator acts as follows.
\begin{equation}
d_j(f) = \sum_{\mathbf{s} | s_j \not= 0}
\hat{f}_{\mathbf{s}}\chi_{\mathbf{s}}.
\end{equation}
\end{lemma}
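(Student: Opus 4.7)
The plan is to prove this by direct linearity, leveraging the action of $d_j$ on individual Pauli tensors $\chi_{\mathbf{s}}$ that was already recorded in the definition. Since the stabilizer operators $\{\chi_{\mathbf{s}}\}_{\mathbf{s}\in\{0,1,2,3\}^n}$ form an orthonormal basis for $\mathcal{B}(\mathcal{H})$ with respect to the Hilbert-Schmidt inner product, any operator $f$ admits the Fourier expansion $f = \sum_{\mathbf{s}} \hat{f}_{\mathbf{s}} \chi_{\mathbf{s}}$. The superoperator $d_j = \tfrac{1}{2}(\mathcal{I} - S_j)$ is manifestly linear in its argument (the spin-flip $S(M) = \sigma^2 M^* \sigma^2$ is real-linear, but since all relevant coefficients $\hat{f}_{\mathbf{s}}$ multiply Hermitian characters, there is no subtlety). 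Thus I would simply push $d_j$ through the sum to obtain
\[
d_j(f) = \sum_{\mathbf{s}} \hat{f}_{\mathbf{s}} \, d_j(\chi_{\mathbf{s}}),
\]
and then invoke the stated action $d_j(\chi_{\mathbf{s}}) = \chi_{\mathbf{s}}$ if $s_j \neq 0$ and $d_j(\chi_{\mathbf{s}}) = 0$ if $s_j = 0$ to collapse the sum to precisely those terms indexed by strings with $s_j \neq 0$.

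If one wished to verify the action of $d_j$ on a single character from first principles (rather than quoting the definition), one would check the single-qubit identities $S(\sigma^0) = \sigma^0$ and $S(\sigma^k) = -\sigma^k$ for $k\in\{1,2,3\}$ by direct computation: for $k=0$ one has $\sigma^2 \sigma^2 = \1$; for the nontrivial Paulis one combines the conjugation $(\sigma^1)^* = \sigma^1$, $(\sigma^2)^* = -\sigma^2$, $(\sigma^3)^* = \sigma^3$ with the anticommutation $\{\sigma^2,\sigma^k\}=0$ for $k\neq 2$. Tensoring and applying $S_j$ at only the $j$th tensor factor then gives $S_j(\chi_{\mathbf{s}}) = (-1)^{\mathbb{1}[s_j \neq 0]} \chi_{\mathbf{s}}$, so that $\tfrac{1}{2}(\mathcal{I} - S_j)$ indeed acts as the projector onto those Paulis supported at site $j$.

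There is no genuine obstacle here: the content of the lemma is that $d_j$ is the Fourier-level projector onto strings active at position $j$, and this is immediate once linearity and the single-character action are in hand. The only thing worth flagging is to be mindful that $S_j$ involves complex conjugation, so its linearity is only guaranteed over $\mathbb{R}$ when acting on Hermitian operators; for fully general operators the Fourier coefficients may be complex, but since the four Paulis form a basis closed under the operations involved, the expansion and the termwise action still commute as required.
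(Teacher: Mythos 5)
Your proposal is correct and matches the paper's treatment: the paper states the lemma is immediate from the definition, i.e.\ precisely the linearity-plus-basis-action argument you give, with the single-qubit signs $S(\sigma^0)=\sigma^0$, $S(\sigma^k)=-\sigma^k$ for $k\in\{1,2,3\}$ already recorded in the definition of $d_j$. Your remark about the conjugation in $S$ is a reasonable caution, and your resolution (the action of $S_j$, and hence of $d_j$, is fixed on the Pauli basis $\{\chi_{\mathbf{s}}\}$, so the termwise computation is the consistent one) is exactly how the paper implicitly handles it.
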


\begin{definition}
Let $f$ be a quantum boolean function. We define the
\emph{influence} of the $j$th qubit to be
\begin{equation}
I_j(f) \equiv \|d_j(f)\|_2^2,
\end{equation}
and the \emph{total influence} $I(f)$ to be
\begin{equation}
I(f) \equiv \sum_{j=1}^n I_j(f).
\end{equation}
\end{definition}
Note that this definition reduces to the classical definition when $f$ is
diagonal in the computational basis. Intuitively the quantum influence of the $j$th qubit measures the extent to which the value of a quantum boolean function is changed when in the input the state of $j$th qubit is inverted through the origin of the Bloch sphere.

\begin{proposition}
Let $f$ be a quantum boolean function. Then
\begin{equation}
I_j(f) = \sum_{\mathbf{s} | s_j \not= 0} \hat{f}_{\mathbf{s}}^2
\end{equation}
and
\begin{equation}
I(f) = \sum_{\mathbf{s}} |\mathbf{s}| \hat{f}_{\mathbf{s}}^2.
\end{equation}
\end{proposition}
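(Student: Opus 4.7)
The plan is to derive both identities as straightforward consequences of the preceding lemma (which gives the Fourier expansion of $d_j(f)$) together with Parseval's theorem and a rearrangement of summation.

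First I would prove the identity for $I_j(f)$. By definition, $I_j(f) = \|d_j(f)\|_2^2$, and the immediately preceding lemma gives
\begin{equation*}
d_j(f) = \sum_{\mathbf{s}\,|\,s_j\neq 0} \hat{f}_{\mathbf{s}}\,\chi_{\mathbf{s}}.
\end{equation*}
Since $f$ is quantum boolean (hence Hermitian), the Fourier coefficients $\hat{f}_{\mathbf{s}}$ are real, as noted earlier in the paper. Applying Parseval's equality to this Fourier expansion immediately yields
\begin{equation*}
I_j(f) = \|d_j(f)\|_2^2 = \sum_{\mathbf{s}\,|\,s_j\neq 0} \hat{f}_{\mathbf{s}}^2,
\end{equation*}
which is the first identity.

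For the total influence, I would simply sum the first identity over $j$ and swap the order of summation:
\begin{equation*}
I(f) = \sum_{j=1}^n I_j(f) = \sum_{j=1}^n \sum_{\mathbf{s}\,|\,s_j\neq 0} \hat{f}_{\mathbf{s}}^2 = \sum_{\mathbf{s}} \hat{f}_{\mathbf{s}}^2 \, \bigl|\{ j : s_j \neq 0 \}\bigr| = \sum_{\mathbf{s}} |\mathbf{s}|\, \hat{f}_{\mathbf{s}}^2,
\end{equation*}
using the definition $|\mathbf{s}| = |\supp(\mathbf{s})|$ from the preliminaries. There is no real obstacle here; the argument is a direct translation of the analogous classical computation, with the one caveat that we are implicitly using the Hermiticity of $f$ to treat $\hat{f}_{\mathbf{s}}^2$ as $|\hat{f}_{\mathbf{s}}|^2$ when invoking Parseval.
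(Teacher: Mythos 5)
Your proposal is correct and matches the paper's approach: the paper simply states that both identities follow immediately from the definition of influence, and your argument (the Fourier expansion of $d_j(f)$ from the preceding lemma, Parseval's equality, and a swap of summation order for the total influence) is exactly the computation being implicitly invoked. The observation that Hermiticity of $f$ makes the coefficients real, so $\hat{f}_{\mathbf{s}}^2 = |\hat{f}_{\mathbf{s}}|^2$, is a fair point of care that the paper leaves unstated.
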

\begin{proof}
Both results follow immediately from the definition of influence.
\end{proof}

The next result provides two other characterisations of the
derivative.
\begin{lemma}\label{lem:derivatives}
Let $f$ be a quantum boolean function. Then
\begin{equation}
\begin{split}
d_j(f) &= f - \tr_j(f)\otimes \frac{\1_j}{2}\\
&= f-\int dU \, U_j^\dag f U_j
\end{split}
\end{equation}
and
\begin{equation}
I_j(f) = \frac{1}{2}\int dU\, \|[U_j, f]\|_2^2,
\end{equation}
where the commutator $[U_j, f] \equiv U_j f - f U_j$, $dU$ is the Haar measure on $U(2)$ and $U_j \equiv
\mathbb{I}\otimes \cdots\otimes U \otimes \cdots \otimes \mathbb{I}$
with $\supp(U_j) = \{j\}$.
\end{lemma}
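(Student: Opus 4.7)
My plan is to prove the three identities in the order they appear, building each on the preceding one. The unifying tool is the standard Haar-averaging identity $\int dU\, U^\dag M U = \frac{\tr(M)}{d}\,\mathbb{I}$ for any $M$ on a $d$-dimensional space, which is an immediate consequence of Schur's lemma (the left-hand side commutes with every unitary).

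For the first identity, I would expand $f$ in the Pauli basis as $f = \sum_{\mathbf s} \hat f_{\mathbf s} \chi_{\mathbf s}$ and observe that $\tr_j(\sigma^{s_j}) = 2$ if $s_j = 0$ and $0$ otherwise. Therefore
\begin{equation*}
\tr_j(f)\otimes \frac{\mathbb{I}_j}{2} = \sum_{\mathbf s\,|\,s_j = 0} \hat f_{\mathbf s}\,\chi_{\mathbf s},
\end{equation*}
and subtracting from $f$ leaves precisely $\sum_{\mathbf s\,|\,s_j\neq 0}\hat f_{\mathbf s}\chi_{\mathbf s} = d_j(f)$.

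For the second identity, apply the Haar-averaging identity to the $j$th tensor factor: $\int dU\, U_j^\dag f U_j = \tr_j(f)\otimes \frac{\mathbb{I}_j}{2}$, and invoke the first identity. For the third identity, I would expand
\begin{equation*}
\|[U_j,f]\|_2^2 = \frac{1}{2^n}\tr\bigl((U_j f - fU_j)^\dag(U_j f - fU_j)\bigr),
\end{equation*}
and use Hermiticity of $f$ together with cyclicity of the trace to collapse the four resulting terms into $2\|f\|_2^2 - 2\langle f, U_j^\dag f U_j\rangle$. Integrating over $U$ and plugging in the second identity gives
\begin{equation*}
\tfrac{1}{2}\int dU\,\|[U_j,f]\|_2^2 = \|f\|_2^2 - \langle f, f - d_j(f)\rangle = \langle f, d_j(f)\rangle.
\end{equation*}
Finally, using the Fourier expansion $d_j(f) = \sum_{\mathbf s\,|\,s_j\neq 0}\hat f_{\mathbf s}\chi_{\mathbf s}$ and Parseval's relation identifies $\langle f,d_j(f)\rangle = \sum_{\mathbf s\,|\,s_j\neq 0}\hat f_{\mathbf s}^2 = I_j(f)$.

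There is no real obstacle here: the argument is a routine unwinding of definitions once one notices that the Haar integral over $U(2)$ on a single qubit is exactly the depolarising/partial-trace channel. The only point that needs a little care is correctly handling the cross terms in the commutator expansion and confirming that $\langle f, U_j^\dag f U_j\rangle$ is real for Hermitian $f$, which it is since $U_j^\dag f U_j$ is Hermitian and the Hilbert--Schmidt inner product of two Hermitian operators is real.
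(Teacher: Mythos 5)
Your proof is correct. For the first two identities you do essentially what the paper does (check on the Pauli basis / use Schur--Weyl averaging on the $j$th factor and linearity), but for the influence identity your route is genuinely different in direction: you expand $\|[U_j,f]\|_2^2$ for a fixed $U$, collapse it to $2\|f\|_2^2 - 2\ip{f}{U_j^\dag f U_j}$, integrate once over $U$, and then invoke the already-established averaging identity plus Parseval to land on $\ip{f}{d_j(f)} = \|d_j(f)\|_2^2 = I_j(f)$. The paper instead starts from $I_j(f) = \|f - \int dU\, U_j^\dag f U_j\|_2^2$, expands it as a \emph{double} Haar integral over $U$ and $V$, uses translation invariance of the Haar measure to collapse the $UV$ cross term, and crucially uses the quantum boolean hypothesis $f^2 = \1$ (so that $\|f\|_2^2 = 1$) at the first expansion step. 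Your argument needs only a single Haar integral and never uses $f^2 = \1$, so it establishes the identity for arbitrary Hermitian $f$ (with $I_j(f) \equiv \|d_j(f)\|_2^2$), which is slightly more general than the stated lemma; the paper's version is marginally more self-contained in that it never needs Parseval at the end, reading off the answer directly from the trace computation. Your side remark about the cross terms is also handled correctly: $\tr(f U_j^\dag f U_j)$ and $\tr(U_j^\dag f U_j f)$ are traces of products of two Hermitian operators, hence real and equal to each other, which is exactly what is needed to merge them.
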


\begin{proof}
Both of the first two identities can be established by checking
$d_j$ on single qubit operators and extending by linearity.

The alternative characterisation of the influence can be proven as
follows.
\begin{equation}
\begin{split}
I_j(f) &= \left\|f-\int dU\, U_j^\dag f U_j\right\|_2^2 \\
&= \frac{1}{2^n}\int dUdV\,\tr\left( (f-U_j^\dag f U_j)(f-V_j^\dag f
V_j) \right)
\\ &= \frac{1}{2^n}\int dUdV\,\tr\left( \1 - U_j^\dag f U_jf - f V_j^\dag
f V_j + U_j^\dag f U_jV_j^\dag f V_j\right) \\
&= 1 - \frac{2}{2^n}\int dU\,\tr\left(U_j^\dag f U_jf\right) +
\frac{1}{2^n}\int
dUdV\,\tr\left( f U_jV_j^\dag f V_jU_j^\dag\right)\\
&= 1 - \frac{2}{2^n}\int dU\,\tr\left(U_j^\dag f U_jf\right) +
\frac{1}{2^n}\int
dU\,\tr\left( f U_j f U_j^\dag\right) \\
&= 1 - \frac{1}{2^n}\int dU\,\tr\left(U_j^\dag f U_jf\right)\\
&= \frac{1}{2^n}\int dU\,\tr\left(\1 - U_j^\dag f U_jf\right) \\
&= \frac{1}{2^{n+1}}\int dU\, \tr\left( [U_j,f][f,U_j^\dag] \right)
= \frac{1}{2}\int dU\, \|[U_j, f]\|_2^2.
\end{split}
\end{equation}
\end{proof}

Now we generalise the single-qubit influence to multiple qubits.

\begin{definition}
Let $f$ be a quantum boolean function. Then the influence of a set
$J\subset [n]$ on $f$, written $I_J(f)$, is the quantity
\begin{equation}
I_J(f) \equiv \|d_J(f)\|_2^2,
\end{equation}
where
\begin{equation}
d_J(f) \equiv f - \tr_J(f)\otimes \frac{\1_J}{2^{|J|}}.
\end{equation}
\end{definition}

The next result is a straightforward generalisation of
Lemma~\ref{lem:derivatives}.
\begin{corollary}
Let $f$ be a quantum boolean function, $J\subset [n]$, and $m =
|J|$. Then
\begin{equation}
d_J(f) = f-\int dU_1dU_2\cdots dU_m \, (U_1\otimes U_2\otimes\cdots
\otimes U_m)^\dag f (U_1\otimes U_2\otimes\cdots \otimes U_m)
\end{equation}
and
\begin{equation}
I_J(f) = \int dU_1dU_2\cdots dU_m \|[U_1\otimes U_2\otimes\cdots
\otimes U_m, f]\|_2^2,
\end{equation}
where $J = \bigcup_{j=1}^m\supp(U_j)$.
\end{corollary}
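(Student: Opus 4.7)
The plan is to imitate the single-qubit argument of Lemma~\ref{lem:derivatives} at the level of the whole subsystem $J$. The single essential new ingredient is the tensorization of the Haar twirl: if each $U_i$ is Haar-distributed on $U(2)$ and $\mathbf{U}\equiv U_1\otimes\cdots\otimes U_m$ acts on the qubits in $J$ (and trivially elsewhere), then for any operator $X$ on $n$ qubits,
\begin{equation*}
\int dU_1\cdots dU_m\, \mathbf{U}^\dag X \mathbf{U} \;=\; \tr_J(X)\otimes \frac{\1_J}{2^{|J|}},
\end{equation*}
because the single-qubit identity $\int dU_i\, U_i^\dag(\cdot)U_i = \tr_i(\cdot)\otimes \1_i/2$ can be applied one qubit at a time, and the resulting completely depolarising channels on the $m$ qubits in $J$ compose to the completely depolarising channel on $J$ as a whole. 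This identity is exactly the first claimed equation, since by definition $d_J(f)=f-\tr_J(f)\otimes \1_J/2^{|J|}$.

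For the second equation I would then repeat the expansion from the proof of Lemma~\ref{lem:derivatives} verbatim, with the single-qubit symbols $U_j,V_j$ replaced by the tensor products $\mathbf{U},\mathbf{V}$. Writing $d_J(f) = f - \int d\mathbf{U}\,\mathbf{U}^\dag f\mathbf{U}$ and squaring,
\begin{equation*}
\|d_J(f)\|_2^2 = \frac{1}{2^n}\int d\mathbf{U}\,d\mathbf{V}\,\tr\bigl((f-\mathbf{U}^\dag f\mathbf{U})(f-\mathbf{V}^\dag f\mathbf{V})\bigr),
\end{equation*}
which after expanding and using $f^2=\1$ reduces, as in the single-qubit case, to $1 - \frac{1}{2^n}\int d\mathbf{U}\,\tr(\mathbf{U}^\dag f\mathbf{U}f)$. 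The same quantity comes out of $\frac{1}{2}\int d\mathbf{U}\,\|[\mathbf{U},f]\|_2^2$ by direct expansion of the commutator together with $f^2=\1$ and cyclicity of the trace.

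The only place that really needs care is the substitution $\mathbf{W}=\mathbf{U}\mathbf{V}^\dag$ that collapses the double Haar integral in the cross term to a single one. The point is that the product Haar measure on $U(2)^{\times m}$ factorises, so $\mathbf{W}=U_1V_1^\dag\otimes\cdots\otimes U_mV_m^\dag$ is a tensor product of independent Haar-distributed single-qubit unitaries; performing the change of variables coordinate-by-coordinate and using left/right translation invariance of $dU_i$ then reproduces the single-qubit manipulation globally. This is the main (mild) obstacle: once the product structure of the Haar measure is invoked, every other step of the single-qubit proof transfers with no change, and the corollary follows.
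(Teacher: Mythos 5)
Your argument is correct and is exactly what the paper intends: the paper gives no separate proof of this corollary, calling it a straightforward generalisation of Lemma~\ref{lem:derivatives}, and the two ingredients you supply --- the tensorised twirl $\int dU_1\cdots dU_m\,\mathbf{U}^\dag X\mathbf{U}=\tr_J(X)\otimes \1_J/2^{|J|}$ and the coordinate-wise change of variables justified by the product structure of the Haar measure --- are precisely the details that need filling in. One point to flag: your computation (correctly) yields
\begin{equation*}
I_J(f) \;=\; \tfrac{1}{2}\int dU_1\cdots dU_m\,\bigl\|[U_1\otimes\cdots\otimes U_m,\,f]\bigr\|_2^2,
\end{equation*}
which agrees with the factor $\tfrac12$ in Lemma~\ref{lem:derivatives} when $m=1$, whereas the corollary as displayed omits the $\tfrac12$. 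Indeed, expanding $[\mathbf{U},f]^\dag[\mathbf{U},f]=2\1-f\mathbf{U}^\dag f\mathbf{U}-\mathbf{U}^\dag f\mathbf{U}f$ and integrating gives $\int d\mathbf{U}\,\|[\mathbf{U},f]\|_2^2 = 2 - \frac{2}{2^n}\int d\mathbf{U}\,\tr(\mathbf{U}^\dag f\mathbf{U}f) = 2\,I_J(f)$, so the factor $\tfrac12$ is genuinely required and the displayed statement appears to have dropped it; your closing claim that ``the corollary follows'' therefore establishes the corrected version rather than the literal one, which is the right thing to have proved.
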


Unlike the definition of influence on a single qubit, the physical interpretation
of multiple qubit influence is less clear, and we leave it as an open question.

\begin{definition}
Let $f$ be a quantum boolean function. Then we define the
\emph{variance} of $f$ to be
\begin{equation}
\textrm{var}(f) = \frac{1}{2^n}\tr(f^2) -
\left(\frac{1}{2^n}\tr(f)\right)^2.
\end{equation}
\end{definition}

\begin{proposition}[Quantum Poincar\'e inequality for $n$ qubits]
Let $f$ be a quantum boolean function. Then
\begin{equation}
\text{\rm var}(f) \le \nabla^2(f) = I(f).
\end{equation}
\end{proposition}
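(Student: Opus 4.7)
The plan is to compute both sides of the inequality explicitly in the Fourier basis and observe that the inequality reduces to the trivial bound $|\mathbf{s}| \ge 1$ for every $\mathbf{s} \ne \mathbf{0}$.

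First I would rewrite the variance in Fourier-analytic language. Since $\hat{f}_{\mathbf{0}} = \ip{\chi_{\mathbf{0}}}{f} = \frac{1}{2^n}\tr(f)$ and, by Parseval, $\|f\|_2^2 = \frac{1}{2^n}\tr(f^2) = \sum_{\mathbf{s}} \hat{f}_{\mathbf{s}}^2$, I immediately obtain
\begin{equation*}
\text{var}(f) \;=\; \|f\|_2^2 - \hat{f}_{\mathbf{0}}^2 \;=\; \sum_{\mathbf{s}\neq\mathbf{0}} \hat{f}_{\mathbf{s}}^2.
\end{equation*}
Note this does not even use that $f$ is quantum boolean; it would hold for any Hermitian operator.

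Next I would expand the total influence. Using the proposition above, which gives $I(f) = \sum_{\mathbf{s}} |\mathbf{s}|\,\hat{f}_{\mathbf{s}}^2$, I can split off the $\mathbf{s}=\mathbf{0}$ term (which contributes zero since $|\mathbf{0}|=0$) and write
\begin{equation*}
I(f) \;=\; \sum_{\mathbf{s}\neq\mathbf{0}} |\mathbf{s}|\,\hat{f}_{\mathbf{s}}^2.
\end{equation*}

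The inequality now follows termwise: for every $\mathbf{s}\neq\mathbf{0}$ we have $|\mathbf{s}| \ge 1$, hence
\begin{equation*}
\nabla^2(f) \;=\; I(f) \;=\; \sum_{\mathbf{s}\neq\mathbf{0}} |\mathbf{s}|\,\hat{f}_{\mathbf{s}}^2 \;\ge\; \sum_{\mathbf{s}\neq\mathbf{0}} \hat{f}_{\mathbf{s}}^2 \;=\; \text{var}(f).
\end{equation*}
There is no real obstacle here; the only thing to be careful about is confirming that the two written forms of the left-hand side agree, namely that $\nabla^2(f) = \sum_j \|d_j(f)\|_2^2$ coincides with $\sum_{\mathbf{s}} |\mathbf{s}|\,\hat{f}_{\mathbf{s}}^2$, which follows directly from the lemma stating $d_j(f) = \sum_{\mathbf{s}\,|\,s_j\neq 0}\hat{f}_{\mathbf{s}}\chi_{\mathbf{s}}$ together with Parseval and the observation that each $\mathbf{s}$ is counted exactly $|\mathbf{s}|$ times when summing over $j$.
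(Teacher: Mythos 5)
Your proposal is correct and follows essentially the same route as the paper: both express $\operatorname{var}(f)$ and $I(f)$ in the Fourier basis via Parseval and conclude termwise from $|\mathbf{s}|\ge 1$ for $\mathbf{s}\neq\mathbf{0}$. The extra care you take in verifying $\nabla^2(f)=\sum_{\mathbf{s}}|\mathbf{s}|\hat{f}_{\mathbf{s}}^2$ is a fine (if routine) addition but does not change the argument.
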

\begin{proof}
The proof follows from writing left and right-hand sides in terms of
the Fourier expansion:
\begin{equation}
\text{var}(f) = \sum_{\mathbf{s}} \hat{f}_{\mathbf{s}}^2 -
f_{\mathbf{0}}^2
\end{equation}
and
\begin{equation}
I(f) = \sum_{\mathbf{s}} |\mathbf{s}|\hat{f}_{\mathbf{s}}^2.
\end{equation}
The inequality is now obvious.
\end{proof}

\begin{corollary}
\label{cor:poincare} Let $f$ be a quantum boolean function such that
$\tr(f) = 0$. Then there is a $j\in [n]$ such that $I_j(f)\ge 1/n$.
\end{corollary}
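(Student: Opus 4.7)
The plan is to derive this directly from the quantum Poincaré inequality just established, using the traceless hypothesis to compute $\text{var}(f)$ explicitly. First I would note that, since $f$ is quantum boolean, $f^2 = \1$, so $\|f\|_2^2 = \frac{1}{2^n}\tr(f^2) = 1$, which by Parseval gives $\sum_{\mathbf{s}} \hat{f}_{\mathbf{s}}^2 = 1$. The hypothesis $\tr(f) = 0$ means $\hat{f}_{\mathbf{0}} = 0$, so by the formula $\text{var}(f) = \sum_{\mathbf{s}} \hat{f}_{\mathbf{s}}^2 - \hat{f}_{\mathbf{0}}^2$ used in the proof of the Poincaré inequality, we get $\text{var}(f) = 1$.

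Next I would apply the quantum Poincaré inequality to conclude $I(f) \ge \text{var}(f) = 1$. Since by definition $I(f) = \sum_{j=1}^n I_j(f)$, a simple averaging argument yields the existence of some $j \in [n]$ with $I_j(f) \ge 1/n$, as required.

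There is no real obstacle here; the entire content of the corollary is packaging the Poincaré inequality together with the observation that for a traceless quantum boolean function the variance is exactly $1$. The only mild care needed is to verify that $\hat{f}_{\mathbf{0}}$ corresponds to the trace, which is immediate from the normalization of the Hilbert--Schmidt inner product.
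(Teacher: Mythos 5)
Your argument is correct and is exactly the intended derivation: the paper leaves the corollary without an explicit proof precisely because it follows from the quantum Poincar\'e inequality via $\operatorname{var}(f)=1$ for a traceless quantum boolean function (using $f^2=\1$ and $\hat{f}_{\mathbf{0}}=\tfrac{1}{2^n}\tr(f)=0$) together with averaging over $I(f)=\sum_j I_j(f)$. Nothing is missing.
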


\section{Towards a quantum KKL theorem}
\label{sec:quantumkkl}

An influential paper of Kahn, Kalai and Linial \cite{kahn:1988a} proved the following result, known as the KKL theorem. For every balanced boolean function $f:\{0,1\}^n \rightarrow \{1,-1\}$, there exists a variable $x$ such that $I_j(f) = \Omega\left(\frac{\log n}{n}\right)$. By Corollary \ref{cor:poincare}, for every balanced quantum boolean function $f$ on $n$ qubits there is a qubit $j$ such that $I_j(f) \ge \frac{1}{n}$. It is thus natural to conjecture that a quantum analogue of the KKL theorem holds -- and also a quantum analogue of Friedgut's theorem \cite{friedgut:1998a}, which is based on similar ideas.

However, the immediate quantum generalisation of the classical proof
does not go through. Intuitively, the reason for this is as follows.
The classical proof shows that, if the influences are all small,
then their sum is large. This holds because, if the derivative
operator in a particular direction has low norm, then it has small
support, implying that it has some Fourier weight on a high level,
which must be included in derivatives in many different directions.
In the quantum case, this is not true: there exist quantum boolean
functions whose derivative is small in a particular direction, but
which are also low-degree.

On the other hand, it is immediate that the KKL theorem holds for
quantum boolean functions which can be diagonalised by local
unitaries, as these do not change the influence on each qubit. In
this section we describe three partial results aimed at generalising
the KKL theorem further. The first result is a simple quantisation
of one of the classical proofs. This serves to illustrate what goes
wrong when we generalise to the quantum world. Our next result shows
that the classical proof technique breaks down precisely for
anticommuting quantum boolean functions (qv.). Our final result is
then a stronger version of KKL for a class of anticommuting quantum
booolean functions.

\subsection{A quantum Talagrand's lemma for KKL}

The purpose of this section is to prove the following quantum
generalisation of a theorem of Talagrand \cite{talagrand:1994a}.

\begin{proposition}
\label{thm:talagrandkkl}
Let $f$ be a traceless Hermitian operator on $n$ qubits. Then
\be \|f\|_2^2 \le \sum_{i=1}^n \frac{10 \|d_i f\|_2^2}{(2/3) \log(\|d_i f\|_2/\|d_i f\|_1) + 1}. \ee
\end{proposition}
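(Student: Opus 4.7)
My plan is to follow the structure of the classical proof of Talagrand's KKL inequality almost step for step, with the quantum hypercontractive inequality (Theorem~\ref{thm:hypercontractivity}) and the Riesz--Thorin log-convexity of the normalised Schatten $p$-norms playing the roles of their classical analogues. The first step is a reduction to a single-operator statement. Since $f$ is traceless, $\hat{f}_{\mathbf{0}}=0$, and the combinatorial identity $|\mathbf{s}| = \sum_i \mathbf{1}[s_i \neq 0]$ applied inside Parseval's formula gives
\begin{equation*}
\|f\|_2^2 = \sum_{\mathbf{s}\neq \mathbf{0}} \hat{f}_{\mathbf{s}}^2 = \sum_{i=1}^n \sum_{\mathbf{s}:\, s_i \neq 0} \frac{\hat{f}_{\mathbf{s}}^2}{|\mathbf{s}|} = \sum_{i=1}^n \sum_{\mathbf{s}} \frac{\widehat{(d_i f)}_{\mathbf{s}}^{\,2}}{|\mathbf{s}|},
\end{equation*}
where the final sum is supported on $|\mathbf{s}|\ge 1$ since $d_i f$ is traceless. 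It therefore suffices to prove the following single-operator inequality: for every traceless Hermitian $g$,
\begin{equation*}
\sum_{\mathbf{s}} \frac{\hat{g}_{\mathbf{s}}^2}{|\mathbf{s}|} \;\le\; \frac{10\,\|g\|_2^2}{(2/3)\log(\|g\|_2/\|g\|_1)+1},
\end{equation*}
applied to $g = d_i f$ for each $i$ (note $d_i f$ is Hermitian because $f$ is and $S_i$ preserves hermiticity).

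For the single-operator bound I will use the standard level-$k$ decomposition. Splitting at level $k$ and using $1/|\mathbf{s}| \le 1$ for $1 \le |\mathbf{s}| \le k$, and $1/|\mathbf{s}| \le 1/(k+1)$ for $|\mathbf{s}| > k$, yields
\begin{equation*}
\sum_{\mathbf{s}} \frac{\hat{g}_{\mathbf{s}}^2}{|\mathbf{s}|} \;\le\; \|g^{\le k}\|_2^2 + \frac{\|g\|_2^2}{k+1}.
\end{equation*}
To control $\|g^{\le k}\|_2^2$, observe that for $\rho\in[0,1]$ we have $\|T_\rho g\|_2^2 \ge \rho^{2k}\|g^{\le k}\|_2^2$, while Theorem~\ref{thm:hypercontractivity} with $p=1+\rho^2$ and $q=2$ gives $\|T_\rho g\|_2 \le \|g\|_{1+\rho^2}$. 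Log-convexity of the Schatten norms between $p=1$ and $q=2$ yields
\begin{equation*}
\|g\|_{1+\rho^2} \;\le\; \|g\|_1^{(1-\rho^2)/(1+\rho^2)}\,\|g\|_2^{2\rho^2/(1+\rho^2)},
\end{equation*}
so with $\gamma = \log(\|g\|_2/\|g\|_1)\ge 0$,
\begin{equation*}
\sum_{\mathbf{s}} \frac{\hat{g}_{\mathbf{s}}^2}{|\mathbf{s}|} \;\le\; \|g\|_2^2\Bigl(\rho^{-2k}\,e^{-2\gamma(1-\rho^2)/(1+\rho^2)} + \tfrac{1}{k+1}\Bigr).
\end{equation*}

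It remains to optimise $(\rho,k)$. Taking $k = \lfloor (2/3)\gamma\rfloor$ and $\rho = e^{-c}$, the exponent multiplying $\gamma$ in the first term is $(4/3)c - 2\tanh(c)$, which is negative for any $c\in(0, c^\ast)$ where $c^\ast$ is the positive root of $\tanh(c)=2c/3$ (for instance $c=1$ works, giving exponent $\approx -0.19$). Thus the first term in the bracket decays exponentially in $\gamma$ while the second is on the order of $1/\gamma$, so the second term dominates once $\gamma$ exceeds a universal threshold. Below that threshold, the trivial bound $\sum_{\mathbf{s}}\hat{g}_{\mathbf{s}}^2/|\mathbf{s}| \le \|g\|_2^2$ together with the boundedness of $(2/3)\gamma+1$ is enough to absorb everything into the overall constant $10$. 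The main obstacle is the bookkeeping: one has to verify that a single (possibly piecewise) choice of $(\rho,k)$ yields simultaneously the constant $10$, the coefficient $2/3$, and the additive $+1$ in the stated inequality. Conceptually, however, the entire quantum content is already packaged in Theorem~\ref{thm:hypercontractivity}; everything else is a line-for-line transcription of the classical argument.
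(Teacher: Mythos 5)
Your proposal is correct and shares the paper's overall skeleton --- the identity $\|f\|_2^2=\sum_{i}\sum_{\mathbf{s}:s_i\neq 0}\hat{f}_{\mathbf{s}}^2/|\mathbf{s}|$, a split at a cutoff level, hypercontractivity for the low levels and the trivial $1/(k+1)$ bound for the high ones --- but you implement the hypercontractive step differently. The paper argues level by level: it applies Corollary~\ref{cor:hypercontractivity} to each $g^{=k}$ to get $\|g^{=k}\|_2^2\le 2^k\|g^{=k}\|_{3/2}^2$, sums the series $\sum_k 2^k/k$, chooses the cutoff $m$ adaptively in terms of $\|g\|_2/\|g\|_{3/2}$, and passes from the $3/2$-norm to the $1$-norm only at the very end via Cauchy--Schwarz (which is where the factor $2/3$ appears). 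You instead bound the whole low-level mass in one stroke, $\|g^{\le k}\|_2^2\le\rho^{-2k}\|T_\rho g\|_2^2\le\rho^{-2k}\|g\|_{1+\rho^2}^2$, invoking Theorem~\ref{thm:hypercontractivity} directly at $q=2$, $p=1+\rho^2$, $\epsilon=\rho$ (exactly meeting its hypothesis), and then interpolate straight down to $\|g\|_1$, leaving a two-parameter optimisation over $(\rho,k)$. Your variant buys something concrete: it never needs the inequality $\|g^{=k}\|_{3/2}\le\|g\|_{3/2}$ that the paper's displayed chain implicitly uses (level projections are not obviously contractions in the $3/2$-norm; that step can be repaired by a H\"older argument applied to $\langle g^{=k},g\rangle$, but your route sidesteps the issue entirely), at the price of the final calibration, which you left open. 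That bookkeeping does close: with $\gamma=\ln(\|g\|_2/\|g\|_1)$, $k=\lfloor(2/3)\gamma\rfloor$ and $\rho=e^{-1}$, the low-level term is at most $e^{(4/3-2\tanh 1)\gamma}\le e^{-0.18\gamma}$, whose product with $(2/3)\gamma+1$ never exceeds $2$, while $1/(\lfloor(2/3)\gamma\rfloor+1)\le 2/((2/3)\gamma+1)$ for all $\gamma\ge 0$, so you obtain the statement with constant $4$ in place of $10$ when the logarithm is natural; if one reads the paper's $\log$ as base $2$ (as its powers-of-two cutoff suggests), taking $k=\lfloor\gamma/2\rfloor$ with the same $\rho$ still lands comfortably under $10$.
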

In the case of classical boolean functions, this result can be
applied to give an essentially immediate proof of the KKL theorem,
using the fact that the functions $\{d_i f\}$ take values in $\{-1,0,1\}$.
However, this does not extend to the quantum case, as the operators
$d_i f$ have no such constraint on their eigenvalues. The proof of
Proposition \ref{thm:talagrandkkl}, on the other hand, is essentially
an immediate generalisation of the classical proof in \cite{talagrand:1994a}
(alternatively, see the exposition in \cite{dewolf:2008a}).

\begin{proof}[Proof of Proposition \ref{thm:talagrandkkl}]
For any operator $g$, define
\be M^2(g) = \sum_{\mathbf{s}\neq 0} \frac{\ghats^2}{|\mathbf{s}|}. \ee
Then it is clear that
\be \|f\|_2^2 = \sum_{i=1}^n M^2(d_i (f)). \ee
Our strategy will be to find upper bounds on $M^2(g)$ for any traceless operator $g$. For some integer $m \ge 0$, expand
\bea
M^2(g) &=& \sum_{1 \le |\mathbf{s}| \le m} \frac{\ghats^2}{|\mathbf{s}|} + \sum_{|\mathbf{s}| > m} \frac{\ghats^2}{|\mathbf{s}|}\\
&\le& \sum_{k=1}^m \left( \frac{2^k}{k} \right) \|g^{=k}\|_{3/2}^2 + \frac{1}{m+1} \sum_{|\mathbf{s}| > m} \ghats^2\\
&\le& \|g\|_{3/2}^2 \sum_{k=1}^m \frac{2^k}{k} + \frac{1}{m+1} \|g\|_2^2, \eea
where we use quantum hypercontractivity (Corollary \ref{cor:hypercontractivity}) in the first inequality. In order to bound the first sum, we note that $\sum_{k=1}^m 2^k/k \le 4 \cdot 2^m/(m+1)$, which can be proved by induction, so
\be M^2(g) \le \frac{1}{m+1} \left( 4 \cdot 2^m \|g\|_{3/2}^2 + \|g\|_2^2 \right). \ee
Now pick $m$ to be the largest integer such that $2^m \|g\|_{3/2}^2 \le \|g\|_2^2$. Then $2^{m+1} \|g\|_{3/2}^2 \ge \|g\|_2^2$, and also $m+1 \ge 1$. Thus
\be m+1 \ge \frac{1}{2}\left(2 \log\left(\frac{\|g\|_2}{\|g\|_{3/2}}\right)+1\right), \ee
implying
\be M^2(g) \le \frac{10\,\|g\|_2^2}{2\log(\|g\|_2/\|g\|_{3/2})+1}. \ee
Noting that $\|g\|_2/\|g\|_{3/2} \ge (\|g\|_2/\|g\|_1)^{1/3}$, which can be proven
using Cauchy-Schwarz, and summing $M^2(d_i f)$ over $i$ completes the proof of the proposition.
\end{proof}

The worst case for this inequality is where the 2-norms and
$1$-norms of the operators $\{d_i f\}$ are the same. We therefore
address this case in the next section.

\subsection{A KKL theorem for anticommuting quantum boolean functions}
In this subsection we study the situation where the 2-norms and
$1$-norms of the operators $\{d_i f\}$ are the same. This situation is the ``worst-case scenario'' for a straightforward quantum generalisation of the classical proof. We show that if
this is the case then $f$ must be a sum of anticommuting quantum
boolean functions and so we identify this class as the ``most
quantum'' of quantum boolean functions. While this class might be expected to avoid a KKL-type theorem, we then study a subclass of such quantum boolean functions and the nevertheless provide a lower bound for the influence (which is better than a KKL-type bound).

\begin{definition}
Let $f$ be a quantum boolean function. Then the set $J\subset [n]$
of variables is said to have \emph{bad influence} on $f$ if
\begin{equation}
\left\|f - \tr_{J}(f)\otimes \frac{\mathbb{I}}{2^{|J|}}\right\|_2 =
\left\|f - \tr_{J}(f)\otimes \frac{\mathbb{I}}{2^{|J|}}\right\|_1.
\end{equation}
\end{definition}

\begin{lemma}\label{lem:12norm}
Let $M$ be an $n\times n$ Hermitian matrix. If $\|M\|_2 = \|M\|_1$
then the eigenvalues $\lambda_j$ of $M$ satisfy
\begin{equation}
\lambda_j\in\{-\alpha,+\alpha\}, \quad j \in [n],
\end{equation}
for some $\alpha \in \mathbb{R}$.
\end{lemma}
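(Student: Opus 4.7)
The plan is to reduce the statement to the equality case of Cauchy--Schwarz (equivalently, Jensen's inequality for $x \mapsto x^2$), applied to the absolute values of the eigenvalues of $M$.

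First I would unpack the normalised Schatten norms defined earlier in the paper. Since $M$ is Hermitian, the singular values are $s_j(M) = |\lambda_j|$, so
\begin{equation*}
\|M\|_1 = \frac{1}{n}\sum_{j=1}^n |\lambda_j|, \qquad \|M\|_2 = \left(\frac{1}{n}\sum_{j=1}^n \lambda_j^2\right)^{1/2}.
\end{equation*}
Next, I would apply Cauchy--Schwarz to the vectors $(1,1,\ldots,1)$ and $(|\lambda_1|,\ldots,|\lambda_n|)$ in $\mathbb{R}^n$:
\begin{equation*}
\left(\sum_{j=1}^n |\lambda_j|\right)^2 \le n \sum_{j=1}^n \lambda_j^2,
\end{equation*}
which is exactly $\|M\|_1^2 \le \|M\|_2^2$, with equality iff $|\lambda_j|$ is constant in $j$.

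The hypothesis $\|M\|_1 = \|M\|_2$ forces equality, so there exists $\alpha \ge 0$ with $|\lambda_j| = \alpha$ for all $j$, which is exactly the conclusion $\lambda_j \in \{-\alpha,+\alpha\}$. There is no real obstacle here; the only mild subtlety is keeping track of the $1/n$ factors in the normalised Schatten norms, which cancel cleanly on both sides of the inequality.
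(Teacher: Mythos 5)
Your proof is correct and is essentially the same argument as the paper's: both reduce to the statement that equality between the (normalised) 1-norm and 2-norm of the eigenvalue vector forces all $|\lambda_j|$ to be equal. The paper simply proves the equality case by hand, rearranging $\|M\|_2^2 = \|M\|_1^2$ into $\sum_{j<k}(|\lambda_j|-|\lambda_k|)^2 = 0$, which is exactly the sum-of-squares identity behind the Cauchy--Schwarz equality condition you invoke.
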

\begin{proof}
Diagonalising $M$ and writing out the equality $\|M\|_2^2 =
\|M\|_1^2$ gives us
\begin{equation}
\frac{1}{n}\sum_{j=1}^n |\lambda_j|^2 = \frac{1}{n^2}\left(\sum_{j
=1}^n |\lambda_j|^2 + 2 \sum_{j<k}^n |\lambda_j||\lambda_k|\right).
\end{equation}
Multiplying through by $n^2$ and rearranging gives us
\begin{equation}
(n-1)\sum_{j=1}^n |\lambda_j|^2 - 2 \sum_{j<k}^n
|\lambda_j||\lambda_k| = 0
\end{equation}
But this is the same as
\begin{equation}
\sum_{j<k} (|\lambda_j|-|\lambda_k|)^2 = 0,
\end{equation}
so that $|\lambda_j| = |\lambda_k| = \alpha$, $j < k$, for some
constant $\alpha$.
\end{proof}

The next lemma quantifies the structure of quantum boolean functions with bad influence. We will see that they are highly constrained: they must be a sum of anticommuting quantum boolean functions.

\begin{lemma}\label{lem:badinfstruct}
Let $f$ be a quantum boolean function. Then $J\subset [n]$ has bad
influence on $f$ if and only if
\begin{equation}
f = \sqrt{1-\alpha^2}f'\otimes \mathbb{I}_J + \alpha g
\end{equation}
where $f'$ and $g$ are quantum boolean functions, $\{f'\otimes
\mathbb{I}_J, g\} = 0$, and $\alpha^2 = I_J(f)$.
\end{lemma}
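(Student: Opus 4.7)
The plan is to prove both directions using the natural orthogonal (Hilbert--Schmidt) decomposition $f = h + d_J(f)$, where $h \equiv \tr_J(f)/2^{|J|} \otimes \mathbb{I}_J = A \otimes \mathbb{I}_J$ for $A = \tr_J(f)/2^{|J|}$. Fourier-wise, $h$ collects the strings $\mathbf{s}$ with $\mathbf{s}_J=0$ and $d_J(f)$ collects those with $\mathbf{s}_J\neq 0$. For the forward direction, assume $J$ has bad influence, so $\|d_J(f)\|_1=\|d_J(f)\|_2=\alpha$. Since $d_J(f)$ is Hermitian, Lemma \ref{lem:12norm} forces its eigenvalues to be $\pm\alpha$, so $d_J(f)^2 = \alpha^2\mathbb{I}$. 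Setting $g = d_J(f)/\alpha$ (assume $\alpha>0$ for now) produces a Hermitian operator with $g^2=\mathbb{I}$, i.e.\ a quantum boolean function.

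Next I would expand $f^2=\mathbb{I}$ to obtain the operator identity $h^2 + \alpha\{h,g\} = (1-\alpha^2)\mathbb{I}$. The crucial step is to apply the partial trace $\tr_J$: I claim $\tr_J(\{h,g\})=0$. To see this, expand $h=\sum_{\mathbf{s}_J=0}\hat h_{\mathbf s}\chi_{\mathbf s}$ and $g=\alpha^{-1}\sum_{\mathbf{t}_J\neq 0}\hat f_{\mathbf t}\chi_{\mathbf t}$; in every product $\chi_{\mathbf s}\chi_{\mathbf t}$ the $J$-factor is $\chi_{\mathbf{s}_J}\chi_{\mathbf{t}_J}=\chi_{\mathbf{t}_J}$ (because $\chi_{\mathbf{s}_J}=\mathbb{I}_J$), which is a non-trivial Pauli string on $J$ and therefore has trace zero on $J$. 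So $\tr_J(hg)=\tr_J(gh)=0$. Taking $\tr_J$ of the identity and using $\tr_J(h^2)=2^{|J|}A^2$ yields $A^2=(1-\alpha^2)\mathbb{I}_{J^c}$, hence $h^2=(1-\alpha^2)\mathbb{I}$. Substituting back gives $\{h,g\}=0$. Setting $f'= A/\sqrt{1-\alpha^2}$ (for $\alpha<1$) produces a Hermitian operator on $J^c$ with $(f')^2=\mathbb{I}_{J^c}$, so $f'$ is a quantum boolean function, and $h=\sqrt{1-\alpha^2}\,f'\otimes\mathbb{I}_J$; the required decomposition and anticommutation follow. The boundary cases $\alpha=0$ (so $f$ already acts trivially on $J$, and any $g$ anticommuting with $f\otimes\mathbb{I}_J$ works) and $\alpha=1$ (so $h=0$ and $f=g$) are disposed of directly.

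For the converse, suppose $f=\sqrt{1-\alpha^2}\,f'\otimes\mathbb{I}_J + \alpha g$ with $f',g$ quantum boolean, $\{f'\otimes\mathbb{I}_J,g\}=0$, and $\alpha^2=I_J(f)$. A direct computation confirms $f^2=\mathbb{I}$. Decompose $g = g_0\otimes\mathbb{I}_J + g_1$ with $\tr_J(g_1)=0$; then $d_J(f) = \alpha g_1$. The hypothesis $\alpha^2=I_J(f)=\|\alpha g_1\|_2^2 = \alpha^2\|g_1\|_2^2$ combined with $1=\|g\|_2^2=\|g_0\|_2^2+\|g_1\|_2^2$ forces $g_0=0$, so $d_J(f)=\alpha g$. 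Then $\|d_J(f)\|_1=\alpha\|g\|_1=\alpha=\|d_J(f)\|_2$ because $g$ is quantum boolean, giving bad influence.

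The main obstacle is recognising that $\{h,g\}$ is annihilated by $\tr_J$: this Pauli-based observation is what lets the unitarity constraint $f^2=\mathbb{I}$ rigidify $A^2$ into a scalar multiple of the identity on $J^c$ (and then give the anticommutation for free). Without this step, one is stuck with a mere scalar constraint $\|h\|_2^2=1-\alpha^2$, which is insufficient to show $f'$ is quantum boolean.
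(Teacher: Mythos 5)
Your proof is correct and follows essentially the same route as the paper: decompose $f$ into its $J$-trivial part plus $d_J(f)$, use Lemma \ref{lem:12norm} to get $|d_J(f)|=\alpha\mathbb{I}$, and square $f$ to force the anticommutation and $f'^2=\mathbb{I}$. Your partial-trace argument just makes explicit the Pauli-support orthogonality that the paper leaves implicit in its ``this implies'' step, and you additionally supply the converse direction (and flag the $\alpha\in\{0,1\}$ edge cases), which the paper's proof omits.
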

\begin{proof}
Write
\begin{equation}
f = f_{J^c}\otimes \mathbb{I}_J +
\sum_{\substack{\supp(\mathbf{s})\subset J \\
\mathbf{s}\not=\mathbf{0}}} f_{\mathbf{s}}\otimes \chi_\mathbf{s}.
\end{equation}
The derivative of $f$ with respect to $J$ is given by
\begin{equation}
d_J(f) = \sum_{\substack{\supp(\mathbf{s})\subset J \\
\mathbf{s}\not=\mathbf{0}}} f_{\mathbf{s}}\otimes \chi_\mathbf{s}.
\end{equation}
Since $J$ has bad influence we have, according to
Lemma~\ref{lem:12norm}, that
\begin{equation}
|d_J(f)| = \alpha \mathbb{I},
\end{equation}
so that, defining $g = \frac1\alpha d_J(f)$, we have $g^2 =
\mathbb{I}$. Thus we can write $f = f_{J^c}\otimes \mathbb{I}_J +
\alpha g$.

We now square $f$ to find
\begin{equation}
\mathbb{I} = f_{J^c}^2\otimes \mathbb{I}_J + \alpha^2\mathbb{I} +
\alpha\{f_{J^c}\otimes \mathbb{I}_J, g\}.
\end{equation}
This implies that $\{f_{J^c}\otimes \mathbb{I}_J, g\} = 0$. Thus,
defining
\begin{equation}
f' = \frac{1}{\sqrt{1-\alpha^2}}f_{J^c},
\end{equation}
we have $f'^2 = \mathbb{I}$ and the result follows.
\end{proof}

Let $f$ be quantum boolean with Fourier expansion $f = \sum_s \fhats
\chis$. If $[\chis,\chit]=0$ for all $s \neq t$ where $\fhats$ and $\fhatt$ are both non-zero, then we call $f$
{\em commuting}. Similarly, if $\{\chis,\chit\}=0$ for all $s \neq
t$ where $\fhats$ and $\fhatt$ are both non-zero, then we call $f$ {\em anticommuting}. It follows from the
classical KKL theorem that those commuting quantum boolean functions
that can be diagonalised by local unitaries
have a qubit with influence at least $\Omega\left(\frac{\log
n}{n}\right)$. In the remainder of this section, we will show that anticommuting
quantum boolean functions also have an influential qubit. Indeed,
the influence of this qubit must be very high.

\begin{proposition}
Let $f$ be an anticommuting quantum boolean function on $n$ qubits. Then there exists a $j$ such that $I_j(f) \ge \frac{1}{\sqrt{n}}$.
\end{proposition}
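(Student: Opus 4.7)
The plan is to show, via a short Fourier-analytic calculation, that the pairwise anticommutation hypothesis forces every pair of Pauli supports appearing in the Fourier expansion of $f$ to intersect, and that this is enough to conclude $\sum_{j=1}^n I_j(f)^2 \ge 1$. A pigeonhole-style step then gives $\max_j I_j(f)\ge 1/\sqrt{n}$.

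First I would write $f = \sum_{i=1}^m \alpha_i P_i$, where the $P_i = \chi_{s_i}$ are precisely the Paulis in the Fourier support of $f$ and $\alpha_i = \hat{f}_{s_i}$. Since the identity Pauli commutes with every operator, it cannot appear in a mutually anticommuting collection of size at least $2$, so we may assume $|s_i|\ge 1$ for every $i$ (the case $m=1$ is trivial: then $f=\pm P_1$ and $I_j(f)=1$ for any $j\in\supp(P_1)$). Parseval gives $\sum_i \alpha_i^2 = \|f\|_2^2 = 1$.

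The key structural observation is: \emph{for all indices $i,k$, one has $|\supp(s_i)\cap\supp(s_k)|\ge 1$.} When $i=k$ this is immediate from $|s_i|\ge 1$. When $i\ne k$ it holds because any two tensor-product Paulis whose supports are disjoint commute, which would contradict $\{P_i,P_k\}=0$. Using $I_j(f)=\sum_{i:(s_i)_j\ne 0}\alpha_i^2$, this lets me compute
\begin{equation*}
\sum_{j=1}^n I_j(f)^2 \;=\; \sum_{i,k}\alpha_i^2\alpha_k^2\,\bigl|\supp(s_i)\cap\supp(s_k)\bigr| \;\ge\; \sum_{i,k}\alpha_i^2\alpha_k^2 \;=\; \Bigl(\sum_i \alpha_i^2\Bigr)^{\!2} \;=\; 1.
\end{equation*}
Since $\sum_{j=1}^n I_j(f)^2 \le n\,(\max_j I_j(f))^2$, the desired bound $\max_j I_j(f)\ge 1/\sqrt{n}$ follows.

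There is essentially no serious obstacle: the only non-trivial input is the elementary fact that Paulis with disjoint supports commute, and the rest is a one-line second-moment inequality. One could hope to strengthen the bound by using the finer property that anticommuting Paulis must overlap on an \emph{odd} number of coordinates (rather than just a positive number), but that refinement is unnecessary for the $1/\sqrt{n}$ estimate stated in the proposition.
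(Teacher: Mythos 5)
Your proof is correct and follows essentially the same route as the paper's: the same Fourier (Pauli) decomposition, the same key observation that anticommuting stabilizer operators must overlap in support, the same second-moment bound $\sum_j I_j(f)^2 \ge \bigl(\sum_i \alpha_i^2\bigr)^2 = 1$, and the same pigeonhole finish. Your explicit handling of the identity term and the $m=1$ case is a minor tidying of a point the paper leaves implicit, but otherwise the arguments coincide.
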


\begin{proof}
Our approach will be to show that $\sum_{j=1}^n I_j(f)^2 \ge 1$, whence the theorem follows trivially. Write $f = \sum_{i=1}^m w_i f_i$ for some $m$, where $w_i$ is real and $f_i$ is an arbitrary stabilizer operator. Let $S_j$ be the set of indices of the stabilizer operators that act non-trivially on qubit $j$, i.e.\ the set $\{i:d_j(f_i) \neq 0\}$. Then $I_j(f) = \sum_{i \in S_j} w_i^2$, so
\[ \sum_{j=1}^n I_j(f)^2 = \sum_{j=1}^n \left( \sum_{i \in S_j} w_i^2 \right)^2 = \sum_{j=1}^n \sum_{i,k \in S_j} w_i^2 w_k^2. \]
Rearrange the sum as follows.
\[ \sum_{j=1}^n I_j(f)^2 = \sum_{i,k=1}^m w_i^2 w_k^2 |\{j:i\in S_j, k \in S_j\}|. \]
For each pair of stabilizer operators $f_i$, $f_k$ to anticommute, they must both act non-trivially on the same qubit in at least one place. Thus
\[ \sum_{j=1}^n I_j(f)^2 \ge \sum_{i,k=1}^m w_i^2 w_k^2 = \left( \sum_{i=1}^m w_i^2\right)^2 = 1 \]
and the proof is complete.
\end{proof}

This result hints that quantum KKL may be true, as it holds for two ``extremal'' cases (classical boolean functions and anticommuting quantum boolean functions).

\section{Conclusions and conjectures}

We have introduced the concept of a quantum boolean function, and have quantised some results from the classical theory of boolean functions. However, there is still a hoard of interesting results which we have not yet been able to plunder. We list some specific conjectures in this vein below.

\begin{enumerate}
\item {\bf Quantum locality and dictator testing.} We have candidate quantum tests for the properties of locality and being a dictator (Conjectures \ref{con:locality} and \ref{con:dictator}), but have not been able to analyse their probability of success.

\item {\bf General hypercontractivity.} We have proven hypercontractivity of the noise superoperator (Theorem~\ref{thm:hypercontractivity}) only in the case where $1 \le p \le 2 \le q \le \infty$. We conjecture that, as with the classical case, this in fact holds for all $1 \le p \le q \le \infty$.

\item {\bf Every quantum boolean function has an influential variable.} The results
of Section \ref{sec:quantumkkl} prove a quantum generalisation of
the KKL theorem in some special cases. We conjecture that it holds
in general, but, as we argued, a proof of such a theorem appears to
require quite different techniques to the classical case.

\item {\bf Lower bounds on the degree of quantum boolean functions.} A classical result of Nisan and Szegedy \cite{nisan:1994a} states that the degree of any boolean function that depends on $n$ variables must be at least $\log n - O(\log \log n)$. We conjecture that the degree of any quantum boolean function that acts non-trivially on $n$ qubits is also $\Omega(\log n)$. The classical proof does not go through immediately: it relies on the fact that the influence of each variable of a degree $d$ boolean function is at least $1/2^d$, which is not the case for degree $d$ quantum boolean functions (for a counterexample, see (\ref{eq:smallcoeff})).
\end{enumerate}

It is interesting to note that the proofs in the classical theory of boolean functions which go through easily to the quantum case tend to be those based around techniques such as Fourier analysis, whereas proofs based on combinatorial and discrete techniques do not translate easily in general. There are many such combinatorial results which would be interesting to prove or disprove in the quantum regime.

\section*{Acknowledgements}

AM was supported by the EC-FP6-STREP network QICS. TJO was supported
by the University of London central research fund. We'd like to
thank Koenraad Audenaert, Jens Eisert, and Aram Harrow for helpful
conversations, and also Ronald de Wolf and a STOC'09 referee for helpful
comments on a previous version.

\bibliography{qinf}

\end{document}